\newif\ifpreprint
\preprinttrue

\ifpreprint
  \documentclass[sigconf]{acmart}
\else
  \documentclass[sigconf, anonymous]{acmart}
\fi

\AtBeginDocument{%
  }

\usepackage{float}    
\usepackage{placeins} 

\usepackage{algorithm}
\usepackage{algpseudocode}
\usepackage{cleveref}
\crefname{appendix}{Appendix}{Appendices}
\Crefname{appendix}{Appendix}{Appendices}
\usepackage[most]{tcolorbox}
\usepackage{enumitem}

\usepackage{standalone}
\usepackage{pgfplots}
\pgfplotsset{compat=1.18}
\usepgfplotslibrary{fillbetween}
\usepgfplotslibrary{groupplots}
\usetikzlibrary{patterns,fit,backgrounds,arrows.meta}
\usepackage{booktabs}
\usepackage{wasysym}  
\usepackage{multirow}

\tcbset{
  protocolbox/.style={
    colframe=black,
    colback=white,
    fonttitle=\bfseries,
    title={#1},
    boxrule=1pt,
    arc=1mm,
    coltitle=black,
    colbacktitle=gray!30,    
    left=1mm,
    right=1mm,
    top=1mm,
    bottom=1mm,
  }
}

\usepackage{thmtools}
\usepackage{thm-restate}   
\declaretheorem[name=Lemma]{lemma}

\declaretheorem[name=Corollary]{corollary}

\definecolor{cTrusted}{HTML}{1F77B4}      
\definecolor{cUntrusted}{HTML}{FF7F0E}    
\definecolor{cComm}{HTML}{8E8E8E}         
\definecolor{cMatMul}{HTML}{2CA02C}       
\definecolor{cMaskWeight}{HTML}{6FA8DC}   
\definecolor{cMaskAct}{HTML}{C9DAF8}      
\definecolor{cNoise}{HTML}{D62728}        
\definecolor{cBaseline}{HTML}{6AA84F}     

\tikzset{
    trustedfill/.style   ={fill=cTrusted!70, draw=cTrusted!90!black},
    untrustedfill/.style ={fill=cUntrusted!70, draw=cUntrusted!90!black},
    commfill/.style      ={fill=cComm!40, draw=cComm!70!black,
                            postaction={pattern=north east lines,
                                        pattern color=cComm!70!black}},
}
\renewcommand{\vec}{}
\newcommand{\mat}{}
\newcommand{\reals}{\mathbb{R}}

\newcommand{\uniform}[1]{\mathcal{U}\left(#1\right)}
\newcommand{\sparse}[1]{\mathcal{S}\left(#1\right)}
\newcommand{\expect}[1]{\mathbb{E}\left[#1\right]}
\newcommand{\norm}[1]{\left\|#1\right\|}
\newcommand{\normal}[2]{\mathcal{N}\left(#1, #2\right)}
\renewcommand{\exp}[1]{\text{exp}\left(#1\right)}

\newtheorem{example}{Example}[section]

\newcommand{\cli}{\mathcal{C}}
\newcommand{\gpu}{\mathcal{G}}
\newcommand{\fpVec}[1]{\mathbb{F}_{p}^{#1}}

\newcommand{\rngVec}[1]{\mathbb{Z}_{2^{\kappa}}^{#1}}
\newcommand{\rngMat}[2]{\mathbb{Z}_{2^{\kappa}}^{#1 \times #2}}
\newcommand{\rng}{\mathbb{Z}_{2^{\kappa}}}

\newif\ifdraft
\drafttrue
\newcommand{\james}[1]{\ifdraft\textcolor{blue}{[\textbf{JC}: #1]}\fi}

\newif\ifrevisionmarks
\revisionmarkstrue
\definecolor{RevisionColor}{HTML}{0B5FA5}
\newcommand{\revised}[1]{\ifrevisionmarks\textcolor{RevisionColor}{#1}\else#1\fi}
\newenvironment{revisedblock}%
  {\par\ifrevisionmarks\color{RevisionColor}\fi\ignorespaces}%
  {\par}

\newcommand{\mosaicrepourl}{%
  \ifpreprint
    \url{https://github.com/jachiang/mosaic}%
  \else
    \url{https://anonymous.4open.science/r/mosaic-71FE}%
  \fi}

\draftfalse

\ifpreprint\revisionmarksfalse\else\revisionmarkstrue\fi

\setcopyright{acmlicensed} 
\copyrightyear{2018} 
\acmYear{2018} 
\acmDOI{XXXXXXX.XXXXXXX} 
\acmConference[Conference acronym 'XX]{Make sure to enter the correct
  conference title from your rights confirmation email}{June 03--05,
  2018}{Woodstock, NY}  
\acmISBN{978-1-4503-XXXX-X/2018/06}  

\newif\ifshowacmmeta
\ifpreprint\showacmmetafalse\else\showacmmetatrue\fi

\ifshowacmmeta\else
  \renewcommand{\footnotetextcopyrightpermission}[1]{}
\fi

\newif\ifshowconfheader
\ifpreprint\showconfheaderfalse\else\showconfheadertrue\fi

\begin{document}


\title{MOSAIC: Masked Outsourcing of Secure AI Computations}



\author{James Hsin-yu Chiang}
\affiliation{%
  \institution{ETH Zurich}
  \country{Switzerland}
}

\author{Sheila Zingg}
\affiliation{%
  \institution{ETH Zurich}
  \country{Switzerland}
}

\author{Kari Kostiainen}
\affiliation{%
  \institution{ETH Zurich}
  \country{Switzerland}
}

\author{Srdjan Capkun}
\affiliation{%
  \institution{ETH Zurich}
  \country{Switzerland}
}







\renewcommand{\shortauthors}{Chiang et al.}

\begin{abstract}
\begin{revisedblock}
We address the challenge of securely and efficiently outsourcing AI computations from a trusted but computationally weak client to an untrusted but powerful server, in the setting where the client holds both the input and the model, and the server must learn neither. We present MOSAIC, whose core is a novel matrix-multiplication masking protocol that scales to far larger matrices than prior work, enabling the safe outsourcing of modern workloads such as large transformer inference. By introducing small amounts of noise to the multiplication result and thereby relaxing correctness, MOSAIC achieves optimal asymptotic client overhead and concrete runtimes orders of magnitude faster than prior work. Its security reduces to the decisional LWE and LPN assumptions.

Because this noise accumulates across the many
layers of a transformer, a key technical challenge is bounding error growth; MOSAIC addresses this with an error-scaling mechanism based on random Hadamard rotations. On large 70B transformer models, MOSAIC's perplexity is comparable to popular quantization approaches and 
even matches full-precision BF16 inference on HumanEval.

Finally, we present an end-to-end implementation\footnote{Implementation available at
\mosaicrepourl}
showing how ideas like MOSAIC can promise a path towards large-scale confidential AI in modern data centers. Non-confidential inference is already distributed across phase (prefill/decode), layer, and time to maximize utilization of heterogeneous hardware, using RDMA-like networking to move activations, cached KV values, and weights across nodes. MOSAIC enables scaling of confidential compute by keeping 
the trusted computing base (TCB) small and outsourcing the bulk 
of the AI computation to untrusted accelerators.




\end{revisedblock}
\end{abstract}

\begin{CCSXML} 
<ccs2012>
<concept>
<concept_id>10002978.10002979</concept_id>
<concept_desc>Security and privacy~Cryptography</concept_desc>
<concept_significance>500</concept_significance>
</concept>
<concept>
<concept_id>10010147.10010257</concept_id>
<concept_desc>Computing methodologies~Machine learning</concept_desc>
<concept_significance>500</concept_significance>
</concept>
</ccs2012>
\end{CCSXML}
\ccsdesc[500]{Security and privacy~Cryptography}
\ccsdesc[500]{Computing methodologies~Machine learning}

\ifshowacmmeta
\keywords{Confidential AI, Secure outsourced matrix multiplication, Quantization} 
\fi


\maketitle


\ifshowconfheader\else
  \makeatletter
  \fancyhead[LE]{\ACM@linecountL}%
  \fancyhead[RO]{\ACM@linecountR}%
  \makeatother
\fi

\section{Introduction}
We investigate the problem of outsourcing AI computations,
which are dominated by linear matrix multiplications over billions
of model weights.
The large dimensions of model weight matrices have given rise
to highly parallelized GPU accelerators to speed up computations over billions of model weight values. 


Our contributions in this work enable \emph{practical} 
outsourcing of \emph{large} AI computations to \emph{untrusted} accelerators, which are not permitted to learn anything about the AI computation other than
approximate topology of the model architecture. 
Prior state-of-the-art solutions do not scale 
to larger model dimensions \cite{benhamouda2025encrypted,braverman2025practical}, or offer concretely impractical runtimes~\cite{chen2020maliciously,mann2023towards,li2024nimbus,gupta2023sigma,lu2023bumblebee,jiang2018secure,gao2024secure,zimerman2024power}
or no formal security~\cite{li2024translinkguard,xu2024tempo,shen2022soter,wang2025game}. 
\revised{MOSAIC promises a practical pathway towards confidential AI 
computations in modern AI data centers, where a 
standard (non-confidential) inference request cannot 
be practically served by
a single GPU node and is distributed across a heterogeneous
accelerator pool, with each node contributing to different 
model layers (attention vs. MLP) 
and computation phase (prefill vs. decode)~\cite{wu2026dualpath,spheron2026gpu,sun2026multi}.
Here, MOSAIC permits a small, user-dedicated TCB 
(e.g. confidential compute) 
to be scaled by outsourcing the bulk of the 
AI computation to untrusted accelerators.
Alternatively, extending the TCB would imply
secure attestation of the entire accelerator pool and
networking stack.} 

\begin{figure*}[t]
\centering
\resizebox{\textwidth}{!}{\begin{tikzpicture}[
    >=stealth,
    box/.style={draw, rounded corners=2pt, minimum width=3.2cm,
        minimum height=0.35cm, font=\scriptsize, align=center, inner sep=2pt},
    cbox/.style={box, fill=cTrusted!15},
    gbox/.style={box, fill=cUntrusted!15},
    proto/.style={draw, thick, rounded corners=1pt, fill=cComm!10,
        font=\scriptsize, align=center, inner sep=2.5pt},
    arrow/.style={->, thick},
    biarrow/.style={<->, thick},
]

\def\C{-3.1}
\def\G{3.1}
\def\S{0.85}

\node[font=\small\bfseries, align=center] at (0, 0.85)
    {Secure Forward Pass\\(\Cref{sec:secure-forward-pass})};

\node[font=\scriptsize\bfseries] at (\C, 0.2) {Trusted client $\cli$};
\node[font=\scriptsize\bfseries] at (\G, 0.2) {Untrusted GPU $\gpu$};

\draw[gray!40, dashed] (0, 0.0) -- (0, -0.3-6*\S-0.15);

\node[cbox] (input) at (\C, -0.3) {Input tokens};
\node[cbox] (norm1) at (\C, -0.3-\S) {RMSNorm};
\node[cbox] (attn)  at (\C, -0.3-2*\S) {Attention (softmax, scores)};
\node[cbox] (norm2) at (\C, -0.3-3*\S) {Residual + RMSNorm};
\node[cbox] (silu)  at (\C, -0.3-4*\S) {SiLU gating};
\node[cbox] (res)   at (\C, -0.3-5*\S) {Residual};
\node[cbox] (out)   at (\C, -0.3-6*\S) {\textit{Next layer}};

\draw[arrow, gray] (input) -- (norm1);
\draw[arrow, gray] (res) -- (out);

\node[gbox] (qkv)   at (\G, -0.3-1.5*\S) {$W_Q, W_K, W_V$ MatMul};
\node[gbox] (wo)     at (\G, -0.3-2.5*\S) {$W_O$ MatMul};
\node[gbox] (mlpup)  at (\G, -0.3-3.5*\S) {$W_\text{gate}, W_\text{up}$ MatMul};
\node[gbox] (mlpdn)  at (\G, -0.3-4.5*\S) {$W_\text{down}$ MatMul};

\node[proto] (p1) at (0, -0.3-1.5*\S) {$\Pi_{\textsf{Sec-Linear}}$};
\node[proto] (p2) at (0, -0.3-2.5*\S) {$\Pi_{\textsf{Sec-Linear}}$};
\node[proto] (p3) at (0, -0.3-3.5*\S) {$\Pi_{\textsf{Sec-Linear}}$};
\node[proto] (p4) at (0, -0.3-4.5*\S) {$\Pi_{\textsf{Sec-Linear}}$};

\draw[biarrow] (p1.east) -- (qkv.west);
\draw[biarrow] (p2.east) -- (wo.west);
\draw[biarrow] (p3.east) -- (mlpup.west);
\draw[biarrow] (p4.east) -- (mlpdn.west);

\draw[arrow] (norm1.east) -- (p1.west);
\draw[arrow] (p1.west) -- (attn.east);
\draw[arrow] (attn.east) -- (p2.west);
\draw[arrow] (p2.west) -- (norm2.east);
\draw[arrow] (norm2.east) -- (p3.west);
\draw[arrow] (p3.west) -- (silu.east);
\draw[arrow] (silu.east) -- (p4.west);
\draw[arrow] (p4.west) -- (res.east);

\draw[decorate, decoration={brace, amplitude=4pt, mirror},
    thick, gray] (-5.0, -0.3-0.7*\S) -- (-5.0, -0.3-5.3*\S)
    node[midway, left=6pt, font=\tiny, gray, align=center] {$\times L$\\layers};

\def\BX{7.0}  
\def\BY{-0.3-1.75*\S}  
\def\BS{0.55} 

\node[cbox, fill=cTrusted!10, minimum width=2.6cm] (b1) at (\BX, \BY)
    {$\mat{X}$ (BF16)};
\node[cbox, fill=cTrusted!10, minimum width=2.6cm] (b2) at (\BX, \BY-\BS)
    {Rotate $\mat{H}\mat{X}$,\; cast to 16-bit fixed-pt};
\node[proto, minimum width=2.6cm] (b3) at (\BX, \BY-2*\BS)
    {$\Pi_{\textsf{Sec-Approx-MatMul}}$};
\node[cbox, fill=cTrusted!10, minimum width=2.6cm] (b4) at (\BX, \BY-3*\BS)
    {Cast to BF16,\; add bias $\vec{b}$};
\node[cbox, fill=cTrusted!10, minimum width=2.6cm] (b5) at (\BX, \BY-4*\BS)
    {$\mat{Y} \approx \mat{W}\mat{X} + \vec{b}$};

\begin{scope}[on background layer]
\node[draw, thick, rounded corners=2pt, fill=cComm!8,
      fit=(b1)(b2)(b3)(b4)(b5), inner sep=5pt,
      label={[font=\small\bfseries, align=center]above:%
          Secure Linear Layer\\(\Cref{sec:secure-forward-pass})}]
      (blowup) {};
\end{scope}

\draw[arrow, gray] (b1) -- (b2);
\draw[arrow, gray] (b2) -- (b3);
\draw[arrow, gray] (b3) -- (b4);
\draw[arrow, gray] (b4) -- (b5);

\draw[dashed, gray!70, -{Stealth[scale=1.5]}] (p1.north east)
    .. controls +(30:1.8cm) and +(150:1.8cm) ..
    (blowup.north west);

\def\AX{12.0}
\def\AY{-0.3-1.4*\S}
\def\AS{0.95}

\tikzset{
  mglyph/.style={draw, font=\tiny, inner sep=0pt, align=center},
}

\begin{scope}[shift={(\AX, \AY)}]
  \node[font=\tiny, align=left, anchor=west] (rw0) at (-2.4, 0) {Client masks\\weights:};
  \node[mglyph, minimum width=0.50cm, minimum height=0.50cm,
        fill=cTrusted!15] (rW) at (-0.95, 0) {$W$};
  \node[font=\tiny] at (-0.45, 0) {$+$};
  \node[mglyph, minimum width=0.18cm, minimum height=0.50cm,
        fill=cMatMul!35] (rL) at (-0.05, 0) {};
  \node[mglyph, minimum width=0.50cm, minimum height=0.18cm,
        fill=cMatMul!35] (rM) at (0.40, 0) {};
  \node[font=\tiny] at (0.85, 0) {$+$};
  \node[mglyph, densely dotted, minimum width=0.50cm, minimum height=0.50cm,
        fill=cNoise!20] (rE) at (1.30, 0) {$E_w$};
\end{scope}

\begin{scope}[shift={(\AX, \AY-\AS)}]
  \node[font=\tiny, align=left, anchor=west] (rx0) at (-2.4, 0) {Client masks\\activations:};
  \node[mglyph, minimum width=0.50cm, minimum height=0.50cm,
        fill=cTrusted!15] (xX) at (-0.95, 0) {$X$};
  \node[font=\tiny] at (-0.45, 0) {$+$};
  \node[mglyph, minimum width=0.18cm, minimum height=0.50cm,
        fill=cMatMul!35] (xN) at (-0.05, 0) {};
  \node[mglyph, minimum width=0.50cm, minimum height=0.18cm,
        fill=cMatMul!35] (xR) at (0.40, 0) {};
  \node[font=\tiny] at (0.85, 0) {$+$};
  \node[mglyph, densely dotted, minimum width=0.50cm, minimum height=0.50cm,
        fill=cNoise!20] (xE) at (1.30, 0) {$E_x$};
\end{scope}

\node[gbox, minimum width=2.4cm, fill=cUntrusted!20] (cgpu)
    at (\AX, \AY-2*\AS) {GPU: $Y' = W'X'$};
\node[cbox, minimum width=2.4cm, fill=cTrusted!10] (cclient)
    at (\AX, \AY-3*\AS)
    {Client unmasks:\\$Y \approx WX$};

\draw[arrow, gray] (cgpu) -- (cclient);
\draw[arrow, gray] (\AX, \AY-0.30) -- (\AX, \AY-\AS+0.30);
\draw[arrow, gray] (\AX, \AY-\AS-0.30) -- (cgpu.north);

\begin{scope}[on background layer]
\node[draw, thick, rounded corners=2pt, fill=cComm!8,
      fit=(rw0)(rW)(rE)(rx0)(xX)(xE)(cgpu)(cclient),
      inner xsep=4pt, inner ysep=5pt,
      label={[font=\small\bfseries, align=center]above:%
          Secure Approx. MatMul\\(\Cref{sec:lwe-matmul})}]
      (matmulblowup) {};
\end{scope}

\draw[dashed, gray!70, -{Stealth[scale=1.5]}] (b3.east) -- (matmulblowup.west |- b3.east);

\end{tikzpicture}}
\caption{Secure, confidential forward pass for a single transformer layer.
         The trusted client~$\cli$ (left) performs all non-linear operations
         (attention scores, activations, normalization) locally.
         All weight matrix multiplications are outsourced to the
         untrusted GPU~$\gpu$ (right) via $\Pi_{\textsf{Sec-Linear}}$,
         which exchanges only masked activations and results at runtime.
         }
\label{fig:intro-outsourcing}
\end{figure*}

\begin{table*}
  \centering
  \resizebox{\textwidth}{!}{%
  \begin{tabular}{lcccccccc}
    \hline
    & Secure      & Private    & Client layer          & Practical    & \multicolumn{2}{c}{Security}             & Computational                               & \\
    \cline{6-7}
    & outsourcing & inference  & complexity            & runtimes     & Input      & Model                       & error                                       & Perplexity (70B) \\
    \hline
    FHE \cite{jiang2018secure,gao2024secure,zimerman2024power}            & \checkmark & \checkmark & $O((m+n)l)$           & \Circle      & \checkmark & \checkmark                  & fixed-point                                 & \multirow{5}{*}{1.827$^{4}$} \\
    MPC/FSS~\cite{chen2020maliciously,mann2023towards,li2024nimbus,gupta2023sigma,lu2023bumblebee}        & $\times$   & \checkmark & $O(mnl)$           & \Circle      & \checkmark & \checkmark                  & fixed-point                                 & \\
    Slalom \cite{tramer2018slalom} & \checkmark & $\times$   & $O((m+n)l)$           & \LEFTcircle  & \checkmark & $\times$                    & fixed-point                                 & \\
    Obfuscation \cite{li2024translinkguard,xu2024tempo,shen2022soter,wang2025game}    & \checkmark & $\times$   & $O((m+n)l)$           & \CIRCLE      & \multicolumn{2}{c}{heuristic / broken$^{5}$} & fixed-point                                 & \\
    Trapdoor Matrices \cite{braverman2025practical}      & \checkmark & $\times$   & $O(m n^{\epsilon} l)$ & \LEFTcircle  & \checkmark & \checkmark                  & fixed-point                                 & \\
    \textbf{MOSAIC (our work)} & \checkmark & $\times$ & $O(\revised{r}(m+n)l)^{1}$ & \CIRCLE & \checkmark & \checkmark & fixed-point $+\; e^{2}$ & \textbf{1.834$^{3}$} \\
    \hline
  \end{tabular}%
  }
  \caption{Comparison against related work on confidential outsourcing
  of linear computation. Client complexity is per multiplication
  of model weights $(m \times n)$ and activation $(n \times l)$.
  Perplexity is reported for LLaMA-3-70B 
  on WikiText-2.
  \textsuperscript{1}\revised{Parameter $r$ is fixed by the chosen security level and independent of model dimensions $m,n$ and batch size $l$.}
  \textsuperscript{2}$e$ denotes a low-norm Gaussian induced by our protocol.
  \textsuperscript{3}Our protocol perplexity is considerably lower (better) than standard NF4
  4-bit (2.247) quantization baselines and  very close to the BF16 reference (1.821).
  \textsuperscript{4}Idealized perplexity: prior-work perplexity entries assume our proposed random-rotation preconditioning
  that those works do not specify but which they could in principle adopt.
  \textsuperscript{5}Obfuscation-based security is heuristic; we break security of
  ArrowCloak~\cite{wang2025game} (USENIX'25) 
  in~\Cref{sec:obfuscation-security}. 
  }
  \label{tab:intro-comparison}
\end{table*}

\paragraph{Layer-by-layer outsourcing}
We consider a path towards practical 
confidential AI computation that considers the
total computational cost of a \emph{forward-pass} across a transformer-style
model, and focuses on outsourcing the dominating, 
\emph{linear} computations
at each model layer to an untrusted accelerator $\gpu$. 
%
In this approach, the 
\emph{non-linear} computation is performed by the trusted client $\cli$, inducing a slalom-like~\cite{tramer2018slalom} execution between a trusted client $\cli$ and untrusted server (accelerator) $\gpu$, 
as the forward-pass computation of a 
transformer-style model over hundreds of 
alternating \emph{linear} and \emph{non-linear} layers,
with expensive linear layers outsourced to computationally powerful $\gpu$ (as shown in \Cref{fig:intro-outsourcing}).

Linear computation consists of
matrix multiplications of
model weights $W$ ($m \times n$) and
activations $X$ ($n \times l$).
Given that repeated linear matrix arithmetic dominates 
computation during each forward-pass, 
there has been recent effort
to develop cryptographic protocols to 
securely outsource matrix-matrix multiplications~\cite{benhamouda2025encrypted,braverman2025practical}.
However, such solutions have significant limitations that prevent their deployment in practice.
\begin{enumerate}[leftmargin=*]
    \item State-of-the-art in secure outsourcing of
    matrix multiplication of $W$ ($m \times n$) and $X$ ($n \times l$) induces $O(mn^{\epsilon}l)$ overhead for the client~\cite{braverman2025practical}, where $\epsilon$
    must be parameterized for given $n$ to satisfy LPN security.
    This prevents its application to larger modern models,
    where inner dimensions are very large (e.g. an MLP projection of dim 29\,568 in Qwen2.5-72B).
    An optimal protocol achieves $O((m+n)l)$ trusted client complexity,
    and practical concrete runtimes.
    \item \revised{Cryptographic proposals generally operate in the discrete, integer ring domain. This requires fixed-point translation to emulate computation that is natively floating-point. The limited dynamic range of standard fixed-point arithmetic has a catastrophic effect on large, modern LLM architecture (\Cref{fig:layer-cosine-sim-qwen72b-nohad}).
    Because no prior work on cryptographic outsourcing of AI computation
    scales to modern (70B) models,
    this failure mode has consequently gone untreated.}
    \item Prior work does not investigate practical applications amenable to AI outsourcing that relies on layer-by-layer communication. 
\end{enumerate}
Our contributions address these 
open challenges as follows.


\paragraph{1) Inner protocol: Outsourcing of approximate MatMul} 
We propose a cryptographic protocol which permits the trusted client to efficiently outsource large matrix multiplications to an untrusted 
accelerator $\gpu$. Our simplified protocol (described in \Cref{sec:lwe-matmul-core} and illustrated in~\Cref{fig:intro-outsourcing}) masks both model weight and input matrices by padding with
\emph{low-rank} uniform matrices and a \emph{full-rank}, 
small-norm gaussian matrix. 
The security of our full protocol (\Cref{sec:lwe-lpn-matmul})
can be reduced to LWE
and LPN decisional hardness assumptions. 
Crucially, the matrix multiplication of the encrypted model weights and encrypted inputs can be recovered by the client with a small approximation error, which only induces a minor compromise in model accuracy in practice (\Cref{sec:model-accuracy}).

The underlying LWE security and noisy recovery
is critical in the scalability of this approach. The low-rank of the uniform padding matrices remains constant across all matrix dimensions
polynomial in the security parameter $\lambda$. 
For an inner rank $r$, fixed solely by the security parameter and gaussian noise norm, 
the optimal client overhead is thus $O(r(m+n)l) \approx O((m+n)l)$. \revised{We report runtimes showing 
orders of magnitude runtime improvement compared to the most efficient prior work~\cite{braverman2025practical},
and defer the discussion to the related work in \Cref{sec:related}.} 



\begin{figure}[H]
\centering
\resizebox{\columnwidth}{!}{%
\begin{tabular}{|l|l|l|l|l|l|}
\hline
\multicolumn{1}{|c|}{\textbf{Matrix (n × n)}} & \multicolumn{1}{c|}{\textbf{Protocol}} & \multicolumn{1}{c|}{\textbf{Server (ms)}} & \multicolumn{1}{c|}{\textbf{Client (ms)}} & \multicolumn{1}{c|}{\textbf{Total (ms)}} & \multicolumn{1}{c|}{\textbf{Client / Total}} \\ \hline
36864 × 36864                                 & MOSAIC                                 & 8.57                                      & 0.29                                      & 8.86                                     & 3.3\%                                        \\ \hline
16384 × 16384                                 & MOSAIC                                 & 2.04                                      & 0.16                                      & 2.20                                     & 7.3\%                                        \\
                                              & \cite{braverman2025practical}                                & 1270                                      & 47.5                                      & 1318                                   & 3.6\%                                        \\ \hline
8192 × 8192                                   & MOSAIC                                 & 0.90                                      & 0.12                                      & 1.02                                     & 11.8\%                                       \\
                                              & \cite{braverman2025practical}                                & 270.0                                     & 14.5                                      & 284.5                                    & 5.1\%                                        \\
                                              & \cite{benhamouda2025encrypted}$^{\dagger}$                    & –                                         & –                                         & $\sim$500                                & –                                          \\ \hline
4096 × 4096                                   & MOSAIC                                 & 0.48                                      & 0.12                                      & 0.60                                     & 20.0\%                                       \\
                                              & \cite{braverman2025practical}                                & 57.6                                      & 6.05                                      & 63.65                                    & 9.5\%                                        \\
                                              & \cite{benhamouda2025encrypted}$^{\ddagger}$                   & –                                         & –                                         & $\sim$130                                & –                                          \\ \hline
\end{tabular}%
}
\caption{\revised{Wall-clock runtime of a single secure matrix-vector multiplication
for MOSAIC versus prior state-of-the-art~\cite{braverman2025practical,benhamouda2025encrypted}.
Entries for~\cite{benhamouda2025encrypted} are 
placed in the nearest row by dimension:
$^{\dagger}$~$8192 \times 10000$ and $^{\ddagger}$~$2048 \times 10000$.}}
\label{fig:matmul-comparison}
\end{figure}

\paragraph{2) Secure linear layer and noise mitigation}
From our secure approximate MatMul protocol,
we construct a full linear layer protocol which 
is applied as a drop-in-replacement for the
General Matrix Multiply (GEMM) or $\alpha WX + \beta C$. 
We highlight two challenges.

Our secure linear layer ($\Pi_{\textsf{Sec-linear}}$),
shown in ~\Cref{fig:intro-outsourcing},
introduces both fixed-point quantization (16-bit)
and protocol noise ($WE_{x} + E_{w}X + E_{w}E_{x}$).
We note that the fixed-point quantization error is inherent
to all cryptographic protocols~\cite{benhamouda2025encrypted,braverman2025practical,wang2025game}
that securely outsource linear computation. 
However, fixed-point impact on model accuracy has not been evaluated in prior, related cryptographic techniques.
Let the private inputs $X$ and model weights ($W$)
be cast to 16-bit integers (with appropriate fixed-point scales) and $WX$ aggregated over a 32-bit integer ring.
To mitigate the quantization error, we propose the 
application of an efficient, random Hadamard rotation~\cite{ashkboos2024quarot} to activation and weights matrices.
With a series of error accumulation studies
on 70B LLM models, 
we show that even with our protocol noise, the accuracy 
of the internal activations across all internal layers is close to full precision (BF16) and can exceed common NF4 and INT8 quantization.
Further, we report 
accuracy studies (Perplexity, HumanEval) with our protocol
to demonstrate that our (noise) parameter range 
required for 140-bit security is practical
even for models with known quantization challenges (e.g., LLaMA3-70B~\cite{qin2024uniqueness}).

\paragraph{3) Case studies for layer-by-layer outsourcing.} 
We evaluate two application scenarios for our solution.
In our primary use-case in~\Cref{sec:conf-ai-casestudies}, 
we report on prefill and decoding runtimes 
in a \emph{data-center} setting, 
where a trusted computing base is extended with untrusted GPUs, connected with fast interconnects. 
\revised{
Here, we highlight two important ongoing trends in AI. 
Firstly, the leading open and closed sourced
frontier models now exceed 1 trillion parameters; 
larger model dimensions permit ever more efficient 
outsourcing of AI computations in MOSAIC (\Cref{fig:client-overhead,fig:theoretical-overhead}),
which induces a client overhead that scales \emph{optimally} with
larger matrices, unlike prior work. 
Secondly, whilst MOSAIC
requires layer-by-layer communication of masked activations,
a standard, non-confidential inference in practice
already induces layer-by-layer communication in today's AI data centers, as a single inference task for frontier models is no longer feasible on a single cluster,
necessitating fast remote direct memory access (RDMA)
networking across the data center~\cite{wu2026dualpath,spheron2026gpu}. 
Solutions such as MOSAIC can leverage existing ultra-low latency interconnect technology for scalable, confidential AI computation.
Indeed our experiments provide evidence that modern
GPU-to-GPU interconnects are not the bottle-neck for MOSAIC
and that outsourcing AI computations in the data-center is efficient in practice (\Cref{fig:datacenter-runtime-decode-postccs}). } 

In a secondary use-case illustrated in~\Cref{sec:application-remote}, 
we implement a remote decisional inference
application, where the expensive AI computation is outsourced to 
the cloud over the public internet.
In this application, only a single forward pass may be sufficient to produce a complicated decision,
limiting the number of round-trips. 
We show that such a decision could be produced in under 5 seconds
for large 70B models.


\begin{revisedblock}
\paragraph{Current hardware limitations.} 
Cryptographic protocols such as MOSAIC operate
over integer rings, necessitating 32-bit integer arithmetic,
which is not natively supported in the fastest cores of modern AI accelerators.
We implement an emulation thereof over 8-bit Tensor cores,
which results in a ${\sim}3\times$ (decode) to $5$--$11\times$ (prefill)
slow-down versus a baseline, non-confidential inference on 70B models.
We think much of this gap can be explained by the emulation
overhead, in which each 32-bit integer MatMul 
is realized with 10x underlying 8-bit integer 
MatMul launches. We hope that works such as MOSAIC can
inspire AI hardware manufacturers to consider native 32-bit
integer support in future accelerator designs.
\end{revisedblock}

\section{Related work}
\label{sec:related}
We outline three categories of related work.
(1) \emph{Private inference} refers to the task
of computing inference over client's private inputs and server's private
model. 
(2) \emph{Secure outsourcing of matrix multiplications} assumes a client with both
inputs and model weights that wishes to offload
the computation to an untrusted server. (3) \emph{Model obfuscation} refers to the task
of keeping proprietary model weights private from an untrusted server or accelerator. 
\Cref{tab:intro-comparison} summarizes our discussion and comparison to related work.

\paragraph{Private inference.}
Private inference is typically realized with cryptographic techniques
such as secure multi-party computation (MPC) or 
fully homomorphic encryption (FHE).
MPC techniques focus on secret-sharing (SS)~\cite{chen2020maliciously,mann2023towards,li2024nimbus}
and functional secret-sharing (FSS)~\cite{gupta2023sigma,lu2023bumblebee}.
The latter still requires 40s per autoregressive decoding step for a small 13B transformer model, thus not enabling practical runtimes for most applications. Our proposal
demonstrates autoregressive decoding runtimes
below 1s for a 70B model (\Cref{fig:datacenter-runtime-decode-postccs}).
Approaches with fully homomorphic encryption~\cite{jiang2018secure,gao2024secure,zimerman2024power} employ a powerful, general cryptographic primitive
that incurs a very high computational overhead
that is multiple orders of magnitude slower than 
dedicated matrix multiplication protocols.
Gazelle proposes a hybrid between MPC and FHE~\cite{juvekar2018gazelle}.
We note that FHE schemes like CKKS~\cite{cheon2017homomorphic} exist
which enable approximate arithmetic. 
These are distinctly different than our approximate secure MatMul protocol, 
which does not incur the heavy overhead of FHE.

\paragraph{Secure outsourcing of matrix multiplications.}
Slalom~\cite{tramer2018slalom} 
outsources the computationally expensive matrix multiplication operations layer-by-layer 
to a remote GPU, an outsourcing pattern adopted 
by this work as well.
However, it does not provide
privacy for the model weights and
requires expensive pre-processing material
consumed with each inference run. Our work protects both the model and client inputs.

A recent work~\cite{benhamouda2025encrypted} introduces a matrix encryption technique
from pairs of secret dual codes, whose orthogonality cancels the dominant masking cross-term during recovery. The construction focuses on matrix-vector multiplication
and \revised{operates exclusively over prime fields}. 
This approach can be efficient for models with small dimensions \revised{but (comparable) runtimes are superseded by~\cite{braverman2025practical}.}
%
The authors of~\cite{braverman2025practical} introduce a matrix outsourcing
technique (trapdoor matrices) 
that relies on recursive masking with both 
low-rank, dense and sparse full-rank masking elements - the
security of their scheme reduces to decisional LPN.
The trusted, outsourcing client incurs a $O(mn^{\epsilon}l)$ cost
and scales poorly for concrete $\epsilon$ and 
larger matrix dimensions required in modern LLM models.
Moreover, the paper introduces a conjecture that LPN security holds
for small $\epsilon$ in order for concrete efficiency with large 
matrix dimension $n$, 
and no concrete protocol parameters are provided 
to support this in practice.
\revised{Still, \cite{braverman2025practical} is the closest
to our work in concrete performance and technique};
their recursive masking technique permits
an elegant trade-off between memory and online computation,
which we show can be adapted as an optimization to 
our approximate matrix multiplication scheme (\cref{sec:lwe-lpn-matmul}) whilst retaining optimal client overhead.

Our key cryptographic contribution is to introduce a secure approximate matrix multiplication scheme with optimal computational complexity for the client  ($O((m+n)l)$) and the untrusted accelerator ($O(mnl)$), providing accuracy comparable to full precision in many practical AI computation tasks.
\revised{In~\Cref{fig:matmul-comparison},
we highlight client and server run-times in MOSAIC 
that are 
orders of magnitude faster than those reported by~\cite{braverman2025practical},
which in turn improves on all prior work for matrix dimension
$n \geq 2048$.
Note that~\cite{braverman2025practical} only reports CPU runtimes and provides no public protocol parameters for reproduction. Client/Total walltime ratios are skewed in MOSAIC by slower Nvidia CUDA core implementation of the client, whilst the server is implemented on the faster Nvidia Tensor cores (See client vs. server implementations in~\Cref{sec:conf-ai-casestudies}).
We show the theoretical 
client/total computation complexity ratio 
in~\cref{fig:theoretical-overhead} for a large range
of matrix dimensions.} 


\paragraph{Model obfuscation.} 
A line of work \cite{li2024translinkguard,xu2024tempo,shen2022soter,wang2025game} 
intends to shield private LLM models
from users, by obfuscating weights and hosting
de-obfuscation material inside the user's trusted
execution environment (TEE). These works use 
efficient scaling
and permutation techniques to obfuscate the weights, 
but such approaches lack security guarantees;
\cite{wang2025game} introduces a deobfuscation attack
which exploits similarity between pre-trained open-source
models and fine-tuned, private weights. Further,
the authors introduce an improved model weights masking
scheme called ArrowCloak, 
for which they claim a reduction to LWE hardness.
However, \emph{we show that the security of their reduction 
is broken} in \Cref{sec:obfuscation-security}.
Whilst such techniques could in principle
be related to matrix multiplication outsourcing, 
they rely on heuristic security 
arguments.



\section{Background}
\label{sec:background}

\paragraph{Learning with Errors (LWE) and Learning Parity with Noise (LPN)}

The security of our solution reduces to two standard post-quantum
distinguishing assumptions. \emph{Decisional LWE}~\cite{regev2009lattices}:
distinguish samples $(\vec{a}, \langle\vec{a},\vec{s}\rangle + e)$ from
uniform, where $\vec{a}\in R^n$ is uniform, $\vec{s}\in R^n$ secret, and
$e$ is small Gaussian noise; hardness is conjectured for polynomially
many samples. \emph{Decisional LPN}~\cite{alekhnovich2003more}: the same
shape but $e$ is sparse uniform with noise rate $\mu$ above a secret-rank and security level determined threshold.

In this work, we mask weights and activations with a low-rank dense term plus small Gaussian noise (LWE, \Cref{sec:lwe-matmul-core}), then \emph{nest} the
masking by recursively replacing the dense term with an LPN instance of
strictly smaller rank (\Cref{sec:lwe-lpn-matmul}).

\paragraph{Transformer architecture}

A decoder-only transformer maps tokens to logits via an embedding, $L$
\emph{transformer blocks} (or \emph{model layers}), and a final
language-model head. Each model layer follows the residual + pre-norm pattern
\begin{equation}
\label{eq:transformer-layer}
    \vec{h}_\ell
    = \vec{h}_{\ell-1}
      + \mathrm{MLP}\!\left(
          \vec{h}_{\ell-1}
          + \mathrm{Attn}(\mathrm{RMSNorm}(\vec{h}_{\ell-1}))
        \right),
\end{equation}
combining attention and a position-wise MLP, each preceded by 
$\mathrm{RMSNorm}(\vec{h}) = \vec{h}/(\norm{\vec{h}}_2/\sqrt{d}) \odot
\vec{g}$, which projects $\vec{h}$ onto a sphere of radius
$\norm{\vec{g}}_2$, collapsing any radial disagreement.

Attention projects $X\in\mathbb{R}^{s\times d_\text{model}}$
into $h$ query/key/value heads via $\mat{W}_Q,\mat{W}_K,\mat{W}_V$,
and computes per head $\mathrm{softmax}(Q_i K_i^\top/\sqrt{d_k})V_i$, 
then re-projects through $\mat{W}_O$. 
The position-wise MLP is a SiLU-gated
two-layer feed-forward
$\mat{W}_\text{down}(\mathrm{SiLU}(\mat{W}_\text{gate}X) \odot
\mat{W}_\text{up}X)$ with $d_\text{ff} > d_\text{model}$.

The seven dense projections ($\mat{W}_Q,\mat{W}_K,\mat{W}_V,
\mat{W}_O,\mat{W}_\text{gate},\mat{W}_\text{up},\mat{W}_\text{down}$)
dominate compute and memory; softmax, RMSNorm, and SiLU are cheap
elementwise ops. Our protocol outsources only those seven dense matmuls,
leaving non-linear and per-head work on the trusted client.

\paragraph{GPU compute and quantisation}

Modern NVIDIA GPUs expose two compute paths: \emph{CUDA cores}
(general purpose cores) and \emph{tensor cores}
(matrix-multiply-accumulate units in
FP16/BF16/INT8/FP8 with FP32 or INT32 accumulation). Tensor-core
throughput exceeds CUDA-core matmul by an order of magnitude;
LLM inference is consequently bottlenecked by tensor-core matmul and
memory bandwidth.
\emph{Quantisation} casts weights and activations from BF16 to lower
precision to shrink the model and accelerate compute; we use NF4,
INT8 (and BF16) model accuracy baselines in \Cref{sec:error-propagation}.

\begin{figure*}
\centering
\begin{tikzpicture}[
    op/.style={font=\large, inner sep=2pt},
    dlabel/.style={font=\tiny, gray},
]

\node[draw, minimum width=1.4cm, minimum height=1.4cm, fill=blue!15,
    font=\small, align=center, inner sep=0pt] (W) at (0, 0) {$\mat{W}$};
\node[dlabel, below=1pt] at (W.south) {$m \times n$};

\node[op] at (1.05, 0) {$+$};

\node[draw, minimum width=0.45cm, minimum height=1.4cm, fill=green!20,
    font=\small, align=center, inner sep=0pt] (L) at (1.625, 0) {$\mat{L}$};
\node[dlabel, below=1pt] at (L.south) {$m \times r$};

\node[draw, minimum width=1.4cm, minimum height=0.45cm, fill=green!20,
    font=\small, align=center, inner sep=0pt] (M) at (2.7, 0) {$\mat{M}$};
\node[dlabel, below=1pt] at (M.south) {$r \times n$};

\node[op] at (3.75, 0) {$+$};

\node[draw, densely dotted, minimum width=1.4cm, minimum height=1.4cm, fill=red!10,
    font=\small, align=center, inner sep=0pt] (Ew) at (4.8, 0) {$\mat{E}_{w}$};
\node[dlabel, below=1pt] at (Ew.south) {$m \times n$};

\node[op] at (5.85, 0) {$=$};

\node[draw, minimum width=1.4cm, minimum height=1.4cm, fill=gray!20,
    font=\small, align=center, inner sep=0pt] (Wp) at (6.9, 0) {$\mat{W}'$};
\node[dlabel, below=1pt] at (Wp.south) {$m \times n$};

\end{tikzpicture}
\caption{Structure of the masking scheme, 
        consisting of private weights matrix $W$,
        left low-rank matrix $L$, middle low-rank (r)
        matrix $M$
        and full-rank gaussian $E_{w}$.
         The activation matrix $\mat{X}$ is masked analogously.}
\label{fig:masking-structure}
\end{figure*}

\begin{figure*}
\centering
\begin{tikzpicture}[
    op/.style={font=\large, inner sep=2pt},
    dlabel/.style={font=\tiny, gray},
    mbox/.style={draw, font=\scriptsize, align=center, inner sep=1pt},
    annot/.style={font=\tiny, align=center},
]


\node[mbox, fill=gray!20, minimum width=1.4cm, minimum height=1.4cm] (WpXp) at (0, 0.3) {$\mat{W}'\mat{X}'$};
\node[dlabel, below=0pt] at (WpXp.south) {$m \times l$};

\node[op] at (1.05, 0.3) {$=$};

\node[mbox, fill=blue!15, minimum width=1.4cm, minimum height=1.4cm] (ywx) at (2.10, 0.3) {$\mat{W}\mat{X}$};
\node[dlabel, below=0pt] at (ywx.south) {$m \times l$};

\node[op] at (3.15, 0.3) {$+$};

\node[mbox, fill=green!20, minimum width=0.45cm, minimum height=1.4cm] (L1) at (3.625, 0.3) {$\mat{L}$};
\node[dlabel, below=0pt] at (L1.south) {$m\!\times\!r$};

\node[mbox, fill=green!20, minimum width=1.4cm, minimum height=0.45cm] (MXE) at (4.70, 0.3) {\tiny$\mat{M}(\mat{X}\!+\!\mat{E}_x)$};
\node[dlabel, below=4pt] at (MXE.south) {$r \times l$};

\node[op] at (5.75, 0.3) {$+$};

\node[mbox, fill=green!20, minimum width=0.45cm, minimum height=1.4cm] (WN) at (6.225, 0.3) {\makebox[0pt]{\tiny$\mat{W}'\!\mat{N}$}};
\node[dlabel, below=0pt] at (WN.south) {$m\!\times\!r$};

\node[mbox, fill=green!20, minimum width=1.4cm, minimum height=0.45cm] (R1) at (7.30, 0.3) {$\mat{R}$};
\node[dlabel, below=4pt] at (R1.south) {$r \times l$};

\node[op] at (8.35, 0.3) {$+$};

\node[mbox, fill=red!10, densely dotted, minimum width=1.4cm, minimum height=1.4cm, text width=1.2cm, align=center] (err) at (9.40, 0.3) {\tiny$\mat{E}_{w}\mat{X}$ $+$ $\mat{W}\mat{E}_x$ $+$ $\mat{E}_{w}\mat{E}_{x}$};
\node[dlabel, below=0pt] at (err.south) {$m \times l$};

\end{tikzpicture}
\caption{Unmasking the outsourced matrix product.
         The GPU returns $\mat{W}'\mat{X}'$, which expands into
         the desired product $\mat{W}\mat{X}$ (blue),
         low-rank correction terms (green), and
         small-norm noise (red).
         The client efficiently removes the green terms
         in $O((m+n)l)$ using precomputed material,
         recovering $\mat{W}\mat{X}$ up to a small approximation error.}
\label{fig:unmasking-visual}
\end{figure*}

\section{Secure approximate matrix multiplication}
\label{sec:lwe-matmul}

A key task in our confidential AI computation is the outsourcing 
of linear matrix multiplications in the floating-point domain. 
We first focus on the outsourcing of matrix multiplications in the 
the \emph{integer ring} domain of chosen bit-width,
in which our cryptographic protocol operates.
Our solution \emph{approximately}
outsources a matrix multiplication $WX$
over the \emph{integer ring} domain of chosen bit-width - 
where $\mat{W} \in \rngMat{m}{n}$ is a quantized 
model weight matrix and $\mat{X} \in \rngMat{n}{l}$ an activation matrix - to an
untrusted accelerator $\gpu$ while hiding both $\mat{W}$ and $\mat{X}$.
We present an asymptotically optimal, approximate protocol with $O((m+n)l)$ client online cost, and prove post-quantum security under LWE and LPN assumptions.
The optimality claim follows directly from the input $nl$
and output $ml$ dimensions of the outsourcing protocol.

In later ~\Cref{sec:secure-forward-pass}, we bridge the integer ring and 
floating point domains with a \emph{secure linear} protocol
which mitigates the effects of fixed-point quantization and induced 
protocol error in the floating point domain,
in which the interleaved non-linear layers are performed 
on the trusted client.

\subsection{Simple protocol with noisy unmasking}
\label{sec:lwe-matmul-core}

We illustrate the core idea with a simplified construction
to outsource the computation of $WX$, where
$W$ ($m \times n$) and $X$ ($n \times l$).

To mask the quantized weight matrix, 
the trusted client samples
$\mat{L} \leftarrow \uniform{\rngMat{m}{r}}$,
$\mat{M} \leftarrow \uniform{\rngMat{r}{n}}$,
$\mat{E}_{w} \leftarrow \normal{0}{\sigma^{2}}^{m \times n}$
and sets:
\begin{equation}
\label{eq:lwe-weight-mask}
    \mat{W}' = \mat{W} + \mat{L} \mat{M} + \mat{E}_{w}
\end{equation}

Let us denote $L$ the \emph{left} and $M$ the 
\emph{middle} low-rank matrices in the mask, and $E_{w}$ 
the low-norm, full-rank Gaussian \emph{error} term, parameterized by
standard-deviation $\sigma$.
We note that $M$ can be a public matrix whilst retaining
LWE security.
Here, rank $r$ denotes the rank of both uniform,
low-rank matrices 
$L$ and $M$. Critically, $r$ is solely a function of
security parameter $\lambda$ and must be strictly lower than matrix 
dimensions $m,n$ for client efficiency. Given $\lambda \approx 140$,
inner rank $r$ can be set to $1536$ (\Cref{ex:lwe-lpn-parameterisation}). 
For a modern model dimension
of $16\,384 \gg 1536$, the efficiency condition can easily be met (\Cref{fig:client-overhead}).
For smaller model dimensions, we can further reduce the inner
masking rank in our full protocol, detailed in~\Cref{sec:lwe-lpn-matmul}.

Note that the masked matrix $\mat{W}'$ is now indistinguishable from a uniformly
random matrix from the view of the untrusted accelerator $\gpu$.
The security of the encryption reduces to
Learning With Errors (LWE) (\cref{lma:lwe-protocol}). 
Whilst the cost of masking model weights is $O(rmn)$, 
it only has to be performed once for all matrix 
multiplications with fixed $W$.

The input matrix is masked similarly.
For all inference runs, we sample a single
$\mat{N} \leftarrow \uniform{\rngMat{n}{r}}$.
Then, for each individual inference pass,
we sample $\mat{R} \leftarrow \uniform{\rngMat{r}{l}}$
and $\mat{E}_{x} \leftarrow \normal{0}{\sigma^{2}}^{n \times l}$
and set:
\[
    \mat{X}' = \mat{X} + \mat{N}\mat{R} + \mat{E}_{x}
\]
Again, let $N$ be the middle and $R$ denote the
right low-rank masking matrices, where $N$ can be a public matrix.
The client $\cli$ sends $\mat{W}', \mat{X}'$ to the untrusted accelerator $\gpu$, which computes $\mat{Y}' = \mat{W}' \mat{X}'$
and returns the encrypted result $\mat{Y}'$ back to 
the client. In terms of the individual matrix components, this gives
\begin{equation}
    \mat{Y}' = (\mat{W} + \mat{L}\mat{M} + \mat{E}_{w})
               (\mat{X} + \mat{N}\mat{R} + \mat{E}_{x})
\end{equation}
Expanding:
\begin{equation}
    \mat{Y}' 
    = (\mat{W}
      + \mat{L}\mat{M}
      + \mat{E}_{w})\mat{X}
      + (\mat{W} + \mat{L}\mat{M} + \mat{E}_{w})\mat{N}\mat{R}
      + (\mat{W}
      + \mat{L}\mat{M}
      + \mat{E}_{w})\mat{E}_{x}
\end{equation}
\begin{equation}
    = (\mat{W}
      + \mat{L}\mat{M}
      + \mat{E}_{w})\mat{X}
      + W'\mat{N}\mat{R}
      + (\mat{W}
      + \mat{L}\mat{M}
      + \mat{E}_{w})\mat{E}_{x}
\end{equation}
\begin{equation}
    = \mat{W}\mat{X}
      + \mat{L}\mat{M}(\mat{X}+\mat{E}_{x})
      + W'\mat{N}\mat{R}
      + \mat{E}_{w}\mat{X} + \mat{W}\mat{E}_x + \mat{E}_{w}\mat{E}_x
\end{equation}

The client removes two correction terms:
Firstly, $\mat{L}\mat{M}(\mat{X} + \mat{E}_{x})$, which costs $O(r(m+n)l)$;
$\mat{M}(\mat{X} + \mat{E}_{x})$ induces client computational complexity
$O(rnl)$, and subsequent left-multiplication $O(rml)$. 

The second correction term
$(\mat{W}'\mat{N})\mat{R}$ is also a rank $r$ computation 
if $\mat{W}'\mat{N} \in \mathbb{F}_p^{m \times r}$
is precomputed once and reused, costing $O(mrl)$ per subsequent matmul. Since both $W'$ and public matrix 
$N$ can be given to untrusted $\gpu$, 
$W'N$ can also be outsourced.
This process is illustrated
in Figures~\ref{fig:masking-structure} and~\ref{fig:unmasking-visual}, and yields
\begin{equation}\label{eq:approx-error}
    \mat{Y} \approx \mat{Y}' - \mat{L}\mat{M}(\mat{X}+\mat{E}_x) - (\mat{W}'\mat{N})\mat{R}
    = \mat{W}\mat{X} + \mat{E}_{w}\mat{X} + \mat{W}\mat{E}_x + \mat{E}_{w}\mat{E}_{x}
\end{equation}
The three trailing error terms remain as an error in the result.
For fixed security level $\lambda$, the client complexity is $O((m+n)l)$.


\paragraph{Optimal scalability for larger dimensions}
Crucially, LWE security is robust against polynomially many adversarial
samples. This means the LWE parameters (rank $r$ and noise of standard deviation
$\sigma$) are fixed by the security level and do not need to grow with the matrix
dimensions $m, n, l$. Concretely, for 140-bit security, we can fix
$r = 1536$ (a GPU-friendly dimension) and gaussian noise with stddev $\sigma = 0.5$ (\cite{albrecht2015concrete})
regardless of whether the matrix has rank
$4096$ or $29568$.
This constitutes a very small norm
Gaussian relative to the full 32-bit integer ring we instantiate our protocol in.

Whilst the rank $r$ is fixed by the choice of 
security parameter $\lambda$, 
the pure LWE unmasking for client $\cli$ 
still costs $O(r(m+n)l)$ and is linear in
$r$. 
Our full protocol in~\Cref{sec:lwe-lpn-matmul}
improves \emph{concrete efficiency} 
further by replacing 
the uniform, low-rank components with structured, 
sparse masks that are cheaper to undo.



\paragraph{Indistinguishability} 
It is easy to see that the masked weights (and analogously activation) matrices are computationally indistinguishable from uniform. We provide the proof in \Cref{prf:lwe-protocol}.
\begin{restatable}{lemma}{lemLweProtocol}
    \label{lma:lwe-protocol}
    The LWE mask is computationally indistinguishable from uniform:
    \[
        \mat{L}\mat{M} + \mat{E}_{w} \overset{c}{\approx} \mat{U}.
    \]
\end{restatable}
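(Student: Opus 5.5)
The plan is to reduce the statement column by column to decisional LWE in its standard matrix form, and then combine the columns with a hybrid argument. Work in $\rng$ with secret dimension $r$ and the discretized Gaussian $\normal{0}{\sigma^{2}}$ (rounded to the ring) as error distribution.

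\textbf{Step 1: transpose the mask.} Write
\[
(\mat{L}\mat{M}+\mat{E}_w)^\top = \mat{M}^\top \mat{L}^\top + \mat{E}_w^\top .
\]
Fix a column index $j\in\{1,\dots,m\}$ and let $\vec{s}_j\in\rngVec{r}$ be the $j$-th row of $\mat{L}$, viewed as a column vector. Then the $j$-th column of the transpose is
\[
\mat{M}^\top \vec{s}_j + \vec{e}_j ,
\]
where $\vec{e}_j$ is the $j$-th row of $\mat{E}_w$. This is exactly $n$ LWE samples: the public matrix is $\mat{A}=\mat{M}^\top\in\rngMat{n}{r}$ (uniform), the secret is $\vec{s}_j$ (uniform), and the error is $\vec{e}_j$ (Gaussian).

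\textbf{Step 2: check independence.} The $m$ secrets $\vec{s}_j$ are independent and uniform, and all of them share the same public $\mat{A}$. This is multi-secret LWE with a common matrix, which is precisely the setting in which $\mat{M}$ may be published, as the paper notes.

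\textbf{Step 3: hybrid argument.} Define hybrids $H_0,\dots,H_m$, where in $H_k$ the first $k$ columns of $\mat{M}^\top\mat{L}^\top+\mat{E}_w^\top$ are replaced by independent uniform vectors. To distinguish $H_{k-1}$ from $H_k$, a reduction $\mathcal{B}$ proceeds as follows:
\begin{enumerate}
\item It receives an LWE challenge $(\mat{A},\vec{b})$, where $\vec{b}$ is either $\mat{A}\vec{s}+\vec{e}$ or uniform.
\item It sets $\mat{M}=\mat{A}^\top$ and places $\vec{b}$ as column $k$.
\item It fills columns $1,\dots,k-1$ with fresh uniform vectors.
\item It computes columns $k+1,\dots,m$ itself by sampling its own secrets and errors and using the known $\mat{A}$.
\end{enumerate}
The output is distributed as $H_{k-1}$ or $H_k$, depending on the challenge. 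The total advantage is therefore at most $m\cdot \mathrm{Adv}^{\mathrm{LWE}}_{n,r,2^\kappa,\sigma}$, which is negligible since $m=\mathrm{poly}(\lambda)$. Transposing back gives $\mat{L}\mat{M}+\mat{E}_w\overset{c}{\approx}\mat{U}$. The same argument also gives the stronger statement that $(\mat{M},\mat{L}\mat{M}+\mat{E}_w)$ is indistinguishable from $(\mat{M},\mat{U})$, and this version is what is needed later for $\mat{W}'$.

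\textbf{Step 4: pass to $\mat{W}'$ and $\mat{X}'$.} Since $\mat{W}$ is independent of the mask, adding it is a fixed bijection of $\rngMat{m}{n}$. It therefore maps uniform to uniform, and $\mat{W}'\overset{c}{\approx}\mat{U}$ follows. For the activation mask $\mat{N}\mat{R}+\mat{E}_x$, apply the same argument column-wise without transposing: the public matrix is $\mat{N}$ and the secrets are the columns of $\mat{R}$.

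\textbf{Main obstacle.} The main difficulty is not the hybrid itself but the precise assumption. The ring $\rng$ has power-of-two modulus, not prime modulus, and the error is a rounded Gaussian with small $\sigma=0.5$. I would state decisional LWE over $\mathbb{Z}_{2^\kappa}$ with uniform secrets directly as the assumption, consistent with the background section and with the concrete estimate from \cite{albrecht2015concrete}, rather than reduce from the prime-modulus case. I would also remark that $r$ is the LWE dimension fixing the security level, while $n$ only counts samples and may be any polynomial.
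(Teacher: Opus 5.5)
Your proof is correct and uses the same structure as the paper's proof: a hybrid over the columns of the mask, where each step is one decisional-LWE instance with a shared public matrix. The only difference is orientation, and it matters.

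The paper works with the columns $L\vec{m}_j + \vec{e}_{w,j}$ of $LM+E_w$ directly. So $L$ is the public LWE matrix and the columns of $M$ are the secrets, giving $n$ hybrids with $m$ samples per secret. You transpose first, so $M^\top$ is the public matrix and the rows of $L$ are the secrets, giving $m$ hybrids with $n$ samples per secret.

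For the lemma as literally stated, where neither factor is revealed, both orientations suffice. Only yours yields the auxiliary-input statement $(M, LM+E_w) \overset{c}{\approx} (M, U)$. That is what the paper's remark ``$M$ can be a public matrix'' needs, and what the protocol needs, since it stores $M_i\cdots M_1$ on $\gpu$ and has $\gpu$ compute $Q_i$. The paper's sketch shows indistinguishability even given $L$, but it says nothing once $M$ is revealed, because the columns of $M$ are its LWE secrets. To cover a public $M$ it would have to be transposed exactly as you do.

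Your treatment of the activation mask (no transpose, $N$ public, columns of $R$ secret) matches the paper's orientation and its claim that $N$ is public. Your point about assuming LWE directly over $\rng$ also corrects a small slip in the paper's sketch: it writes the columns as elements of $\fpVec{m}$, even though the protocol works modulo $2^\kappa$.
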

As a consequence, for any $\mat{W}$ the masked weight
$\mat{W}' = \mat{W} + \mat{L}\mat{M} + \mat{E}_{w}$ is computationally
indistinguishable from uniform.
\noindent\textit{Proof sketch in \Cref{prf:lwe-protocol}}.
Analogously for $NR + E_{x} \overset{c}{\approx} U$.

\paragraph{Bounded protocol error}
We first analytically 
bound the protocol error term 
in \Cref{thm:total-error-bound}
 with proof and
supporting lemmas in~\Cref{sec:proofs-bounded-noise}.
In subsequent~\Cref{sec:secure-forward-pass},
we illustrate how to effectively mitigate approximation errors in the context of full AI computations via an efficient random rotation of both $W$ and $X$.

Unmasking of $\mat{W}'\mat{X}'$ leaves the following error terms in the
result: $\mat{W}\mat{E}_{X} + \mat{E}_{W}\mat{X} + \mat{E}_{W}\mat{E}_{X}$,
where $\mat{E}_{W} \in \rngMat{m}{n}$ and $\mat{E}_{X} \in \rngMat{n}{l}$
are discrete Gaussian noise matrices with parameter $\sigma$
(distribution $\normal{\rng;0}{\sigma^{2}}$ over $\rng$, with
$\Pr[X = k] \propto \exp{-k^{2}/(2\sigma^{2})}$).

\begin{restatable}{theorem}{thmTotalErrorBound}
\label{thm:total-error-bound}
    Let $\mat{W} \in \rng^{m \times n}$, $\mat{X} \in \rng^{n \times l}$,
    and let $\mat{E}_{W} \in \rngMat{m}{n}$, $\mat{E}_{X} \in \rngMat{n}{l}$
    have independent entries drawn from $\normal{\rng;0}{\sigma^{2}}$.
    Then entry $(j,k)$ of the total error
    $\mat{E}_{W}\mat{X} + \mat{W}\mat{E}_{X} + \mat{E}_{W}\mat{E}_{X}$
    is approximately
    $\sigma\sqrt{\norm{\vec{x}_{k}}_{2}^{2} + \norm{\vec{w}_{j}}_{2}^{2}}$-sub-Gaussian:
    for all $t > 0$,
    \[
        \Pr\!\left[|(\mat{E}_{W}\mat{X} + \mat{W}\mat{E}_{X} + \mat{E}_{W}\mat{E}_{X})_{jk}| > t\right] 
        \leq 2\exp{-\frac{t^{2}}{2\sigma^{2}(\norm{\vec{x}_{k}}_{2}^{2} + \norm{\vec{w}_{j}}_{2}^{2})}}
    \]
    up to $e^{-\Omega(n)}$,
    where $\vec{w}_{j}$ is the $j$-th row of $\mat{W}$ and
    $\vec{x}_{k}$ is the $k$-th column of $\mat{X}$.
\end{restatable}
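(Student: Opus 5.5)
We fix the entry $(j,k)$ and split it into its three summands:
\[
(\mat{E}_{W}\mat{X})_{jk} = \sum_{i} (\mat{E}_W)_{ji} x_{ik},\quad
(\mat{W}\mat{E}_{X})_{jk} = \sum_{i} w_{ji} (\mat{E}_X)_{ik},\quad
(\mat{E}_{W}\mat{E}_{X})_{jk} = \sum_{i} (\mat{E}_W)_{ji}(\mat{E}_X)_{ik}.
\]
Throughout, we treat entries of $\mat{W},\mat{X}$ as their centered integer representatives and consider the error before reduction modulo $2^{\kappa}$. This is the regime in which "approximately" is meant: the error is far below $2^{\kappa}$, so wrap-around is negligible.

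\textbf{First step: the two linear terms.} We use that a discrete Gaussian $\normal{\mathbb{Z};0}{\sigma^2}$ is $\sigma$-sub-Gaussian, i.e.\ $\mathbb{E}[e^{\lambda e}]\le e^{\lambda^2\sigma^2/2}$. This is a standard fact for centered discrete Gaussians; if needed, it holds up to a $(1+\epsilon)$ factor via smoothing. The entries $(\mat{E}_W)_{ji}$ for $i=1,\dots,n$ are independent, so the first term is sub-Gaussian with parameter $\sigma\norm{\vec{x}_k}_2$. Likewise, the second term is sub-Gaussian with parameter $\sigma\norm{\vec{w}_j}_2$. The two terms depend on disjoint noise matrices and are therefore independent, so their sum has MGF at most $\exp{\lambda^2\sigma^2(\norm{\vec{x}_k}_2^2+\norm{\vec{w}_j}_2^2)/2}$. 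The Chernoff bound applied to both tails then gives exactly the claimed inequality for the linear part.

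\textbf{Second step: the cross term $\mat{E}_W\mat{E}_X$.} This is the step I expect to be the main obstacle, since the term is a sum of products of Gaussians and hence only sub-exponential. The word "approximately" and the slack $e^{-\Omega(n)}$ in the statement are there to absorb it. The plan is as follows.

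First, condition on $\mat{E}_X$. The cross term then becomes part of the first linear term: the first plus third summands equal $\sum_i (\mat{E}_W)_{ji}(x_{ik}+(\mat{E}_X)_{ik})$. Conditionally, this is $\sigma\norm{\vec{x}_k+\vec{e}_k}_2$-sub-Gaussian, where $\vec{e}_k$ is the $k$-th column of $\mat{E}_X$.

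Second, bound $\norm{\vec{e}_k}_2^2$. By a chi-square or Bernstein concentration bound (sub-exponential tail for the sum of $n$ squared sub-Gaussians), $\norm{\vec{e}_k}_2^2\le 2\sigma^2 n$ except with probability $e^{-\Omega(n)}$. On this event, $\norm{\vec{x}_k+\vec{e}_k}_2^2\le \norm{\vec{x}_k}_2^2(1+o(1))$ whenever $\norm{\vec{x}_k}_2^2\gg \sigma^2 n$. This holds in our setting because the quantized activations have magnitudes far above $\sigma=0.5$. The residual contribution, of size $O(\sigma^2\sqrt{n})$, is thus a lower-order correction absorbed into "approximately".

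Finally, to combine this with the $\mat{W}\mat{E}_X$ term, which also depends on $\mat{E}_X$, we argue conditionally on $\mat{E}_X$ only for the $\mat{E}_W$-part and then treat $\mat{W}\mat{E}_X$ separately. A cleaner alternative is to bound the joint MGF directly: take $\mathbb{E}_{\mat{E}_W}$ first, which gives $\exp{\lambda^2\sigma^2\norm{\vec{x}_k+\vec{e}_k}^2/2}$, and then handle $\mathbb{E}_{\mat{E}_X}[\exp{\lambda\langle \vec{w}_j,\vec{e}_k\rangle + \lambda^2\sigma^2(\norm{\vec{x}_k}^2+2\langle\vec{x}_k,\vec{e}_k\rangle+\norm{\vec{e}_k}^2)/2}]$. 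This is a Gaussian-quadratic MGF. It is finite for $\lambda^2\sigma^4<1$ and equals $\exp{\lambda^2\sigma^2(\norm{\vec{x}_k}^2+\norm{\vec{w}_j}^2)/2}$ up to factors $(1+O(\lambda^2\sigma^4))^{n}$ and cross-terms of relative order $\lambda\sigma^2$. Choosing the Chernoff-optimal $\lambda = t/(\sigma^2(\norm{\vec{x}_k}^2+\norm{\vec{w}_j}^2))$, these correction factors are $1+o(1)$ in the relevant range of $t$. Outside that range, the tail is governed by the $e^{-\Omega(n)}$ event, which yields the stated bound.
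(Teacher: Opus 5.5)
Your proposal is correct at the level of rigor the word ``approximately'' allows, and it takes a different route from the paper on the cross term. The linear terms are handled exactly as in the paper. The centered discrete Gaussian is exactly $\sigma$-sub-Gaussian by the Poisson-summation argument of \Cref{lem:discrete-noise-inner-product}, so your smoothing hedge is unnecessary.

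\emph{The paper's route.} It keeps $\mat{E}_W\mat{E}_X$ as a separate summand:
\begin{itemize}
\item \Cref{lem:cross-term} conditions on the column $\vec{b}$ of $\mat{E}_X$ and uses Bernstein to get $\norm{\vec{b}}_2^2\le 2n\sigma^2$ outside an $e^{-\Omega(n)}$ event. This yields the standalone tail $2e^{-t^2/(4n\sigma^4)}+e^{-\Omega(n)}$.
\item \Cref{cor:cross-term-negligible} compares tail exponents under $\sigma^2\ll\overline{x^2}$.
\item The theorem's proof then simply declares the cross term's contribution to be an additive $e^{-\Omega(n)}$.
\end{itemize}
This decomposition is modular: it gives a reusable bound on $\mat{E}_W\mat{E}_X$, which the paper uses again in its rotation analysis. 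But the final combination is only heuristic. $(\mat{E}_W\mat{E}_X)_{jk}$ shares row $j$ of $\mat{E}_W$ with $(\mat{E}_W\mat{X})_{jk}$ and column $k$ of $\mat{E}_X$ with $(\mat{W}\mat{E}_X)_{jk}$. Moreover, its own tail is not $e^{-\Omega(n)}$ at the $t$ of interest.

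\emph{Your route.} Your conditioning step is essentially that lemma with $\vec{b}$ replaced by $\vec{x}_k+\vec{e}_k$. As you note, it cannot by itself be combined with $\mat{W}\mat{E}_X$. The step that carries the argument is the joint MGF, integrating out $\mat{E}_W$ and then $\mat{E}_X$, which handles that dependence exactly. For Gaussian noise the MGF is $e^{\lambda^2\sigma^2\norm{\vec{x}_k}_2^2/2}\,(1-\lambda^2\sigma^4)^{-n/2}\,e^{\lambda^2\sigma^2\norm{\vec{w}_j+\lambda\sigma^2\vec{x}_k}_2^2/(2(1-\lambda^2\sigma^4))}$. For discrete Gaussians, the same Poisson-summation comparison shows this expression is an upper bound.

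Two things need tightening. Set $V:=\norm{\vec{x}_k}_2^2+\norm{\vec{w}_j}_2^2$ and let $Z$ be the $(j,k)$ error entry.

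First, carry the assumption $V\gg n\sigma^2$ (the paper's $\sigma^2\ll\overline{x^2}$) into the MGF step, not only the conditioning step. It is essential. For $\vec{x}_k=\vec{w}_j=0$ the stated inequality would force $\Pr[|(\mat{E}_W\mat{E}_X)_{jk}|>t]\le e^{-\Omega(n)}$ for every $t>0$, which fails at $t$ of order $\sigma^2\sqrt{n}$.

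Second, your closing sentence overstates what the computation gives.
\begin{itemize}
\item At $\lambda=t/(\sigma^2V)$, the factor $(1-\lambda^2\sigma^4)^{-n/2}$ equals $e^{\Theta(nt^2/V^2)}$. This is no longer $1+o(1)$ once $t$ reaches order $V/\sqrt{n}$.
\item The claimed bound only drops below $e^{-cn}$ once $t$ is of order $\sigma\sqrt{nV}$.
\item So an uncovered window remains unless $V\gg n^2\sigma^2$.
\end{itemize}
What your computation actually proves is $\Pr[|Z|>t]\le 2e^{-(1-\delta)t^2/(2\sigma^2V)}$ with $\delta=O(n\sigma^2/V+t/V)$, valid for $t\le\epsilon V$. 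Beyond $\epsilon V$, both sides are $e^{-\Omega(n)}$. This relative error in the exponent is the honest content of ``approximately''.
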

\noindent\textit{Proof in \Cref{prf:total-error-bound}.}

\paragraph{Leakage \& integrity} The untrusted accelerator
$\gpu$ clearly learns
the matrix dimensions and invocation of each
secure MatMul outsourcing. Without padding matrices, 
the model architecture can be inferred. With padding,
an upper-bound on the number of model layers and matrix dimensions is leaked,
at the cost of additional computation. 
\revised{A naive padding scheme
could add (1) dummy outsourced linear layers or (2) padded output $m$
or input $n$ dimensions. 
Option (1) incurs additional communication and
client masking per dummy layer, whilst option (2) increases the per-query
client complexity by $O(m'l)$ with output padding $m'$ or $O(n'l)$ with input
padding $n'$, where $l$ is the batch (token) dimension; padding the batch
by $l'$ analogously adds $O((m+n)l')$.}

Integrity can be straightforwardly achieved by applying
well-known Freivalds' algorithm for asserting correctness of
$\mat{W}\mat{X} = \mat{Y}$ computed by the untrusted GPU.
We reproduce this correctness check in~\Cref{sec:freiwald}
for the reader's convenience.

\subsection{Full protocol using noisy recursive masking}
\label{sec:lwe-lpn-matmul}


\begin{figure}
\footnotesize
\begin{tcolorbox}[protocolbox={\textsf{Protocol $\Pi_{\textsf{Sec-Approx-MatMul}}$: Secure Approximate MatMul.}}]
\textbf{Parties:} Trusted client $\cli$
and untrusted accelerator $\mathcal{G}$.

\textbf{Init W:} Compute one-time masking of weights matrix.
\begin{enumerate}[leftmargin=*,labelsep=0.4em,itemsep=2pt]
    \item $\cli$ samples: $\mat{L} \leftarrow \rngMat{m}{r_{d}}$, 
    $\mat{M}_{i \in [d]} \leftarrow \rng^{r_{i} \times r_{i-1}}$, 
    $\mat{N}_{i \in [d]} \leftarrow \rng^{r_{i-1} \times r_{i}}$, \\ 
    $\mat{E}_{w} \leftarrow \mathcal{N}(\rngMat{m}{n},\sigma^{2})$,
    $\mat{S}_{i \in [2:d]} \leftarrow \sparse{\rngMat{r_{i}}{r_{i-1}}; \mu_i}$
    \item $\cli$ computes:
        $\mat{U}_{w} = \mat{L} (\mat{M}_d \mat{M}_{d-1} ... \mat{M}_1) + \sum_{i=1}^{d-1} \mat{S}_{i+1}(\mat{M}_{i} \mat{M}_{i-1} ... \mat{M}_1) + E_{w}$.
    \item $\cli$ stores: 
        $N_1 N_2 ... N_i$ and 
        $\mat{J}_i = (\mat{W}+\mat{E}_{w})\mat{N}_1 \mat{N}_2 ... \mat{N}_i$ for $i \in [d]$.
    \item $\cli$ stores: $\vec{a} \leftarrow \rngVec{n}$, $\vec{Y}^{\textsf{chk}} \leftarrow\vec{a}^{T}\mat{W}'$.
    \item $\cli$ stores $\mat{W}' = \mat{W}+\mat{U}_{w}$
    and $M_{i}M_{i-1}...M_{1}$ for $i \in [d]$ on untrusted $\gpu$.
\end{enumerate} 

\vspace{0.5em}

\textbf{Online phase:} On fresh input $\mat{X}$, securely outsource $\mat{W}\mat{X}$ to $\gpu$ and recover result.
\begin{enumerate}[leftmargin=*,labelsep=0.4em,itemsep=2pt]
    \item \emph{Init X:} $\cli$ computes and stores masking material consumed by X.
    \begin{enumerate}[leftmargin=*,labelsep=0.4em,itemsep=2pt]
        \item $\mat{R} \leftarrow \rngMat{r_{d}}{l}$,
        $\mat{E}_{x} \leftarrow \mathcal{N}(\rngMat{n}{l},\sigma^{2})$
        and $\mat{T}_{i \in [2:d]} \leftarrow \sparse{\rngMat{r_{i-1}}{r_{i}}; \mu_i}$.
        \item $\mat{U}_{x} = \mat{N}_1 \mat{N}_2 ... \mat{N}_d \mat{R} + \sum_{i=1}^{d-1} \mat{N}_1 \mat{N}_2 ... \mat{N}_i \mat{T}_{i+1} + \mat{E}_{x}$.
        \item $\mat{K} = \mat{J}_d\mat{R} + \sum_{i=1}^{d-1} 
        \mat{J}_i T_{i+1}$ (Unmasking pre-computation).
    \end{enumerate} 
    \item $\cli$ forwards to $\gpu$: $\mat{X}' = X +\mat{U}_{x}$.
    \item $\gpu$ returns $\mat{Y}' = \mat{W}'\mat{X}'$ and
        $\mat{Q}_{i} = (\mat{M}_{i} \mat{M}_{i-1} ... \mat{M}_1)\mat{X}'$ for $i \in [d]$.
    \item $\cli$ asserts $\mat{Y}^{\textsf{chk}}\mat{X} = \mat{a}^T \mat{Y}'$ and aborts otherwise.
   \item $\cli$ outputs $\mat{Y} \approx \mat{Y}' - K
    - L \mat{Q}_{d} - \sum_{i=1}^{d-1} S_{i+1} \mat{Q}_i$ (Unmask Y).
\end{enumerate}



\end{tcolorbox}
\caption{Secure approximate matrix multiplication outsourcing protocol.}
\label{proto:approx-matmat-gpu}
\end{figure}

Next, we describe a full protocol,
specified in~\Cref{proto:approx-matmat-gpu},
that takes advantage of 
a recursive masking technique from
~\cite{braverman2025practical} by
recursively replacing dense, uniform matrices
of low-rank with LPN instances of ever decreasing rank.
In contrast to~\cite{braverman2025practical},
the outer LWE instance (as described in~\Cref{sec:lwe-matmul}) has a fixed rank
determined by security parameter $\lambda$,
and is independent of matrix dimensions.
The protocol~\cite{braverman2025practical} in
original form has a mask with rank that
must \emph{scale} with matrix dimensions.
We obtain trusted client efficiency
of $O((m+n)l)$, rather than $O(mn^{\epsilon}l)$
in~\cite{braverman2025practical}.

For intuition, we replace the left private 
$L$ matrix from our naive LWE-only protocol (\Cref{eq:lwe-weight-mask}) with an LPN instance $\mat{L}_{2} \mat{M}_2 + S_2$,
that is computationally indistinguishable from uniform $L$.
\[
\mat{W'} = \mat{W} + \mat{L}_{1}\mat{M}_{1} + \mat{E}_{w} 
\overset{c}{\approx} \mat{W} + (\mat{L}_{2} \mat{M}_2 + S_2)\mat{M}_1 + \mat{E}_{w} 
\]
where $L_2,M_2$ are dense and uniform
and $S_2$ a sparse matrix (sparse LPN error).
The rank of $\mat{L}_{2}\mat{M}_{2}$
is strictly smaller than the original
$L$ ($r_2 < r_1$), allowing us to decrease the internal rank of the
non-error masking term. 

%

We continue to apply the substitution to $\mat{L}_{2}$
\begin{equation}
\mat{W'} 
\overset{c}{\approx} \mat{W} + ((\mat{L}_{3}\mat{M}_3 + \mat{S}_3) \mat{M}_2 + S_2)\mat{M}_1 + \mat{E}_{w} 
\end{equation}
More generally over multiple substitution steps
\begin{align}\label{eq:weight-mask}
    \mat{W'} \overset{c}{\approx} 
    \mat{W} + (((\mat{L}_{d}\mat{M}_{d} + \mat{S}_d)\mat{M}_{d-1} + \mat{S}_{d-1})... \mat{S}_{2})\mat{M}_{1} + \mat{E}_{w}\nonumber \\
    = \mat{W} + \mat{L} (\mat{M}_d \mat{M}_{d-1} ... \mat{M}_1) + \sum_{i=1}^{d-1} \mat{S}_{i+1}(\mat{M}_i \mat{M}_{i-1} ... \mat{M}_{1}) + \mat{E}_{w}
\end{align}
Note the rank of the dense term 
$\mat{L} (\mat{M}_d \mat{M}_{d-1} ... \mat{M}_1)$ has been reduced to $r_{d} < r_{d-1} < ... < r_1$ for $\mat{L}$ as 
a dense $m \times r_{d}$ matrix. 
This ultimately permits a trade-off between storage and
client runtime, by reducing the inner rank of matrix computations.
We illustrate this trade-off concretely in
\Cref{ex:lwe-lpn-parameterisation}
and \Cref{tab:lpn-tradeoff}.

We emphasize that inner 
dimensions $r_i$ are solely a function of security parameter $\lambda$, and independent
of the matrix dimensions.
Further, the terms $\mat{S}_{i+1}(\mat{M}_i \mat{M}_{i-1} ... \mat{M}_{1})$
for $i \in [1:d-1]$
are strictly over sparse matrices, with row sparsity also independent of the matrix sizes.

During the online phase, the input matrix of dimension $n \times l$
is masked analogously, naturally imposing $O(nl)$ online runtime. 
\begin{equation}\label{eq:activation-mask}
\mat{X}' = \mat{X} + \mat{N}_1 \mat{N}_2 ... \mat{N}_d \mat{R} + \sum_{i=1}^{d-1} \mat{N}_1 \mat{N}_2 ... \mat{N}_i \mat{T}_{i+1} + \mat{E}_{x}
\end{equation}
The (public) matrix product chains can be precomputed during initialization and outsourced. 
The masking operation has
$O(nl)\approx O(nl + n r_d l + r_{1} \mu_2 r_{2} l + ... + r_{d-1} \mu_d r_{d} l)$ complexity,
where sparsity $\mu_i$ and $r_i$ are fixed by $\lambda$ and independent of $m,n,l$.
We note that the masks for quantized activation $X$ can be computed independently
from $X$.

\paragraph{Unmasking}
Upon receiving $\mat{W}'\mat{X}'$ from the untrusted accelerator $\gpu$, 
the client removes all cross terms with low-rank dense and sparse matrices in $O(ml)$, whilst retrieving a result
with the same error as the simplified
LWE-only protocol in~\Cref{eq:approx-error}.


Let $\mat{W'}\mat{X'} = (\mat{W} + \mat{U}_{w})(\mat{X} + \mat{U}_{x}) = WX + WU_x + U_w X + U_w U_x$,
where $\mat{U}_{w}$ and $\mat{U}_{x}$ are the matrix masks from 
\Cref{eq:weight-mask} and \Cref{eq:activation-mask} respectively.
To recover $\mat{W}\mat{X}$, 
the trusted client must remove the cross-terms, or
a close approximation thereof. Firstly, consider term $\mat{W}\mat{U}_{x}$, 
\begin{equation}
\mat{W}\mat{U}_{x} = (\mat{W}\mat{N}_1 \mat{N}_2 ... \mat{N}_d) \mat{R} + \sum_{i=1}^{d-1} (\mat{W}\mat{N}_1 \mat{N}_2 ... \mat{N}_i) \mat{T}_{i+1} + \mat{W}\mat{E}_{x} 
\end{equation}
This term can be efficiently computed up to $\mat{W}\mat{E}_{x}$ 
by the client during the online phase,
given pre-computation of the chain products $\mat{W}\mat{N}_{1}\mat{N}_{2}...\mat{N}_{i}$,
amortized over all multiplications over the same weight matrix.
The client complexity to compute this term is $O(ml) = O(mr_dl) + O(m \mu_2 r_2 l) +...+ O(m \mu_d r_d l)$ as $r_2, r_3 ..., r_d$, $\mu_2, \mu_3,...,\mu_d$ 
are fixed by the security parameter and not by matrix dimension $m,n,l$.

The cross-term $\mat{U}_{w}\mat{X}' = U_w (X+ U_x)$ 
is computed up to error
$\mat{E}_{w}\mat{X} + \mat{E}_{w}\mat{E}_{x}$.
To illustrate this, consider the following expansion.
\begin{equation}
\mat{U}_{w}\mat{X}' =
\mat{L} (\mat{M}_d \mat{M}_{d-1} ... \mat{M}_1\mat{X}') + \sum_{i=1}^{d-1} \mat{S}_{i+1}(\mat{M}_i \mat{M}_{i-1} ... \mat{M}_1\mat{X}') + \mat{E}_{w}\mat{X}'
\end{equation}
Here, the multiplicative terms $\mat{M}_i \mat{M}_{i-1} ... \mat{M}_1 \mat{X}'$ can be securely outsourced to the untrusted
accelerator $\gpu$. 
Thus, the first and second terms above can be computed by the client in
$O(ml) = O(mr_dl) + O(m \mu_2 r_2 l) +...+ O(m \mu_d r_d l)$ time.
The error term expands to $\mat{E}_{w}\mat{X}' = \mat{E}_{w}\mat{X} + \mat{E}_{w}\mat{U}_{x}$, of which $\mat{E}_w \mat{X}$ 
is left as protocol error. We expand the trailing term
$\mat{E}_{w}\mat{U}_{x}$ further;
\begin{equation}
     \mat{E}_{w}\mat{U}_{x}  = (\mat{E}_{w} \mat{N}_1 \mat{N}_2 ... \mat{N}_d) \mat{R} + \sum_{i=1}^{d-1} (\mat{E}_{w}\mat{N}_1 \mat{N}_2 ... \mat{N}_i) \mat{T}_{i+1} +  \mat{E}_{w}\mat{E}_{x}
\end{equation}
Here, terms $\mat{E}_{w} \mat{N}_1 ... \mat{N}_i$ can be 
preprocessed by the client and amortized over all matrix multiplications over
the quantized model weights,
whilst $\mat{E}_{w}\mat{E}_{x}$ is left as protocol error.
Thus the trusted client complexity of
computing this is $O(ml) = O(mr_dl) + O(m \mu_2 r_2 l) +...+ O(m \mu_d r_d l)$:
again, rank $r_i$ and sparsity $\mu_{i}$ are fixed and independent of matrix dimensions.

Overall, this represents an optimization to
reduce the rank of dense matrix multiplications 
at runtime compared to the
simplified protocol in~\Cref{sec:lwe-matmul-core}.
The final resulting protocol error is
$\mat{W}\mat{E}_x + \mat{E}_w \mat{X} + \mat{E}_w \mat{E}_x$,
identical to the naive protocol in~\Cref{sec:lwe-matmul}.

\begin{example}
\label{ex:lwe-lpn-parameterisation}
We provide an example parameterization for security level
$\lambda = 140$. Security of the LWE and LPN instances were parameterized
with the LWE~\cite{albrecht2015concrete} 
and LPN~\cite{YYW+25} estimators by Albrecht et al. and 
Yu et al. respectively.
We assume $d = 3$ for the number of nested LWE/LPN
instances. 
\begin{itemize}
    \item LWE: $r_1 = 1536, \sigma = 0.5$
    \item LPN-1: $r_2 = 1024, N_2 = 1536, t_2 = 80$
    \item LPN-2: $r_3 = 652, N_3 = 1024, t_3 = 80$
\end{itemize}
Let $r_i$ be the secret dimension and $N_i$ the number of permitted
adversarial samples (polynomially many for LWE). 
For LWE, $\sigma$ denotes the standard deviation of the dense, Gaussian noise term. For LPN instances, $t$ denotes the hamming weight of the noise vector.
\end{example}

\paragraph{Compute-storage tradeoff of the LPN nesting}
Each additional LPN level shrinks the dense rank that the client
multiplies through ($r_1 \to r_2 \to r_3$) at the cost of one extra
chain product $\mat{N}_1 \cdots \mat{N}_i$ to cache and one extra
sparse mask $\mat{S}_{i+1}$ of Hamming weight $t_{i+1}$.
\Cref{tab:lpn-tradeoff} makes this concrete for the parameters
above, omitting (i) the small sparse-mask client state
$\sum_{i=2}^{d} r_i t_i$ and
(ii) the GPU-side chain products $(r_1{+}\cdots{+}r_d)\,n$.
\begin{table}[H]
\centering
\resizebox{\columnwidth}{!}{%
\begin{tabular}{c c l}
\toprule
$d$ & $\cli$ compute$/(m{+}n)\,l$ &  $\cli$ state \\
\midrule
LWE only
  & $r_1 = 1536$
  & $(m{+}n)\,r_1 + m\,r_1 = 1536\,(2m{+}n)$ \\
+ LPN-1
  & $r_2{+}t_2 = 1104$
  & $(m{+}n)(r_1{+}r_2) + m\,r_2 = 2560\,(m{+}n) + 1024\,m$ \\
+ LPN-2
  & $r_3{+}t_2{+}t_3 = 812$
  & $(m{+}n)(r_1{+}r_2{+}r_3) + m\,r_3 = 3212\,(m{+}n) + 652\,m$ \\
\bottomrule
\end{tabular}%
}
\caption{Client compute/storage tradeoff of mask nesting.}
\label{tab:lpn-tradeoff}
\end{table}


Under decisional LWE and LPN, the nested LWE+LPN mask is computationally
indistinguishable from uniform (\Cref{lma:lwe-lpn-protocol}); hence for
any $\mat{W}$ the masked matrix $\mat{W}'$ is indistinguishable from
uniform.
\begin{restatable}[Security of $\Pi_{\textsf{Sec-Approx-MatMul}}$]{theorem}{MatMulProtocol}
    \label{thm:matmul-protocol}
    Under decisional LWE and LPN:
    \begin{enumerate}
        \item \emph{(Privacy.)} The adversary $\gpu$'s view $(\mat{W}',\mat{X}')$
        is computationally indistinguishable from uniform over
        $\rng^{m \times n} \times \rng^{n \times l}$.
        \item \emph{(Integrity.)} See~\Cref{sec:freiwald}.
    \end{enumerate}
\end{restatable}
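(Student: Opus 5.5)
The privacy claim reduces to showing that the two masks $\mat{U}_w$ and $\mat{U}_x$ are each pseudorandom, and that they can be handled one at a time. I would proceed by a short sequence of hybrids. Hybrid $H_0$ is the real view $(\mat{W}+\mat{U}_w,\ \mat{X}+\mat{U}_x)$. In $H_1$ the weight mask $\mat{U}_w$ is replaced by a uniform $\mat{U}$. In $H_2$ the activation mask $\mat{U}_x$ is also replaced by a uniform $\mat{U}'$. In $H_2$ the pair is $(\mat{W}+\mat{U},\mat{X}+\mat{U}')$, which is exactly uniform over $\rng^{m\times n}\times\rng^{n\times l}$ because additive shifts preserve the uniform distribution. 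Each hybrid step is a reduction that embeds a challenge matrix and adds the fixed $\mat{W}$ or $\mat{X}$. The two masks are sampled independently: $\mat{L},\mat{S}_i,\mat{E}_w$ versus $\mat{R},\mat{T}_i,\mat{E}_x$. The reduction can therefore sample the other mask itself.

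The core ingredient is a lemma for the nested mask, the \Cref{lma:lwe-lpn-protocol} referenced just before the theorem, which I would prove by induction on the nesting depth $d$. Write $\mat{P}_i=\mat{M}_i\cdots\mat{M}_1$. The innermost step uses \Cref{lma:lwe-protocol}: $\mat{L}_1\mat{M}_1+\mat{E}_w\overset{c}{\approx}\mat{U}$ for uniform $\mat{L}_1\in\rng^{m\times r_1}$. Going outward, I would replace the uniform $\mat{L}_{i}\in\rng^{m\times r_i}$ by the LPN sample $\mat{L}_{i+1}\mat{M}_{i+1}+\mat{S}_{i+1}$. Each row of this matrix is an LPN sample with secret a row of $\mat{L}_{i+1}$, public matrix $\mat{M}_{i+1}$, and sparse error a row of $\mat{S}_{i+1}$. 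The $m$ rows then form $m$ independent instances sharing a public matrix, and a row-by-row hybrid bounds the advantage by $m\cdot\mathrm{Adv}_{\mathrm{LPN}}$. Under the map $\mat{Z}\mapsto\mat{Z}\mat{P}_i+\mat{E}_w+(\text{other terms})$, which the reduction can compute, this substitution yields exactly the expansion in \Cref{eq:weight-mask}. The activation mask is the transposed analogue, with $\mat{N}$ chains on the left and $\mat{R},\mat{T}$ on the right, so the same argument applies to columns.

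The main obstacle is simulation. The adversary $\gpu$ also holds the chain products $\mat{M}_i\cdots\mat{M}_1$, the $\mat{N}$ chains, and the returned $\mat{Q}_i$. Each hybrid reduction must therefore be able to compute these quantities. This requires the $\mat{M}_i,\mat{N}_i$ to be public inputs of the LPN/LWE instances rather than secrets, so I would phrase every assumption in its public-matrix form. The $\mat{Q}_i$ are computed by $\gpu$ itself, so they add nothing. I would first check that the reductions never need $\mat{L},\mat{S}_i,\mat{E}_w$ beyond what the challenge supplies.

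A second subtlety is the noise domain. LPN is stated over $\mathbb{F}_2$ or $\mathbb{F}_p$, while the protocol works over the ring $\rng$ with sparse uniform errors. I would assume the ring variants of decisional LPN and LWE as stated in the background section, with parameters taken from \Cref{ex:lwe-lpn-parameterisation}, and I would note that the number of samples $m$ or $l$ is polynomial. Summing the advantages over at most $2d$ hybrid steps, each multiplied by a polynomial factor, gives a negligible bound. Integrity is deferred to \Cref{sec:freiwald}.
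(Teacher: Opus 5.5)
Your proposal is correct and follows the paper's route. You unwind the nested mask level by level via decisional LPN down to the base mask $\mat{L}_1\mat{M}_1+\mat{E}_w$, invoke LWE there, run the transposed argument for $\mat{X}'$, and use the independence of the two masks to combine them. The one substantive refinement is your orientation of the instances: you take $\mat{M}_i$ (resp.\ $\mat{N}_i$) as the public matrix, with the rows of $\mat{L}$ (resp.\ columns of $\mat{R}$) as secrets. That orientation is what lets each reduction hand $\gpu$ the chain products it actually stores. The paper's proof of \Cref{lma:lwe-protocol} instead takes $\mat{L}$ as the public matrix and the columns of $\mat{M}$ as secrets, so it only covers the bare view $(\mat{W}',\mat{X}')$.
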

\begin{proof}
    \emph{(Privacy.)} \Cref{lma:lwe-lpn-protocol} in~\Cref{prf:lwe-lpn-protocol} gives
    $\mat{W}' \overset{c}{\approx} \mat{U}_{m \times n}$ under decisional
    LWE and LPN. The activation mask in~\Cref{eq:activation-mask}
    follows the same nested LWE+LPN form (with roles of
    $\mat{L},\mat{M}_i,\mat{S}_i,\mat{E}_w$ taken by
    $\mat{R},\mat{N}_i,\mat{T}_i,\mat{E}_x$) and is sampled
    independently, so an analogous hybrid argument yields
    $\mat{X}' \overset{c}{\approx} \mat{U}_{n \times l}$. 
\end{proof}

\begin{figure}
  \centering

\pgfplotsset{
    scatter/classes={
        L32_1B={mark=*},
        Gemma2_27B={mark=triangle*},
        L4Mav={mark=star},
        Kimi_K25={mark=square*},
        DSV3={mark=oplus*},
        L31_70B={mark=diamond*},
        GPT3={mark=pentagon*},
        L31_405B={mark=10-pointed star}
    }
}

\begin{tikzpicture}
\begin{axis}[
    title={Measured efficiency of outsourcing ($l=1$)},
    xlabel={Rank of Model Weight Matrix ($\times 10^3$)},
    ylabel={Trusted $\cli$ runtime:\\outsourced $/$ local $\mat{W}\mat{X}$},
    ylabel style={font=\small, align=center},
    xmin=1.3, xmax=17.1,
    ymin=0.020, ymax=0.450,
    grid=major,
    width=\linewidth,
    height=7cm,
    xticklabel style={/pgf/number format/1000 sep={}},
    every axis plot/.append style={thick},
    tick label style={font=\small},
    label style={font=\small},
    title style={font=\small},
    legend style={
        at={(0.98,0.98)},
        anchor=north east,
        font=\scriptsize,
        draw=black,
        fill=white,
        fill opacity=0.85,
        text opacity=1,
        inner sep=2pt,
        row sep=-2pt,
    },
]

\addlegendimage{only marks, mark=square*, color=blue,   fill=blue}
\addlegendentry{attn proj}
\addlegendimage{only marks, mark=square*, color=teal,   fill=teal}
\addlegendentry{MLP down}
\addlegendimage{only marks, mark=square*, color=orange, fill=orange}
\addlegendentry{MLP up}

\addplot[scatter, scatter src=explicit symbolic, only marks, color=blue, fill=blue, forget plot]
    coordinates {
        (2.048,  0.4479) [L32_1B]
        (4.608,  0.2225) [Gemma2_27B]
        (5.120,  0.2023) [L4Mav]
        (7.168,  0.1539) [Kimi_K25]
        (7.168,  0.0757) [DSV3]
        (8.192,  0.1382) [L31_70B]
        (12.288, 0.1080) [GPT3]
        (16.384, 0.0784) [L31_405B]
    };

\addplot[scatter, scatter src=explicit symbolic, only marks, color=teal, fill=teal, forget plot]
    coordinates {
        (2.048,  0.1347) [L32_1B]
        (4.608,  0.0358) [Gemma2_27B]
        (5.120,  0.0746) [L4Mav]
        (7.168,  0.0682) [Kimi_K25]
        (7.168,  0.0700) [DSV3]
        (8.192,  0.0472) [L31_70B]
        (12.288, 0.0303) [GPT3]
        (16.384, 0.0266) [L31_405B]
    };

\addplot[scatter, scatter src=explicit symbolic, only marks, color=orange, fill=orange, forget plot]
    coordinates {
        (2.048,  0.3972) [L32_1B]
        (4.608,  0.2491) [Gemma2_27B]
        (5.120,  0.2049) [L4Mav]
        (7.168,  0.1762) [Kimi_K25]
        (7.168,  0.1746) [DSV3]
        (8.192,  0.2231) [L31_70B]
        (12.288, 0.1174) [GPT3]
        (16.384, 0.1002) [L31_405B]
    };

\addlegendimage{only marks, mark=*, black}
\addlegendentry{LLaMA-3.2 1B}
\addlegendimage{only marks, mark=triangle*, black}
\addlegendentry{Gemma-2-27B}
\addlegendimage{only marks, mark=star, black}
\addlegendentry{LLaMA-4 Maverick 400B}
\addlegendimage{only marks, mark=square*, black}
\addlegendentry{Kimi-K2.5}
\addlegendimage{only marks, mark=oplus*, black}
\addlegendentry{DeepSeek-V3}
\addlegendimage{only marks, mark=diamond*, black}
\addlegendentry{LLaMA-3.1 70B}
\addlegendimage{only marks, mark=pentagon*, black}
\addlegendentry{GPT-3 175B}
\addlegendimage{only marks, mark=10-pointed star, black}
\addlegendentry{LLaMA-3.1 405B}

\end{axis}
\end{tikzpicture}
  \caption{Outsourcing efficiency: 
    Measured client runtime vs.\ local
  $\mat{W}\mat{X}$ runtime ($X$ as vector, $l=1$)
  for various model weight dimensions. The mask initialization is batched for 512 invocations and amortized over each X. 
  One-time initialization of W is excluded.}
  \label{fig:client-overhead}
\end{figure}

\subsection{Outsourcing efficiency in $\Pi_{\textsf{Sec-Approx-MatMul}}$}
Efficiency of a matrix multiplication outsourcing
protocol relies on a smaller \emph{ratio} of (1)
trusted client protocol overhead to (2) the cost of 
the client computing $WX$ locally. 
This depends on both the \emph{concrete} size of the matrix dimensions (larger is better), as well as the \emph{ratio} of matrix dimensions (balanced dims are better, given a fixed element count).
For a theoretical assessment of this client efficiency
ratio, we illustrate this 
in~\Cref{fig:theoretical-overhead} in the appendix
for protocol parameters in~\Cref{ex:lwe-lpn-parameterisation} and raw computational work (not run-time).

To \emph{experimentally} 
evaluate the efficacy of client outsourcing for
different model dimensions in practice,
we show the ratio of (1) trusted client protocol runtime to (2) the runtime of
computing $WX$ locally in~\Cref{fig:client-overhead}
with our protocol implementation and hardware setup described in 
\Cref{sec:conf-ai-casestudies}.
Overall, model weight matrices with higher rank
are clearly more amenable to outsourcing with our protocol
than those with lower rank, validating the theoretical scalability of our protocol.
We note an implementation nuance: for a given rank, 
the client $\cli$ efficiency in the
implemented protocol can be observed to be
positively correlated to the matrix output dimension $m$; 
MLP down projection
matrices offer the best outsourcing efficiency in our 
experiments. We partially attribute this artifact to
our implementation and NVIDIA runtime environment, 
where the client masking/unmasking computation
over low-rank and sparse matrix elements is implemented on the slower, 
yet more general CUDA cores,
and the full-rank $WX$ is optimized to run on the faster Tensor path. 
The trusted client implementation is likely more sensitive
to increased input matrix dimensions, for which it must 
perform additional work on the slower CUDA path.

As future model dimensions increase, we anticipate outsourcing
to become significantly more worthwhile, as shown in \Cref{fig:theoretical-overhead}.

\paragraph{Local attention computation}
Efficiency of our protocol rests on a one-time $O(mn)$
weight-masking step (Init W) that is amortized across every
subsequent forward pass reusing the same $m\times n$ matrix.
During prefill this amortization is unavailable for attention:
$Q_i$, $K_i$, and $V_i$ are all per-query activations,
so neither operand of $Q K_i^\top$ nor of $\text{score}\cdot V_i$
is a frozen matrix that can be masked once and reused.
During autoregressive decoding the situation is better.
The
KV cache is built once per sequence and reused for every
generated token, so one could in principle run Init W on
$K_i$ and $V_i$ at cache-build time and amortize it over all
subsequent decode steps.

Even with that amortization, however, the head dimension 
in models today is
too small for outsourcing to pay off in practice.
The client-side overhead of $\Pi_{\textsf{Sec-Approx-MatMul}}$
scales as $O((m+n)\,l)$ against a GPU cost of $O(mnl)$,
so the useful regime is one where rank $r = \min(m,n)$ is large
(Figure~\ref{fig:theoretical-overhead}).
The outsourced weight projections sit comfortably in this
regime with $\min(m,n) = d_\text{model} \geq 4096$,
whereas per-head attention matmuls only have inner dimension
$d_k = d_\text{model}/h \approx 128$ for the model sizes
we consider (e.g.\ LLaMA-3 70B and Qwen2.5-72B both with
$d_\text{model}=8192$, $h=64$).



\section{Secure forward pass
for LLM models}
\label{sec:secure-forward-pass}

\begin{figure*}
\footnotesize
\begin{tcolorbox}[protocolbox={\textsf{Protocol $\Pi_{\textsf{Sec-Linear}}$: Secure Linear Layer.}}]
\textbf{Parties:} Trusted client $\cli$
and untrusted accelerator $\mathcal{G}$.

\textbf{Inputs:} BF16 weight matrix $\mat{W} \in \mathbb{R}^{m \times n}$,
optional bias $\vec{b} \in \mathbb{R}^{m}$, and on each invocation a
BF16 activation $\mat{X} \in \mathbb{R}^{n \times l}$.

\textbf{Output:} BF16 activation $\mat{Y} \approx \mat{W}\mat{X} + \vec{b}$.

\textbf{Init W:} One-time pre-processing of floating-point weights (reused across inputs).
\begin{enumerate}
    \item $\cli$ samples and stores
        $\mat{R} = \mathsf{WHT}_n \cdot \mathrm{diag}(\vec{d})$
        with $\vec{d}\in\{-1,+1\}^{n}$.
    \item $\cli$ rotates the weights
        $\widetilde{\mat{W}} \leftarrow \mat{W}\mat{R}^{\top}$
        and quantizes each row of $\widetilde{\mat{W}}$ to a signed
        $b$-bit integer with per-row scale
        $s_{w_j} = \norm{\widetilde{\vec{w}}_j}_{\infty} / q_{\max}$,
        where the operand window $q_{\max}$ is chosen so that the
        inner-product accumulator stays inside $\rng$.
    \item $\cli$ embeds $\hat{\mat{W}}$ into $\rng$
        and invokes the \emph{Init W}
        phase of $\Pi_{\textsf{Sec-Approx-MatMul}}$ to
        produce the masked weights $\hat{\mat{W}}'$ stored on
        $\gpu$ and the unmasking material stored on $\cli$.
\end{enumerate}

\textbf{Online phase:} On fresh floating point input $\mat{X}$, outsource
$\mat{W}\mat{X} + \vec{b}$ to $\gpu$ and return a floating point result.
\begin{enumerate}
    \item $\cli$ rotates the activation
        $\widetilde{\mat{X}} \leftarrow \mat{R}\mat{X}$ and quantizes
        each column of $\widetilde{\mat{X}}$ to a signed $b$-bit integer
        with per-column scale
        $s_{x_k} = \norm{\widetilde{\vec{x}}_k}_{\infty} / q_{\max}$,
        where $q_{\max}$ is chosen to ensure inner-product accumulator stays within $\rng$.
    \item $\cli$ and $\gpu$ run the \emph{Online} phase of
        $\Pi_{\textsf{Sec-Approx-MatMul}}$ on $(\hat{\mat{W}}, \hat{\mat{X}})$
        over the integer ring~$\rng$.
    \item $\cli$  recovers
        $\hat{\mat{Y}} \approx \hat{\mat{W}}\hat{\mat{X}} \in \rngMat{m}{l}$ up to the protocol noise of~\Cref{eq:approx-error}, and
        dequantizes entry-wise back to FP from
        $Y_{j,k} \leftarrow s_{w_j} s_{x_k}  \hat{Y}_{j,k}$,
        and adds the bias to output $\mat{Y} \leftarrow \mat{Y} + \vec{b}$.
\end{enumerate}

\end{tcolorbox}
\caption{Secure linear layer with input/outputs in the floating-point domain.}
\label{proto:secure-linear}
\end{figure*}

Our cryptographic protocol $\Pi_{\textsf{Sec-Approx-MatMul}}$ 
works over the integer ring domain. In order to leverage this to 
realize a secure forward pass over an entire model, we 
must instantiate a secure MatMul outsourcing at each linear layer,
and bridge the floating point domain to the inner, integer ring domain.

A common approach in cryptographic protocols (\Cref{sec:related})
is to cast to the fixed point domain for the secure linear computation. 
These works do not consider large modern architectures, 
for which fixed-point quantization results in destructive error propagation 
across the many sequential model layers (for 70B models: $\approx$ 80 blocks, each with 4 linear layers), as illustrated in~\Cref{fig:layer-cosine-sim-qwen72b-nohad}.
We show this arises from a particular sensitivity of fixed-point arithmetic 
to large floating-point ranges.

Thus, we adapt 
an approach from model quantization literature~\cite{ashkboos2024quarot,liu2024spinquant,wu2025polarquant}, 
that applies a random Hadamard rotation to activation and
model weights, which only induces a $O(n \log(n))$ runtime cost. 
We empirically demonstrate that this approach
results in non-destructive error propagation across large (70B)
models, and offer accuracy that is comparable with 
standard 4-bit and 8-bit quantization schemes.
Our resulting secure-linear protocol $\Pi_{\textsf{Sec-Linear}}$ is shown in~\Cref{proto:secure-linear}.

\subsection{Secure forward pass from $\Pi_{\textsf{Sec-Linear}}$}

The secure forward pass replaces each
linear layer with an invocation of $\Pi_{\textsf{Sec-Linear}}$ and evaluates every non-linear layer
locally on the trusted client.
A full secure forward pass on the model can be described as follows.
\begin{enumerate}
    \item $\cli$ holds the BF16 activation $\mat{X}_{1}$.
    \item For $\ell = 1, \ldots, L$: $\cli$ and $\gpu$ jointly invoke 
    \[
    \mat{Y}_{\ell} \leftarrow
    \Pi_{\textsf{Sec-Linear-Layer}}(\mat{W}_{\ell}, \mat{X}_{\ell})
    \]
    returning a BF16 approximation of
    $\mat{W}_{\ell}\mat{X}_{\ell} + \vec{b}_{\ell}$ to $\cli$;
    
    $\cli$ then computes
    non-linear $\mat{X}_{\ell+1} \leftarrow \phi_{\ell}(\mat{Y}_{\ell})$ locally.
    \item $\cli$ outputs activation $\mat{X}_{L+1}$ in BF16.
\end{enumerate}

Security and client efficiency follows by sequential composition of the per-layer protocol - we 
detail asymptotic efficiency in~\Cref{sec:forward-pass-complexity}
for a single secure forward-pass.
The per-invocation error was previously
bounded in~\Cref{thm:total-error-bound},
but is naturally dependent on distributions of
actual model weights and activations.

\begin{figure}
  \centering
  \begin{tikzpicture}
\begin{groupplot}[
    group style={group size=2 by 1, horizontal sep=1.6cm},
    xlabel={Layer index ($\ell$)},
    grid=major,
    grid style={gray!30},
    width=4.4cm,
    height=3.7cm,
    every axis plot/.append style={line width=0.7pt},
    xmin=0, xmax=79,
    tick label style={font=\scriptsize},
    label style={font=\small},
]

\nextgroupplot[
    ylabel={Cosine similarity},
    ymin=-0.15, ymax=1.05,
    ytick={-0.1,0.0,0.5,1.0},
]
\addplot[red!75!black, mark=none] coordinates {
    (0,0.9999) (1,0.9990) (2,-0.0364) (3,-0.0819) (4,-0.0669) (5,-0.0666)
    (6,-0.0793) (7,0.0341) (8,0.0031) (9,0.0235) (10,0.0164) (11,0.0148)
    (12,0.0272) (13,0.0393) (14,0.0374) (15,0.0265) (16,0.0271) (17,0.0190)
    (18,0.0165) (19,0.0120) (20,0.0079) (21,-0.0046) (22,-0.0109) (23,0.0277)
    (24,0.0899) (25,0.1392) (26,0.1616) (27,0.1724) (28,0.1797) (29,0.1881)
    (30,0.1770) (31,0.1774) (32,0.2051) (33,0.2057) (34,0.2025) (35,0.2018)
    (36,0.1996) (37,0.1796) (38,0.1840) (39,0.1870) (40,0.1619) (41,0.1633)
    (42,0.1656) (43,0.1442) (44,0.1032) (45,0.1159) (46,0.0945) (47,0.0799)
    (48,0.0537) (49,0.0216) (50,0.0135) (51,0.0117) (52,-0.0039) (53,0.0085)
    (54,-0.0066) (55,-0.0111) (56,-0.0321) (57,-0.0121) (58,-0.0072) (59,0.0326)
    (60,0.0329) (61,0.0188) (62,0.0378) (63,0.0411) (64,0.0641) (65,0.1087)
    (66,0.1055) (67,0.1005) (68,0.1136) (69,0.1043) (70,0.1150) (71,0.1231)
    (72,0.1386) (73,0.1551) (74,0.1875) (75,0.1949) (76,0.1795) (77,0.2159)
    (78,0.3666) (79,0.1813)
};
\addplot[gray!60, dashed, forget plot] coordinates {(0,1) (79,1)};

\nextgroupplot[
    ylabel={Relative $\ell_2$ error},
    ymin=0, ymax=6,
    ytick={0,1,2,3,4,5,6},
]
\addplot[red!75!black, mark=none] coordinates {
    (0,0.0151) (1,0.0458) (2,5.2367) (3,4.7783) (4,5.0560) (5,5.4145)
    (6,4.9999) (7,4.6295) (8,4.3956) (9,4.1113) (10,3.8402) (11,3.7365)
    (12,3.4766) (13,3.2893) (14,3.0429) (15,3.1046) (16,2.9477) (17,2.7915)
    (18,2.7286) (19,2.6419) (20,2.4430) (21,2.3223) (22,2.2443) (23,2.0436)
    (24,1.9068) (25,1.7999) (26,1.7441) (27,1.7174) (28,1.7548) (29,1.7013)
    (30,1.6828) (31,1.6734) (32,1.6515) (33,1.6589) (34,1.7034) (35,1.6909)
    (36,1.6625) (37,1.6408) (38,1.6082) (39,1.5949) (40,1.5909) (41,1.5789)
    (42,1.5450) (43,1.5187) (44,1.5198) (45,1.5288) (46,1.5224) (47,1.5455)
    (48,1.5362) (49,1.5998) (50,1.7811) (51,1.8213) (52,1.8555) (53,1.8572)
    (54,1.8972) (55,1.8947) (56,1.8574) (57,1.8529) (58,1.8247) (59,1.7977)
    (60,1.7595) (61,1.7302) (62,1.6607) (63,1.6438) (64,1.6779) (65,1.6427)
    (66,1.6265) (67,1.6392) (68,1.6392) (69,1.6226) (70,1.5896) (71,1.5619)
    (72,1.5087) (73,1.4847) (74,1.4709) (75,1.4585) (76,1.4445) (77,1.4892)
    (78,1.3752) (79,3.6272)
};

\end{groupplot}
\end{tikzpicture}
  \caption{Per-layer error accumulation for Qwen2.5-72B \emph{without}
  Hadamard rotation, INT16 quantization at $\sigma = 0$ (no protocol
  noise). Cosine similarity/relative $\ell_2$ error vs BF16 reference collapses after 2 layers motivating random Hadamard rotation in our protocol.}
  \label{fig:layer-cosine-sim-qwen72b-nohad}
\end{figure}

\paragraph{Straw-man solution: error explodes with outliers}
We show naive fixed-point scaling over 16-bits with 32-bit accumulations
results in catastrophic error in~\Cref{fig:layer-cosine-sim-qwen72b-nohad}
in the early layers. 
For the chosen Qwen2.5-72B model, both cosine similarity with the full-precision, BF16 activation
is shown, as well as the relative $\ell_2$ error to the BF16 reference, 
for 20 forward-pass runs on randomly 
sampled 512-token windows from Wikitext-2.
Clearly, the noise accumulation from fixed-point quantization alone
causes accuracy to collapse in~\Cref{fig:layer-cosine-sim-qwen72b-nohad}.
Whilst the cosine and $\ell_2$ show a mild tendency for 
recovery in the medium layers (RMSNorm, Non-linear activations can dampen error),
this does not suffice for any useful forward pass computation.

We explain this phenomena analytically. 
In $\Pi_{\textsf{Sec-linear-layer}}$ (\Cref{proto:secure-linear}),
each row of $\mat{W}$ (resp.\ each column of $\mat{X}$) is independently
scaled by its peak magnitude before quantization to a $b$-bit integer range
$[-q_{\max}, q_{\max}]$.
Concretely, for a row $\vec{w}$ of $\mat{W}$, the
quantization scale is
$s_{w} = \norm{\vec{w}}_{\infty} / q_{\max}$,
and the quantized representation is
$\hat{\vec{w}} = \mathrm{round}(\vec{w} / s_{w})$.
Activation columns are scaled analogously with
$s_{x} = \norm{\vec{x}}_{\infty} / q_{\max}$.
The window $q_{\max}$ must also be chosen so that the
inner-product accumulator does not wrap around the $\kappa$-bit ring:
each entry $(\hat{\mat{W}}\hat{\mat{X}})_{j,k}$ has magnitude at most
$n\,q_{\max}^{2}$, so we require $n\,q_{\max}^{2} \le 2^{\kappa-1}$.

After the protocol computes the integer-domain result
$\hat{\mat{Y}} = \hat{\mat{W}} \hat{\mat{X}} + \text{noise}$,
dequantization recovers the approximate floating-point output by
rescaling each entry:
$Y_{j,k} = \hat{Y}_{j,k} \cdot s_{w_j} \cdot s_{x_k}$.
The protocol noise, which has fixed variance in the integer domain,
is therefore \emph{amplified} by the product of scales $s_{w_j} \cdot s_{x_k}$.
When either $\vec{w}$ or $\vec{x}$ contains outlier entries,
$\norm{\cdot}_{\infty}$ is large relative to $\norm{\cdot}_{2}$,
inflating the scales and thus the float-domain noise.

In particular, 
activation vectors in transformer models are inherently concentrated:
a small number of entries carry disproportionately large magnitude
while the majority remain close to zero.
This is not an artefact but a structural necessity - nonlinearities
such as softmax concentrate probability mass on a few channels,
ReLU and its variants zero out negative entries.
Activation outliers of $100\times$ the median magnitude
are common in large language models~\cite{dettmers2022llmint8}.



\paragraph{Random Hadamard rotation}
Model quantization research~\cite{ashkboos2024quarot,liu2024spinquant,wu2025polarquant}
has proposed techniques to \emph{rotate} the activation and spread
the activation energy of individual channels more uniformly
across all internal model dimensions. We adopt this technique
for $\Pi_{\textsf{sec-linear-layer}}$, and show this minimizes
the resulting error in the floating point output domain,
even if intermediary computation is emulated in fixed-point.

Let $\mat{H} \in \reals^{n \times n}$ denote the normalized
Walsh--Hadamard matrix ($\mat{H}\mat{H}^{\top} = \mat{I}$) and let
$\mat{D} = \mathrm{diag}(\vec{d})$ where each $d_{i} \in \{+1,-1\}$
is drawn uniformly at random.
Then, we denote the randomized rotation $\mat{R} = \mat{H}\mat{D}$.
Both factors are orthogonal: $\mat{D}^{\top}\mat{D} = \mat{I}$ because
each diagonal entry squares to $+1$, and
$\mat{H}^{\top}\mat{H} = \mat{I}$ by the normalization above.
Their product is therefore also orthogonal,
\begin{equation*}
    \mat{R}^{\top}\mat{R}
    = \mat{D}^{\top}\mat{H}^{\top}\mat{H}\mat{D}
    = \mat{I},
\end{equation*}
which gives us two properties we rely on below.
First, the matrix product is preserved exactly,
\begin{equation}
\label{eq:had-product-preserved}
    \mat{W}\mat{X}
    = (\mat{W}\mat{R}^{\top})(\mat{R}\mat{X})
    = \mat{W}'\mat{X}'
\end{equation}
where $\hat{\mat{W}} = \mat{W}\mat{R}^{\top}$ and $\hat{\mat{X}} = \mat{R}\mat{X}$,
so applying the secure matrix multiplication protocol to
$\hat{\mat{W}}$ and $\hat{\mat{X}}$ recovers the same output as the original product.
Second, the rotation preserves $\ell_2$ norms
($\norm{\mat{R}\vec{w}}_2 = \norm{\vec{w}}_2$), so only the $\ell_\infty$
norm is changed by the rotation.
The rotation can be applied in $O(n \log n)$ via the fast Walsh--Hadamard
transform~\cite{fino1976unified}.

We illustrate in~\Cref{sec:error-post-rot} 
how this enables the floating-point domain 
error variance induced by our protocol (\Cref{eq:approx-error}) 
to be reduced by factor 
$(\norm{\vec{w}_{j}}_{2}^{2} \log n)/(\norm{\vec{w}_{j}}_{\infty}^{2} n)$.

In~\Cref{sec:error-single-layer}
and~\Cref{sec:error-propagation}
we empirically investigate 
the error contribution by a single securely outsourced linear layer and the
error accumulation over all secure linear layers in selected models respectively.

\subsection{Error from a single secure linear layer}
\label{sec:error-single-layer}

\begin{table}
  \centering
  \footnotesize
  \resizebox{\linewidth}{!}{%
  \begin{tabular}{@{}l ccc ccc ccc@{}}
    \toprule
    & \multicolumn{3}{c}{Qwen2.5-32B}
    & \multicolumn{3}{c}{Qwen2.5-72B}
    & \multicolumn{3}{c}{LLaMA-3-70B} \\
    \cmidrule(lr){2-4} \cmidrule(lr){5-7} \cmidrule(lr){8-10}
    Configuration
        & mean & med. & max & mean & med. & max & mean & med. & max \\
    \midrule
    INT16-\textsf{rot}-$\sigma=0.0$
        & 0.47 & 0.50 & 0.96 & 0.48 & 0.44 & 1.65 & 0.53 & 0.47 & 46.4 \\
    INT16-\textsf{rot}-$\sigma=0.5$
        & 0.66 & 0.66 & 1.48 & 0.77 & 0.65 & 2.42 & 0.85 & 0.71 & 78.7 \\
    INT16-\textsf{rot}-$\sigma=1.0$
        & 0.95 & 0.91 & 2.48 & 1.20 & 0.98 & 3.55 & 1.32 & 1.07 & 122.5 \\
    INT16-\textsf{rot}-$\sigma=1.5$
        & 1.29 & 1.22 & 3.55 & 1.68 & 1.36 & 4.97 & 1.85 & 1.49 & 179.7 \\
    INT16-\textsf{rot}-$\sigma=2.0$
        & 1.65 & 1.56 & 4.66 & 2.18 & 1.76 & 6.39 & 2.40 & 1.93 & 232.4 \\
    \bottomrule
  \end{tabular}%
  }
  \caption{\revised{Error from a \emph{single} outsourced matrix multiplication
    under $\Pi_{\textsf{Sec-Linear}}$, in relative $\ell_2$ error
    $\lVert \widetilde{Y}-Y\rVert_2 / \lVert Y\rVert_2$ (\%) of one
    outsourced instance of $Y=\mat{W}\mat{X} + b$ 
    evaluated on \emph{clean} BF16 inputs $\mat{X}$.     
    We report the mean, median, and maximum over 
    all model linear projections (attention/MLP/LM head).} 
    }
  \label{tab:local-matmul-error}
\end{table}

\begin{revisedblock}
We isolate the error introduced by a \emph{single} outsourced
matrix multiplication in $\Pi_{\textsf{Sec-Linear}}$,
independent of how error propagates through the
network. For each weight matrix $\mat{W}$ we take the reference activation
$\mat{X}$ from a clean BF16 forward pass (no MOSAIC), compute the outsourced product
$\widetilde{Y}$ under $\Pi_{\textsf{Sec-Linear}}$, and compare it to
the exact product $Y=\mat{W}\mat{X}$ via the relative $\ell_2$ error
$\lVert \widetilde{Y}-Y\rVert_2 / \lVert Y\rVert_2$. Because the input is
the clean reference in every case, this measures per-multiplication error
in isolation, with no accumulation across layers.

\Cref{tab:local-matmul-error} reports the mean, median, and maximum of
this error across all weight projections of
Qwen2.5-32B and -70B as well as LLaMA-370B class models at
increasing protocol noise $\sigma$. A single outsourced multiplication and additive bias
is highly accurate: the typical (median) error is $1$--$2\%$ across all
models, and for both Qwen models even the worst-case projection stays
below $6.4\%$ at even $\sigma=2.0$. The large LLaMA-3 maxima are concentrated
in the early-layer value projections (the layer-$0$ \textsf{v\_proj}
alone accounts for the reported maximum); excluding the \textsf{v\_proj}
family, the LLaMA-3 worst case is below $4\%$ for $\sigma\le 1.0$. At this noise level, these
outliers do not propagate and
end-to-end accuracy is
ultimately preserved (\Cref{tab:model-accuracy}); only at $\sigma=2.0$ does LLaMA-3
degrade. Still,
LLaMA-3-70B is known to be very sensitive to
similar noise introduced by standard model weight quantization~\cite{qin2024uniqueness}.

In~\Cref{sec:error-propagation}, we illustrate that model accuracy is well-preserved,
even when per-layer error contributions propagate and accumulate across the entire model.

\end{revisedblock}



\subsection{Error accumulation across layers}
\label{sec:error-propagation}

\begin{figure*}
  \centering
  \input{figures/layer_cosine_sim_qwen_llama3.tex}
  \caption{Per-layer error accumulation: 
    Consine similarity between activations of quantized and reference
    forward-pass (BF16) for 70B-class models (left: Qwen2.5-72B;
    right: Llama-3-70B).
    Whilst the two models share similar architectures, Qwen2.5-70B is more resilient against error accumulation. 
    Still, for protocol noise std-deviation range $\sigma \in [0.5,1.0]$, 
    our protocol can still
    be parametized for 140-bit security whilst resulting in well-formed error accumulation in both models.}
  \label{fig:layer-cosine-sim-qwen-llama3}
\end{figure*}

Formal end-to-end error accumulation analysis
requires quantifying 
the Lipschitz constant of each transformer sub-layer,
which determines how much a perturbation at one layer
is amplified before reaching the next.
Scaman and Virmaux~\cite{scaman2018lipschitz}
show that computing the exact Lipschitz constant of even a two-layer ReLU network is NP-hard
and the situation for transformer layers is strictly harder.
Thus, we rely on empirical study of error propagation in models, and compare accumulated error induced by many layers of $\Pi_{\textsf{Sec-Linear-Layer}}$ with a baseline
of commonly used model quantization techniques intended for efficiency rather than security.
We show comparable model accuracy for
protocol noise necessary for 140-bit protocol security,
as parameterized in~\Cref{ex:lwe-lpn-parameterisation}.

\paragraph{Quantization baselines (4-bit NF4, INT8, INT16-\textsf{rot}-$\sigma$).}
We compare our protocol against two standard post-training
quantization schemes as reference points.
\emph{4-bit NF4}~\cite{dettmers2023qlora} uses a non-uniform 4-bit
\emph{NormalFloat} grid whose levels track the quantiles of a
standard normal distribution, exploiting the approximately Gaussian
distribution of transformer weights.
\emph{LLM.int8()}~\cite{dettmers2022llmint8} represents weights and
activations as per-channel-scaled 8-bit integers in $[-127, 127]$,
with matrix multiplications accumulated in higher precision, as well as weight and activation
outliers.

We refer to (1) INT4, (2) INT8 and (3) INT16-$\textsf{rot}$-$\sigma$ 
as running model inference with 4-bit NF4, LLM.int8() 
and our protocol parameterized with gaussian stdev parameter $\sigma$
that induces a random hadamard rotation on activation and model weights.
Note that for $\sigma = 0$, our protocol converges to 
the same precision as 16-bit fixed-point with the additional
hadamard rotation treatment, with the latter strictly improving
the precision of the computation. Thus, we use 
the INT16+$\sigma$=0 baseline to refer to the \emph{optimistic} precision obtainable by prior work~(\Cref{sec:related})
that computes over integer rings or fields in cryptographic settings. 

\paragraph{Empirical error accumulation.}
We measure how per-layer error propagates and accumulates 
through the full forward
pass of 
Qwen2.5-32B with 64 model layers (\Cref{fig:layer-cosine-sim-qwen32b}),
and several 80-layer models, namely
LLaMA-3-70B (\Cref{fig:layer-cosine-sim-llama3-70b}), Qwen2.5-72B (\Cref{fig:layer-cosine-sim-qwen72b}), and DeepSeek-R1-Distill-LLaMA-70B (\Cref{fig:layer-cosine-sim-deepseek-r1-70b}),
with different sensitivity to quantization error.
We report well-formed, 
non-destructive error accumulation across
the forward pass of models for our protocol (INT16+$\sigma$) for $\sigma < 1.5$, 
with final layer error comparable or improving upon INT4 and INT8 quantization (without secure outsourcing).

For each model, we run inference passes on 20 WikiText-2 contexts
(512 tokens each): a \emph{clean} reference at BF16, and a
\emph{perturbed} pass in which weights and activations are
INT16-quantized, rotated via a random Hadamard matrix
(\cref{eq:had-product-preserved}), and sub-Gaussian noise $\mat{E}_{w}, \mat{E}_{x}$ at standard deviation $\sigma$ is
sampled and induced by our protocol as described in~\Cref{eq:approx-error}.
After each transformer model layer $\ell$ we capture the
\emph{residual stream}~$\vec{x}_{\ell}$,
the $d_\textsf{model}$-dimensional
hidden state defined by~\cref{eq:transformer-layer} that is both the
output of block~$\ell$ and the input to block~$\ell+1$, from both \emph{clean} and \emph{perturbed} passes.
We report, 
averaged over all token positions, the per-layer cosine similarity
\begin{equation}
    \mathrm{cos}(\ell)
    \;=\;
    \frac{\langle \vec{x}_{\ell}^{\mathrm{bf16}},\vec{x}_{\ell}^{\mathrm{int16-\textsf{rot}-\sigma}}\rangle}
         {\norm{\vec{x}_{\ell}^{\mathrm{bf16}}}_{2}
          \norm{\vec{x}_{\ell}^{\mathrm{int16-\textsf{rot}-\sigma}}}_{2}}
\end{equation}
and relative $\ell_2$ error
$\norm{\vec{x}_{\ell}^{\mathrm{int16-\textsf{rot}-\sigma}} - \vec{x}_{\ell}^{\mathrm{bf16}}}_2
/ \norm{\vec{x}_{\ell}^{\mathrm{bf16}}}_2$.
Each figure shows error accumulation for protocol noise 
std-deviation
$\sigma \in \{0.5, 0.7, 1.0, 1.5, 2.0\}$ together with three
post-training quantization baselines (NF4, INT8) 
run through
the identical measurement pipeline without rotation or injected noise.
At moderate $\sigma \leq 1.0$ the protocol curves track INT8 and
substantially outperform NF4 throughout the forward pass, confirming
that the per-layer protocol error accumulates non-destructively
across depth. 



\Cref{fig:layer-cosine-sim-qwen-llama3} in the main body
highlights differences in measured cosine similarity
between Qwen2.5-72B and Llama3-70B.
We highlight error accumulation behaviors that
differ between models below. 

\paragraph{Quantization-friendly models} 
The cosine similarity and relative $\ell_2$ error
propagation for chosen models Qwen2.5-72B  
and Qwen2.5-32B
are well-behaved, as illustrated in 
\Cref{fig:layer-cosine-sim-qwen72b}
and
\Cref{fig:layer-cosine-sim-qwen32b}
respectively in the Appendix.
All INT16-$\textsf{rot}$-$\sigma$ runs show error propagation that
match or is less than INT4 for all $\sigma \leq 2.0$,
and improve on INT8 for $\sigma \leq 1.0$.
The earlier layers in Qwen2.5-72B show a higher
sensitivity to our protocol error, but this effect is
non-destructive, as it recovers in both 
cosine-similarity and $\ell_2$ errors in later model 
layers. We attribute this heuristically
to both RMSNorm and channel-wise activation functions, 
that are known to contribute well-formed error accumulation.
For the chosen protocol parameter $\sigma = 0.5$
(\Cref{ex:lwe-lpn-parameterisation}), this error accumulation
is strictly better than INT4 or INT8 quantization schemes
across all layers.

\paragraph{Quantization unfriendly models} 
Llama-3-70B (and derivative DeepSeek-R1-Distill-LLaMA-70B)
are known to be sensitive to 
quantization~\cite{qin2024uniqueness}. 
Their cosine similarity and relative $\ell_2$
error accumulation is shown in \Cref{fig:layer-cosine-sim-llama3-70b} and \Cref{fig:layer-cosine-sim-deepseek-r1-70b}
respectively.
Indeed, in our
experiments, INT8 and INT16-$\textsf{rot}$-$\sigma=2.0$
show destructive error propagation across layers
for both Llama-3-70B and a derivative model DeepSeek-R1-Distill-LLaMA-70B.
The relative $\ell_2$ error is generally large
for both INT4 and INT16-$\textsf{rot}$-$\sigma$ for $\sigma \geq1.0$. 
For Llama-3-70B (\Cref{fig:layer-cosine-sim-llama3-70b}), we propose to parameterize our
protocol with gaussian samples with
parameter $\sigma = 0.5$ for 140-bit security (\Cref{ex:lwe-lpn-parameterisation}). 
For DeepSeek-R1-Distill-Llama-70B (\Cref{fig:layer-cosine-sim-deepseek-r1-70b}), 
the relative $\ell_2$ error
at the final layer remains very high for all quantizations, making the model a border-line or even suboptimal candidate
for quantized computation.

\subsection{Accuracy of outsourced inference}
\label{sec:model-accuracy}

\Cref{tab:model-accuracy} reports model accuracy of an end-to-end
forward pass under our protocol against two reference quantization
baselines (LLM.int8() and 4-bit NF4), on two 70B-class and one 32B model
spanning the quantization spectrum: Qwen2.5-72B and Qwen2.5-32B
(quantization-friendly) and LLaMA-3-70B (quant.-unfriendly).
Each model is evaluated on WikiText-2 perplexity (PPL, lower is
better) and on HumanEval pass@1 (HE, higher is better, $k=1$).

For the quantization-friendly Qwen2.5-72B,
all five INT16-\textsf{rot}-$\sigma$ 
configurations stay within $2\%$ of the
BF16 reference on PPL across the full $\sigma \in [0, 2]$ range, and
HE on Qwen2.5-72B tracks the BF16 baseline to within $\pm 4$ points across
$\sigma \in [0, 2]$. For Qwen2.5-32B, a potential drop in 
HE scores may be observed for  $\sigma =0.5/1.0$,
that recovers at higher noise levels.

For quantization-unfriendly LLaMA-3-70B,
our protocol matches PPL scores of BF16
to within $3\%$ for $\sigma \le 1.0$
and HE drops gracefully from $54.6$ (BF16) to $45.7$ at $\sigma = 1.0$;
beyond $\sigma = 1.0$ both metrics degrade visibly - PPL grows to
$2.055$ at $\sigma = 1.5$ and $3.233$ at $\sigma = 2.0$.
Crucially, the noise parameter required for our $140$-bit-secure
parameterisation in~\Cref{ex:lwe-lpn-parameterisation}
sits at
$\sigma = 0.5$; for every model in the table
there exists a secure parameterization of our protocol 
which performs at or above NF4/INT8 baselines.

The comparison with INT8 on LLaMA-3-70B is informative as a
\emph{challenging} reference model: INT8
fails catastrophically on this model (PPL $= 425.97$, HE $= 10.06$) - a sensitivity our protocol does not exhibit
because the random Hadamard rotation
(\Cref{eq:had-product-preserved}) flattens the same outlier
distribution that defeats LLM.int8() quantization. 
For $\sigma = 0.5$ (140-bit security), our protocol retains the accuracy of BF16
on both PPL and HE benchmarks.
\begin{table}
  \centering
  \footnotesize
  \resizebox{\linewidth}{!}{%
  \begin{tabular}{@{}l cc cc cc@{}}
    \toprule
    & \multicolumn{2}{c}{Qwen2.5-72B}
    & \multicolumn{2}{c}{LLaMA-3-70B}
    & \multicolumn{2}{c}{Qwen2.5-32B} \\
    \cmidrule(lr){2-3} \cmidrule(lr){4-5} \cmidrule(lr){6-7}
    Configuration
        & PPL & HE & PPL & HE & PPL & HE \\
    \midrule
    BF16 (reference)
        & 2.201 & 53.96 &  1.821 & 54.57 & 2.722 & 46.34 \\
    \midrule
    INT16-\textsf{rot}-$\sigma=0.0$
        & 2.201 & 54.57 &  1.827 & 51.83 & 2.723 & 46.04 \\
    INT16-\textsf{rot}-$\sigma=0.5$
        & 2.204 & 53.66 &  1.834 & 53.96 & 2.724 & 42.38 \\
    INT16-\textsf{rot}-$\sigma=1.0$
        & 2.205 & 50.91 &  1.864 & 45.73 & 2.726 & 43.29 \\
    INT16-\textsf{rot}-$\sigma=1.5$
        & 2.221 & 51.83 &  2.055 & 35.98 & 2.734 & 43.90 \\
    INT16-\textsf{rot}-$\sigma=2.0$
        & 2.244 & 53.35 &  3.233 & --    & 2.742 & 47.56 \\
    \midrule
    INT8 (LLM.int8())
        & 2.213 & 49.70 & 425.975 & 10.06 & 2.736 & 48.48 \\
    NF4 (4-bit)
        & 2.303 & 55.79 &  2.247 & 45.43 & 2.894 & 47.56 \\
    \bottomrule
  \end{tabular}%
  }
  \caption{Model accuracy under quantization baselines (INT8, NF4)
    and the protocol's INT16-rot-$\sigma$ noise model at increasing
    noise levels. Two metrics per model: WikiText-2 perplexity
    (PPL, lower is better, sequence length 2048, stride 512) and
    HumanEval pass@1 (HE, higher is better, $k=1$).}
  \label{tab:model-accuracy}
\end{table}

\section{Confidential AI outsourcing}
\label{sec:conf-ai-casestudies}

\begin{revisedblock}
Here, we show how a
small trusted computing base (TCB) tasked with 
AI computations in a modern AI data-center
can support larger AI computations that exceed
its trusted resources by adding powerful, untrusted GPU accelerators. In this setting, 
we demonstrate that MOSAIC is not bottle-necked by communication in the modern AI data center setting where individual GPU's are networked with fast 
interconnect technologies.

In~\Cref{sec:application-remote} we 
additionally illustrate a remote application over 
public networks,
which only requires an LLM classification, 
and does not require communication-intensive
autoregressive decoding.




\end{revisedblock}

\subsection{Secure Outsourcing in an AI Datacenter}

\begin{revisedblock}
The common approach for confidential computation 
in datacenters
is to deploy applications on a trusted computing base (TCB),
realized by confidential computing technologies offered by all major cloud providers today (SGX/TDX/NCC).
Whilst confidential hardware architectures now extend to
GPU~\cite{gpu2025nvidia} hardware,
they are limited in scope, and do 
not offer the same scaling and performance benefits as modern datacenter architectures optimized for AI inference. Confidential computing is also more expensive due to specialized chips, dedicated hardware resources, attestation key management, enhanced physical security, and other factors. 

\begin{figure}[h]
\centering
\begin{tikzpicture}[
    >=stealth,
    gpu/.style={draw, rounded corners=1.5pt, minimum width=1.15cm,
        minimum height=0.32cm, font=\tiny, align=center, inner sep=1pt},
    trustedgpu/.style={gpu, fill=cTrusted!20, draw=cTrusted!80!black},
    poolgpu/.style={gpu, minimum width=0.62cm, fill=cUntrusted!20, draw=cUntrusted!80!black},
    tcbbox/.style={draw=cTrusted!70!black, dashed, rounded corners=2pt, inner sep=1.5pt},
    link/.style={->, semithick, draw=cComm!70!black},
]

\node[trustedgpu] (T1) at (0, 0.70) {Trusted GPU 1};
\node[trustedgpu] (T2) at (0, 0.00) {Trusted GPU 2};
\node[trustedgpu] (T3) at (0,-0.70) {Trusted GPU 3};
\begin{scope}[on background layer]
\node[tcbbox, fit=(T1)] (b1) {};
\node[tcbbox, fit=(T2)] (b2) {};
\node[tcbbox, fit=(T3)] (b3) {};
\end{scope}
\node[font=\tiny, cTrusted!80!black, anchor=south, yshift=2pt]
      (tlabel) at (b1.north) {$k$ independent TCBs};

\foreach \r/\y in {1/0.66, 2/0.22, 3/-0.22, 4/-0.66} {
  \node[poolgpu] (L\r) at (3.45,\y) {GPU};
  \node[poolgpu] (R\r) at (4.22,\y) {GPU};
}
\begin{scope}[on background layer]
\node[draw=cUntrusted!70!black, rounded corners=2pt, fit=(L1)(R4), inner sep=2.5pt]
      (pool) {};
\end{scope}
\node[font=\tiny, cUntrusted!85!black, anchor=south, yshift=2pt]
      (plabel) at (pool.north) {Untrusted GPU pool};

\draw[link] (T1.east) -- (T1.east -| pool.west);
\draw[link] (T2.east) -- (T2.east -| pool.west);
\draw[link] (T3.east) -- (T3.east -| pool.west);
\node[font=\tiny, gray, fill=white, inner sep=0.6pt] at (1.75, 0.40) {RDMA interconnects};

\begin{scope}[on background layer]
\node[draw=cComm!70!black, rounded corners=3pt, thin,
      fit=(b1)(b3)(pool)(tlabel)(plabel), inner sep=4.5pt,
      label={[font=\tiny, cComm!90!black, anchor=south west]north west:Datacenter}]
      (dc) {};
\end{scope}

\end{tikzpicture}
\caption{\revised{Datacenter outsourcing topology: $k$ independent trusted clients,
         each a small TCB, elastically share a common pool of untrusted GPUs,
         secure outsourcing AI computations  over fast 
         interconnects.}}
\label{fig:datacenter-topology}
\end{figure}

We emphasize that frontier AI inference needs more than a single GPU cluster. A typical inference task is distributed across phase (prefill/decode), model layer (sharding) and time (interleaving between accelerators) to maximize utilization~\cite{wu2026dualpath,sun2026multi}. Data-center accelerators are connected via remote direct memory access (RDMA)~\cite{spheron2026gpu} technologies, featuring single-digit microsecond latency between GPUs.
Such connectivity is enabled by 
fast intra-rack (e.g. NVLink, PCIe) and intra-data-center interconnects (e.g. Infiniband).

Such distributed, AI-native architectures are not directly compatible with the currently available confidential computing technologies. Offering AI inference at scale would imply extending the TCB to the heterogeneous pool of accelerators and their full networking stack, requiring complex attestation key management, enhanced physical security measures, and so on. Each user of secure AI computation would need their models and requests securely isolated from others, while cloud providers would ideally be free to assign workloads across the distributed data center to efficiently use available resources. Secure outsourcing approaches like MOSAIC offer an alternative approach that combines (a) the benefits of modern (heterogeneous and distributed) AI inference architecture with (b) a small TCB for each user (\Cref{fig:datacenter-topology}).
Confidential (and potentially attested)
compute represents the scarce resource,
and the value of our protocol is that a trusted GPU can
elastically capture additional throughput from untrusted accelerators
\emph{without} expanding the trusted computing base.
\end{revisedblock}


\subsection{Implementation}
\revised{We describe our emulation of 32-bit integer ring arithmetic, the deployment over GPU interconnects, the implemented protocol configuration and techniques for sharding models across GPUs.}

\paragraph{Emulation approach.}
We emulate a real-world implementation with current, limited 
hardware support.
Our $\Pi_{\textsf{Sec-Approx-Matmul}}$ protocol 
is parameterised over a
32-bit integer ring, which modern AI accelerators do not natively
support. Concretely, on Nvidia GPUs the high-throughput Tensor cores
that dominate LLM matmul performance accelerate FP16/BF16, FP8, and
INT8 datatypes, but \emph{not} 32-bit integer arithmetic. Native 32-bit
integer is available only on the general-purpose
CUDA pipeline, which delivers roughly an order of magnitude lower
throughput than the Tensor cores and is therefore impractical for the
matrix sizes involved in 70B-class models.

To remain on the Tensor-core pipeline, we emulate each 32-bit integer
multiply--accumulate as a sum of INT8 multiply--accumulates, splitting
each 32-bit operand into four 8-bit limbs and reassembling the result
in the trusted client. 
This emulation costs a constant overhead per ring
multiplication, a slowdown relative to a
hypothetical Tensor-core kernel with native INT32 support, but still
markedly faster than falling back to the CUDA.

A key detail is that the limbs are unsigned bytes
($[0, 255]$) but the Tensor-core kernel accepts only
signed bytes ($[-128, 127]$). We bridge this by
subtracting $128$ from each limb before the matmul, which slides
every value into signed INT8 range without losing any information.
After the kernel returns, we undo the bias algebraically: expanding
$(a-128)(b-128) = ab - 128(a+b) + 128^{2}$ shows that the unsigned
product $\sum_p a_p b_p$ equals the signed product
$\sum_p (a_p - 128)(b_p - 128)$ plus a correction term that depends
only on the row and column sums 
inducing $O(m+n)$ to apply. The final reduction to
$\mathbb{Z}_{2^{32}}$ requires no extra work either, since native
INT32 addition and bit-shift discard bits above 
position 31 automatically.


The split between the two pipelines is also reflected
in the protocol split between trusted client and
untrusted GPU.
The untrusted GPU performs the dense
$O(mnl)$ masked MatMul and runs strictly on the Tensor-core
pipeline (via the INT8 emulation above). 
The trusted client, by
contrast, performs only low-rank dense and sparse operations of
inner rank $r_d$ or row Hamming weight $t$ (\Cref{sec:lwe-lpn-matmul});
these more general workloads 
cannot benefit
from Tensor-core acceleration, and thus client kernels are implemented for general-purpose CUDA pipeline at INT32 width.

\paragraph{Implementation with fast GPU interconnects.}

\begin{revisedblock}
We run our implementation on 70B models on Nvidia H200 with
141 GB VRAM connected with fast NVLink interconnects,
featuring 900 GB/s GPU-to-GPU transfer
speeds. Whilst similar latency can be achieved with RDMA-style interconnects across the data center in practice~\cite{wu2026dualpath,spheron2026gpu,sun2026multi}, 
on-demand GPUs offered by AI cloud providers 
accessible limit us to such inter-node interconnects.
\james{End of revision.}

For the 70B models we deploy, we use 1 trusted GPU,
that securely outsources computation to 3 untrusted GPUs,
where protocol communication is over the aforementioned
NVLink GPU-to-GPU interconnect. We measure end-to-end
prefill and decode for 70B-class models 
(LLaMA-3, Qwen-2.5) on a deployment with \emph{one} trusted GPU
acting as the trusted client and 3 untrusted GPUs collectively
serving the outsourced MatMul calls
via the secure forward pass
(\Cref{fig:datacenter-runtime-prefill-postccs,fig:datacenter-runtime-decode-postccs}). 

\paragraph{Protocol parameterization.}
We implement protocol parameters from~\Cref{ex:lwe-lpn-parameterisation} for 140-bit security,
but omit the recursive LPN-mask optimization for masking activations. 
This maintains security, but
saves memory for the client, as the layer-specific matrices $J_i$ in step 4 of Init W (\Cref{proto:approx-matmat-gpu})
no longer occupies client memory;
this slightly increases the cost of 
computing the activation mask $U_x = J_d R$ in step 1b of Init X at (model-independent)
rank of 1536 instead of 652.
Exhaustive parameter optimization is left
for future work as it is
highly hardware dependent.

\paragraph{Model sharding.}
We implement two parallelization techniques.
(1) We shard each masked model matrix across the 
3 GPUs, and broadcast each masked activation to all GPUs (model-matrix parallel);
the result is gathered by the trusted GPU thereafter.
(2) We shard the masked model by contiguous 
layer ranges across the untrusted GPUs (model-layer parallel).
\end{revisedblock}

\subsection{Evaluation results}

\begin{figure}[h]
\centering
\begin{tikzpicture}
\begin{axis}[
    width=8.8cm,
    height=5.0cm,
    ybar stacked,
    bar width=5pt,
    clip=false,
    ylabel={Latency (ms / token)},
    xtick={0.3,0.7,1.1,1.7,2.1,2.5,3.1,3.5,3.9,4.9,5.3,5.7,6.3,6.7,7.1,7.7,8.1,8.5},
    xticklabels={B,M,L,B,M,L,B,M,L,B,M,L,B,M,L,B,M,L},
    x tick label style={font=\tiny},
    enlarge x limits=0.03,
    ymin=0, ymax=800,
    ytick={0,200,400,600,800},
    grid=major,
    ymajorgrids=true,
    xmajorgrids=false,
    legend style={
        at={(0.5, 1.02)}, anchor=south,
        legend columns=2,
        font=\scriptsize, draw=none, fill=none,
        column sep=4pt, row sep=-2pt, inner sep=2pt,
    },
    legend cell align=left,
    every axis plot/.append style={thick},
]
\addplot[fill=green!55!black, draw=green!40!black] coordinates {
    (0.3,154.9) (0.7,0.0) (1.1,0.0)
    (1.7,158.1) (2.1,0.0) (2.5,0.0)
    (3.1,150.0) (3.5,0.0) (3.9,0.0)
    (4.9,190.9) (5.3,0.0) (5.7,0.0)
    (6.3,185.5) (6.7,0.0) (7.1,0.0)
    (7.7,197.4) (8.1,0.0) (8.5,0.0)
};
\addlegendentry{Local GPU baseline (no protocol)}
\addplot[fill=blue!70, draw=blue!80!black] coordinates {
    (0.3,0.0) (0.7,36.2) (1.1,37.5)
    (1.7,0.0) (2.1,36.5) (2.5,37.8)
    (3.1,0.0) (3.5,36.5) (3.9,37.7)
    (4.9,0.0) (5.3,40.9) (5.7,42.2)
    (6.3,0.0) (6.7,41.2) (7.1,42.5)
    (7.7,0.0) (8.1,41.2) (8.5,42.4)
};
\addlegendentry{Trusted GPU (mask/unmask + non-linear)}
\addplot[fill=orange!70, draw=orange!80!black] coordinates {
    (0.3,0.0) (0.7,422.0) (1.1,514.2)
    (1.7,0.0) (2.1,422.0) (2.5,514.6)
    (3.1,0.0) (3.5,422.1) (3.9,514.4)
    (4.9,0.0) (5.3,427.7) (5.7,520.5)
    (6.3,0.0) (6.7,428.3) (7.1,520.8)
    (7.7,0.0) (8.1,428.4) (8.5,520.5)
};
\addlegendentry{Untrusted GPU (remote MatMul)}
\addplot[fill=gray!40, draw=gray!70!black, postaction={pattern=north east lines, pattern color=gray!70!black}] coordinates {
    (0.3,0.0) (0.7,32.2) (1.1,21.6)
    (1.7,0.0) (2.1,32.3) (2.5,21.7)
    (3.1,0.0) (3.5,32.2) (3.9,21.7)
    (4.9,0.0) (5.3,32.2) (5.7,21.1)
    (6.3,0.0) (6.7,32.4) (7.1,21.1)
    (7.7,0.0) (8.1,32.7) (8.5,20.8)
};
\addlegendentry{Comm (trusted $\leftrightarrow$ untrusted)}

\node[rotate=90, anchor=west, font=\tiny\bfseries, inner sep=1pt] at (axis cs:0.3,155) {155};
\node[rotate=90, anchor=west, font=\tiny\bfseries, inner sep=1pt] at (axis cs:1.7,158) {158};
\node[rotate=90, anchor=west, font=\tiny\bfseries, inner sep=1pt] at (axis cs:3.1,150) {150};
\node[rotate=90, anchor=west, font=\tiny\bfseries, inner sep=1pt] at (axis cs:4.9,191) {191};
\node[rotate=90, anchor=west, font=\tiny\bfseries, inner sep=1pt] at (axis cs:6.3,186) {186};
\node[rotate=90, anchor=west, font=\tiny\bfseries, inner sep=1pt] at (axis cs:7.7,197) {197};
\node[rotate=90, anchor=west, font=\tiny\bfseries, inner sep=1pt] at (axis cs:0.7,490) {490};
\node[rotate=90, anchor=west, font=\tiny\bfseries, inner sep=1pt] at (axis cs:2.1,491) {491};
\node[rotate=90, anchor=west, font=\tiny\bfseries, inner sep=1pt] at (axis cs:3.5,491) {491};
\node[rotate=90, anchor=west, font=\tiny\bfseries, inner sep=1pt] at (axis cs:5.3,501) {501};
\node[rotate=90, anchor=west, font=\tiny\bfseries, inner sep=1pt] at (axis cs:6.7,502) {502};
\node[rotate=90, anchor=west, font=\tiny\bfseries, inner sep=1pt] at (axis cs:8.1,502) {502};
\node[rotate=90, anchor=west, font=\tiny\bfseries, inner sep=1pt] at (axis cs:1.1,573) {573};
\node[rotate=90, anchor=west, font=\tiny\bfseries, inner sep=1pt] at (axis cs:2.5,574) {574};
\node[rotate=90, anchor=west, font=\tiny\bfseries, inner sep=1pt] at (axis cs:3.9,574) {574};
\node[rotate=90, anchor=west, font=\tiny\bfseries, inner sep=1pt] at (axis cs:5.7,584) {584};
\node[rotate=90, anchor=west, font=\tiny\bfseries, inner sep=1pt] at (axis cs:7.1,584) {584};
\node[rotate=90, anchor=west, font=\tiny\bfseries, inner sep=1pt] at (axis cs:8.5,584) {584};

\node[font=\tiny, anchor=north, yshift=-9pt] at (axis cs:0.7,0) {1k};
\node[font=\tiny, anchor=north, yshift=-9pt] at (axis cs:2.1,0) {4k};
\node[font=\tiny, anchor=north, yshift=-9pt] at (axis cs:3.5,0) {7k};
\node[font=\tiny, anchor=north, yshift=-9pt] at (axis cs:5.3,0) {1k};
\node[font=\tiny, anchor=north, yshift=-9pt] at (axis cs:6.7,0) {4k};
\node[font=\tiny, anchor=north, yshift=-9pt] at (axis cs:8.1,0) {7k};

\node[font=\tiny, anchor=north, yshift=-19pt] at (axis cs:2.1,0) {LLaMA-3-70B};
\node[font=\tiny, anchor=north, yshift=-19pt] at (axis cs:6.7,0) {Qwen2.5-72B};
\draw[dotted, gray!70, thick] (axis cs:4.4,0) -- (axis cs:4.4,800);
\end{axis}
\end{tikzpicture}
\caption{\revised{Per-token decoding latency for
         70B-class models ($G=3$
         untrusted GPUs), across 1k/4k/7k token context lengths.
         Within each context: \textsf{B} = local-GPU baseline (no protocol),
         \textsf{M} = sharding of individual masked weight matrix, and \textsf{L} = sharding of masked model 
         by layer across untrusted GPUs.}}
\label{fig:datacenter-runtime-decode-postccs}
\end{figure}
\begin{figure}[h]
\centering
\begin{tikzpicture}
\begin{axis}[
    width=8.8cm,
    height=5.0cm,
    ybar stacked,
    bar width=5pt,
    clip=false,
    ylabel={Latency (seconds)},
    xtick={0.3,0.7,1.1,1.7,2.1,2.5,3.1,3.5,3.9,4.9,5.3,5.7,6.3,6.7,7.1,7.7,8.1,8.5},
    xticklabels={B,M,L,B,M,L,B,M,L,B,M,L,B,M,L,B,M,L},
    x tick label style={font=\tiny},
    enlarge x limits=0.03,
    ymin=0, ymax=45,
    ytick={0,10,20,30,40},
    grid=major,
    ymajorgrids=true,
    xmajorgrids=false,
    legend style={
        at={(0.5, 1.02)}, anchor=south,
        legend columns=2,
        font=\scriptsize, draw=none, fill=none,
        column sep=4pt, row sep=-2pt, inner sep=2pt,
    },
    legend cell align=left,
    every axis plot/.append style={thick},
]
\addplot[fill=green!55!black, draw=green!40!black] coordinates {
    (0.3,1.06) (0.7,0.0) (1.1,0.0)
    (1.7,2.36) (2.1,0.0) (2.5,0.0)
    (3.1,3.46) (3.5,0.0) (3.9,0.0)
    (4.9,1.27) (5.3,0.0) (5.7,0.0)
    (6.3,2.70) (6.7,0.0) (7.1,0.0)
    (7.7,3.92) (8.1,0.0) (8.5,0.0)
};
\addlegendentry{Local GPU baseline (no protocol)}
\addplot[fill=blue!70, draw=blue!80!black] coordinates {
    (0.3,0.0) (0.7,2.28) (1.1,2.11)
    (1.7,0.0) (2.1,8.12) (2.5,7.53)
    (3.1,0.0) (3.5,12.47) (3.9,12.28)
    (4.9,0.0) (5.3,2.90) (5.7,2.42)
    (6.3,0.0) (6.7,8.65) (7.1,8.14)
    (7.7,0.0) (8.1,13.98) (8.5,13.21)
};
\addlegendentry{Trusted GPU (mask/unmask + non-linear)}
\addplot[fill=orange!70, draw=orange!80!black] coordinates {
    (0.3,0.0) (0.7,1.83) (1.1,4.85)
    (1.7,0.0) (2.1,4.80) (2.5,14.86)
    (3.1,0.0) (3.5,7.21) (3.9,22.20)
    (4.9,0.0) (5.3,1.94) (5.7,5.39)
    (6.3,0.0) (6.7,5.31) (7.1,15.75)
    (7.7,0.0) (8.1,7.62) (8.5,22.94)
};
\addlegendentry{Untrusted GPU (remote MatMul)}
\addplot[fill=gray!40, draw=gray!70!black, postaction={pattern=north east lines, pattern color=gray!70!black}] coordinates {
    (0.3,0.0) (0.7,1.03) (1.1,0.96)
    (1.7,0.0) (2.1,1.92) (2.5,1.84)
    (3.1,0.0) (3.5,2.61) (3.9,2.44)
    (4.9,0.0) (5.3,0.90) (5.7,0.82)
    (6.3,0.0) (6.7,2.04) (7.1,1.90)
    (7.7,0.0) (8.1,2.74) (8.5,2.56)
};
\addlegendentry{Comm (trusted $\leftrightarrow$ untrusted)}

\node[rotate=90, anchor=west, font=\tiny\bfseries, inner sep=1pt] at (axis cs:0.3,1.06) {1.1};
\node[rotate=90, anchor=west, font=\tiny\bfseries, inner sep=1pt] at (axis cs:1.7,2.36) {2.4};
\node[rotate=90, anchor=west, font=\tiny\bfseries, inner sep=1pt] at (axis cs:3.1,3.46) {3.5};
\node[rotate=90, anchor=west, font=\tiny\bfseries, inner sep=1pt] at (axis cs:4.9,1.27) {1.3};
\node[rotate=90, anchor=west, font=\tiny\bfseries, inner sep=1pt] at (axis cs:6.3,2.70) {2.7};
\node[rotate=90, anchor=west, font=\tiny\bfseries, inner sep=1pt] at (axis cs:7.7,3.92) {3.9};
\node[rotate=90, anchor=west, font=\tiny\bfseries, inner sep=1pt] at (axis cs:0.7,5.13) {5.1};
\node[rotate=90, anchor=west, font=\tiny\bfseries, inner sep=1pt] at (axis cs:2.1,14.85) {14.8};
\node[rotate=90, anchor=west, font=\tiny\bfseries, inner sep=1pt] at (axis cs:3.5,22.30) {22.3};
\node[rotate=90, anchor=west, font=\tiny\bfseries, inner sep=1pt] at (axis cs:5.3,5.73) {5.7};
\node[rotate=90, anchor=west, font=\tiny\bfseries, inner sep=1pt] at (axis cs:6.7,16.01) {16.0};
\node[rotate=90, anchor=west, font=\tiny\bfseries, inner sep=1pt] at (axis cs:8.1,24.34) {24.3};
\node[rotate=90, anchor=west, font=\tiny\bfseries, inner sep=1pt] at (axis cs:1.1,7.93) {7.9};
\node[rotate=90, anchor=west, font=\tiny\bfseries, inner sep=1pt] at (axis cs:2.5,24.23) {24.2};
\node[rotate=90, anchor=west, font=\tiny\bfseries, inner sep=1pt] at (axis cs:3.9,36.93) {36.9};
\node[rotate=90, anchor=west, font=\tiny\bfseries, inner sep=1pt] at (axis cs:5.7,8.62) {8.6};
\node[rotate=90, anchor=west, font=\tiny\bfseries, inner sep=1pt] at (axis cs:7.1,25.80) {25.8};
\node[rotate=90, anchor=west, font=\tiny\bfseries, inner sep=1pt] at (axis cs:8.5,38.70) {38.7};

\node[font=\tiny, anchor=north, yshift=-9pt] at (axis cs:0.7,0) {1k};
\node[font=\tiny, anchor=north, yshift=-9pt] at (axis cs:2.1,0) {4k};
\node[font=\tiny, anchor=north, yshift=-9pt] at (axis cs:3.5,0) {7k};
\node[font=\tiny, anchor=north, yshift=-9pt] at (axis cs:5.3,0) {1k};
\node[font=\tiny, anchor=north, yshift=-9pt] at (axis cs:6.7,0) {4k};
\node[font=\tiny, anchor=north, yshift=-9pt] at (axis cs:8.1,0) {7k};

\node[font=\tiny, anchor=north, yshift=-19pt] at (axis cs:2.1,0) {LLaMA-3-70B};
\node[font=\tiny, anchor=north, yshift=-19pt] at (axis cs:6.7,0) {Qwen2.5-72B};
\draw[dotted, gray!70, thick] (axis cs:4.4,0) -- (axis cs:4.4,45);
\end{axis}
\end{tikzpicture}
\caption{\revised{System-prompt prefill runtime for 70B-class models
         ($G=3$ untrusted GPUs), across 1k/4k/7k system-prompt lengths,
         split into trusted-GPU walltime, untrusted-GPU remote MatMul, and
         trusted$\leftrightarrow$untrusted communication.
         Within each context: \textsf{B} = local-GPU baseline (no protocol),
         \textsf{M} = weight matrix shard, and \textsf{L} = model sharding by
         layer.}}
\label{fig:datacenter-runtime-prefill-postccs}
\end{figure}
\begin{revisedblock}
MOSAIC is roughly $3\times$ (decode) to $5$--$11\times$ (prefill) slower than
running inference for the same 70B models
on a single local GPU of the same type,
with the higher prefill factor corresponding to layer- rather than
matrix-sharding.
This overhead largely reflects our emulation of 32-bit integer arithmetic
on 8-bit GPU cores (10 INT8 MatMuls per INT32 MatMul; see the emulation
approach above); native 32-bit-integer support in AI accelerator cores
(e.g.\ NVIDIA Tensor) would significantly reduce this gap.
\james{End of added paragraph}

We highlight three observations in prefill 
(\Cref{fig:datacenter-runtime-prefill-postccs})
and decode (\Cref{fig:datacenter-runtime-decode-postccs}) experiments
on 70B class models. 
Firstly, communication between the
trusted and untrusted pools is not the bottleneck under
the available inter-GPU bandwidth: it accounts for only
${\sim}4$--$7\%$ of decode and
${\sim}7$--$16\%$ of
prefill wall-clock, mirroring
practicality of the per-layer interaction pattern as in real-world
settings~\cite{spheron2026gpu,wu2026dualpath}. 

Secondly, the wall-time differences
between model-matrix sharding and
model-layer sharding across untrusted GPU's
is more pronounced in prefill (\Cref{fig:datacenter-runtime-prefill-postccs}) than in autoregressive decoding (\Cref{fig:datacenter-runtime-decode-postccs}).
This is expected as autoregressive decoding is memory-bounded; adding more parallel compute offers
limited benefits.


Third, autoregressive decoding (\Cref{fig:datacenter-runtime-decode-postccs})
is essentially \emph{context-invariant} at ${\sim}0.5$--$0.6$\,s per
generated token across $1$k--$7$k context windows, because per-token
cost is dominated by the dense linear projections (independent of
$s$ for $l = 1$); the KV dimensions only enters through attention, which
are implemented over the native floating-point domain
and negligible at this dimension. Of that per-token latency, the untrusted
GPU accounts for ${\sim}92$--$93\%$ (${\sim}85$--$90\%$ remote MatMul, the
rest communication), leaving under ${\sim}8\%$ on the trusted client. The one-time
init\_W weight-masking cost ($\approx 55$\,s per 70B model) is
amortized across all subsequent queries over the frozen model weights,
quickly becoming a negligible fraction of cumulative wall-clock.

\end{revisedblock}

\section{Conclusion}
\begin{revisedblock}
In this work, we present MOSAIC, a protocol that outsources the linear AI computation to
untrusted accelerators while preserving confidentiality of both model
weights and activations. Its security reduces to standard decisional LWE and LPN
assumptions, and its client cost is asymptotically optimal in the model size. 
We achieve this by permitting the outsourced
computation to be approximate;
this error introduced
by our masking is bounded analytically, and 
after random Hadamard preconditioning it 
leaves end-to-end model accuracy 
at or above standard quantization schemes.

Our end-to-end implementation illustrates 
that the approach can be 
practical in the data-center setting where communication between accelerators 
occurs over fast GPU interconnect technologies.
We think the rapid growth in model sizes will
favor asymptotically optimal protocols of this kind.
The remaining gap is currently dominated by our emulation of 32-bit integer arithmetic on 
natively accelerated 8-bit integer computation;
we hope this work provides motivation for
AI accelerator manufacturers to consider native
support for wider integer arithmetic.
\end{revisedblock}





\bibliographystyle{ACM-Reference-Format}
\bibliography{references}

\appendix 
\crefalias{section}{appendix}
\crefalias{subsection}{appendix}
\section{Generative AI Usage}
Claude Code 4.5-4.8 was used as a coding assistant
for the protocol implementation and for early
drafts of selected paragraphs.
Functional correctness of internal
protocol steps was verified
with standard, manually verified tests.
The full secure forward pass evaluation 
implementing MOSAIC
is manually verified and confirmed 
by asserting convergence
of model evaluation runs against the unaltered Pytorch
library on full floating-pointing reference
precision (BF16) as reported in our benchmarks.

\ifpreprint\else
  \section{Open Science}
All artifacts necessary for evaluating the contributions of this submission can be found at \mosaicrepourl.

\section{Ethical Considerations}

This work involves no human subjects, no user studies, and no personal
or otherwise sensitive data. All experiments are run on publicly
released open-weight models and on the public WikiText-2 and HumanEval
benchmarks, under their respective licenses. We neither analyze nor
exploit vulnerabilities in third-party systems, so no responsible
disclosure process applies.

The intent of MOSAIC is privacy-enhancing: it lets a client outsource
the bulk of an AI computation to untrusted accelerators while revealing
neither the input nor the model weights to the server. We nonetheless
note that the same property withholds the workload from the compute
provider, and therefore limits provider-side inspection of how
outsourced capacity is used. This tension is inherent to confidential
computing and to secure outsourcing in general rather than specific to
our construction, and it does not remove accountability from the party
that holds the model and the input: in the deployments we consider, that
party is the trusted client, whose own execution environment remains the
natural place to enforce usage policy. We judge the benefit of enabling
confidential inference at practical cost to outweigh this residual risk.

Finally, our protocol deliberately relaxes correctness by adding noise to
the outsourced multiplication, which perturbs model outputs. We report
the resulting accuracy across noise levels, including the configurations
where it degrades, so that deployers can assess the accuracy-security
trade-off for their own setting rather than assume exact inference.

\fi

\section{Proofs}
\subsection{Supporting lemmas}
\label{sec:proofs-bounded-noise}

The two supporting lemmas referenced by \Cref{thm:total-error-bound}
are stated and proved below.  \Cref{cor:cross-term-negligible}
collects the comparison between their tail bounds that we invoke in
the proof of \Cref{thm:total-error-bound}.

\begin{lemma}
\label{lem:discrete-noise-inner-product}
    Let $\mat{E} \in \rng^{m \times n}$ have independent entries drawn
    from $\normal{\rng;0}{\sigma^{2}}$, and let $\mat{X} \in \rng^{n \times l}$
    be fixed. Then entry $(j,k)$ of $\mat{E}\mat{X}$ satisfies:
    \begin{enumerate}
        \item $\expect{(\mat{E}\mat{X})_{jk}} = 0$.
        \item $(\mat{E}\mat{X})_{jk}$ is $\sigma\norm{\vec{x}_{k}}_{2}$-sub-Gaussian:
            for all $t > 0$,
            \[
            \Pr\!\left[|(\mat{E}\mat{X})_{jk}| > t\right]
            \leq 2\exp{-t^{2} / (2\sigma^{2}\norm{\vec{x}_{k}}_{2}^{2})}
            \]
    \end{enumerate}
    where $\vec{x}_{k}$ is the $k$-th column of $\mat{X}$.
\end{lemma}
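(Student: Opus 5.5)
The plan is to treat the entries of $\mat{X}$ as fixed integer representatives in $(-2^{\kappa-1},2^{\kappa-1}]$ and to work over $\mathbb{Z}$ before reducing modulo $2^{\kappa}$. The entry of interest is $(\mat{E}\mat{X})_{jk} = \sum_{i=1}^{n} E_{ji} X_{ik}$, a weighted sum of the $n$ independent discrete Gaussians $E_{j1},\dots,E_{jn}$ with fixed weights $X_{1k},\dots,X_{nk}$. Both claims then come from linearity and independence, once a single-entry fact is in hand.

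For part (1), I would use the symmetry of the discrete Gaussian: $\Pr[E=k]\propto \exp{-k^{2}/(2\sigma^{2})}$ is invariant under $k\mapsto -k$, so $\expect{E_{ji}}=0$. Linearity of expectation then gives $\expect{(\mat{E}\mat{X})_{jk}}=\sum_i X_{ik}\,\expect{E_{ji}}=0$. This needs $\mathbb{E}$ to be taken over the integer lift, which I would state explicitly; in the ring itself "mean" has no canonical meaning.

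For part (2), I would proceed through the moment generating function.
\begin{enumerate}
\item Establish that a single discrete Gaussian $E\sim\normal{\mathbb{Z};0}{\sigma^{2}}$ is $\sigma$-sub-Gaussian, that is, $\expect{e^{sE}}\le e^{s^{2}\sigma^{2}/2}$ for all real $s$. Completing the square gives $\expect{e^{sE}} = e^{s^{2}\sigma^{2}/2}\,\rho_\sigma(\mathbb{Z}-s\sigma^{2})/\rho_\sigma(\mathbb{Z})$, where $\rho_\sigma(x)=\exp{-x^{2}/(2\sigma^{2})}$. The ratio is at most $1$ because a Gaussian mass on a lattice coset is maximised at the centred coset; this is a Poisson-summation fact, the standard lemma of Micciancio--Peikert style.
\item Use independence across $i$ to multiply the MGFs: $\expect{e^{s\sum_i E_{ji}X_{ik}}}=\prod_i \expect{e^{sX_{ik}E_{ji}}}\le \prod_i e^{s^{2}\sigma^{2}X_{ik}^{2}/2}=e^{s^{2}\sigma^{2}\norm{\vec{x}_k}_2^{2}/2}$.
\item Apply the Chernoff bound with the optimal $s=t/(\sigma^{2}\norm{\vec{x}_k}_2^{2})$ to get $\exp{-t^{2}/(2\sigma^{2}\norm{\vec{x}_k}_2^{2})}$ for each tail. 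A union over both signs gives the factor $2$.
\end{enumerate}

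The main obstacle is step 1, the sub-Gaussianity of the discrete Gaussian with parameter exactly $\sigma$ rather than $\sigma$ times a constant. My first attempt would be the Poisson-summation argument: the Fourier transform of $\rho_\sigma$ is positive, so $\rho_\sigma(\mathbb{Z}+c)=\sum_{y}\hat\rho_\sigma(y)e^{2\pi i yc}\le\sum_y\hat\rho_\sigma(y)=\rho_\sigma(\mathbb{Z})$. Failing that, I would accept a constant-factor loss, which is harmless for the "approximate" statements downstream in \Cref{thm:total-error-bound}.

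A secondary point is the wrap-around modulo $2^{\kappa}$. I would note that the bound concerns the centred integer representative, and that it is meaningful whenever $t\ll 2^{\kappa-1}$. This is guaranteed by the operand window $q_{\max}$ of \Cref{proto:secure-linear}, which keeps $\norm{\vec{x}_k}_2$ small.
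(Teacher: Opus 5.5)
Your proposal is correct and follows the paper's proof essentially step for step: symmetry for the mean, completing the square plus Poisson summation to show each discrete Gaussian entry is $\sigma$-sub-Gaussian, multiplying MGFs by independence, and a Chernoff bound with a union over both signs. Two points the paper states loosely come out cleaner in your version: you lift explicitly to $\mathbb{Z}$, so Poisson summation is applied over the integers rather than over $\mathbb{Z}_{2^{\kappa}}$, and you give the correct direction $\rho_\sigma(\mathbb{Z}-s\sigma^{2})/\rho_\sigma(\mathbb{Z})\le 1$, where the paper writes that inequality reversed.
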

\begin{proof}
    Fix $(j,k)$ and write
    $(\mat{E}\mat{X})_{jk} = \sum_{i=1}^{n} e_i (\vec{x}_{k})_{i}$,
    where $e_i := (\mat{E})_{ji}$ are independent samples from
    $\normal{\rng;0}{\sigma^{2}}$ and $(\vec{x}_{k})_{i}$ are fixed.

    \emph{(1)} By symmetry of $\normal{\rng;0}{\sigma^{2}}$,
    $\expect{e_i} = 0$ for each $i$. By linearity of expectation,
    $\expect{(\mat{E}\mat{X})_{jk}}
     = \sum_{i} (\vec{x}_{k})_{i}\,\expect{e_i} = 0$.

    \emph{(2)}
    A random variable $Z$ is $\alpha$-sub-Gaussian if
    $\expect{\exp{tZ}} \leq \exp{\alpha^{2} t^{2}/2}$ for all
    $t \in \reals$.
    We first show that each discrete gaussian sample $e_i \sim \normal{\rng;0}{\sigma^{2}}$ is
    $\sigma$-sub-Gaussian for any $\sigma > 0$.
    Let $\rho_{\sigma}(S) = \sum_{k \in S} \exp{-k^{2}/(2\sigma^{2})}$
    for any subset $S \subseteq \rng$.
    The MGF of $e_i$ is
    \begin{align}
    \label{eq:mgf-gaussian}
        \expect{\exp{t e_i}}
        &= \sum_{k} \exp{t k} \cdot \Pr[e_i = k] \nonumber \\
        &= \frac{1}{\rho_{\sigma}(\rng)}
          \sum_{k \in \rng} \exp{tk - k^{2}/(2\sigma^{2})}
    \end{align}
    Let us denote $\rho_{\sigma}(\rng - c)
    = \sum_{k \in \rng} \exp{-(k-c)^{2}/(2\sigma^{2})}$.
    Further, manipulating the term in the exponent of~\cref{eq:mgf-gaussian}
    to obtain $tk - k^{2}/(2\sigma^{2})
    = -(k - t\sigma^{2})^{2}/(2\sigma^{2}) + \sigma^{2}t^{2}/2$,
    gives
    \[
        \expect{\exp{t e_i}}
        = \exp{\sigma^{2}t^{2}/2}
          \cdot \frac{\rho_{\sigma}(\rng - t\sigma^{2})}
                     {\rho_{\sigma}(\rng)}
    \]

    By the Poisson summation formula,
    \[
        \rho_{\sigma}(\rng - c)
        = \sigma\sqrt{2\pi}
          \sum_{k \in \rng} \exp{-2\pi^{2}\sigma^{2}k^{2}}
          \cos(2\pi k c)
    \]
    Since $\cos(2\pi k c) \leq 1$ for all $k, c$ and each
    coefficient $\exp{-2\pi^{2}\sigma^{2}k^{2}}$ is non-negative,
    $\rho_{\sigma}(\rng - c) \leq \rho_{\sigma}(\rng)$.
    Thus,
    \[
       1  \leq \frac{\rho_{\sigma}(\rng - t\sigma^{2})}
                 {\rho_{\sigma}(\rng)}
    \]
    and
    $\expect{\exp{t e_i}} \leq \exp{\sigma^{2}t^{2}/2}$,
    confirming that $e_i$ is $\sigma$-sub-Gaussian.

    For independent $\sigma$-sub-Gaussian variables,
    $(\mat{E}\mat{X})_{jk} = \sum_{i} (\vec{x}_{k})_{i}\, e_i$ is
    $(\sum_{i} (\vec{x}_{k})_{i}^{2} \sigma^{2})^{1/2}
    = \sigma\norm{\vec{x}_{k}}_{2}$-sub-Gaussian, since
    \begin{align*}
        \expect{\exp{t \sum_{i} (\vec{x}_{k})_{i} e_i}}
        &= \prod_{i} \expect{\exp{t (\vec{x}_{k})_{i} e_i}} \\
        &\leq \prod_{i} \exp{\sigma^{2} (\vec{x}_{k})_{i}^{2} t^{2}/2} \\
        &= \exp{\sigma^{2} \norm{\vec{x}_{k}}_{2}^{2} t^{2}/2}.
    \end{align*}
    The tail bound follows from the standard Chernoff method:
    for any $t > 0$ and $\lambda > 0$,
    $\Pr[(\mat{E}\mat{X})_{jk} > t]
    \leq \expect{\exp{\lambda (\mat{E}\mat{X})_{jk}}} / \exp{\lambda t}
    \leq \exp{\sigma^{2}\norm{\vec{x}_{k}}_{2}^{2}\lambda^{2}/2 - \lambda t}$.
    Minimizing over $\lambda$ yields
    \[
        \Pr\!\left[|(\mat{E}\mat{X})_{jk}| > t\right]
        \leq 2\exp{-\frac{t^{2}}{2\sigma^{2}\norm{\vec{x}_{k}}_{2}^{2}}}
    \]
\end{proof}

\begin{lemma}
\label{lem:cross-term}
    Let $\mat{E}_{W} \in \rng^{m \times n}$ and
    $\mat{E}_{X} \in \rng^{n \times l}$ have mutually independent
    entries drawn from $\normal{\rng;0}{\sigma^{2}}$.
    Then entry $(j,k)$ of $\mat{E}_{W}\mat{E}_{X}$ satisfies:
    \begin{enumerate}
        \item $\expect{(\mat{E}_{W}\mat{E}_{X})_{jk}} = 0$.
        \item For all $t > 0$,
            \[
            \Pr\!\left[|(\mat{E}_{W}\mat{E}_{X})_{jk}| > t\right]
            \leq 2\exp{-t^{2}/(4n\sigma^{4})}
            + e^{-\Omega(n)}
            \]
    \end{enumerate}
\end{lemma}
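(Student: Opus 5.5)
The plan is to condition on $\mat{E}_{X}$ and reduce to \Cref{lem:discrete-noise-inner-product}. Fix $(j,k)$ and write $(\mat{E}_{W}\mat{E}_{X})_{jk} = \sum_{i=1}^{n} a_i b_i$, where $a_i := (\mat{E}_{W})_{ji}$ and $b_i := (\mat{E}_{X})_{ik}$ are mutually independent discrete Gaussians.

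For part (1), use independence and the symmetry argument from \Cref{lem:discrete-noise-inner-product}. This gives $\expect{a_i b_i} = \expect{a_i}\expect{b_i} = 0$, and linearity finishes the claim.

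For part (2), condition on the column $\vec{b} = (b_1,\dots,b_n)$. Given $\vec{b}$, the sum $\sum_i a_i b_i$ is exactly the situation of \Cref{lem:discrete-noise-inner-product} with the fixed vector $\vec{x}_k = \vec{b}$. It is therefore $\sigma\norm{\vec{b}}_2$-sub-Gaussian, and
\[
\Pr\!\left[\,\left|\textstyle\sum_i a_i b_i\right| > t \;\middle|\; \vec{b}\,\right] \leq 2\exp{-t^{2}/(2\sigma^{2}\norm{\vec{b}}_2^{2})}.
\]
Next define the good event $G = \{\norm{\vec{b}}_2^{2} \leq 2n\sigma^{2}\}$. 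Splitting the probability over $G$ and its complement gives
\[
\Pr[|(\mat{E}_{W}\mat{E}_{X})_{jk}| > t] \leq 2\exp{-t^{2}/(4n\sigma^{4})} + \Pr[\neg G].
\]
This is exactly the claimed shape, provided $\Pr[\neg G] = e^{-\Omega(n)}$.

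The main obstacle is this concentration bound on $\norm{\vec{b}}_2^{2} = \sum_i b_i^{2}$. Each $b_i$ is $\sigma$-sub-Gaussian by the MGF computation in the proof of \Cref{lem:discrete-noise-inner-product}. Hence $b_i^{2}$ is sub-exponential with $\expect{b_i^{2}} \leq \sigma^{2}$, which follows from sub-Gaussianity: the second-order term of the MGF bound gives variance at most $\sigma^{2}$. I would apply Bernstein's inequality, or a direct Chernoff bound on $\expect{\exp{\lambda b_i^{2}}} \leq (1-2\lambda\sigma^{2})^{-1/2}$ for $\lambda < 1/(2\sigma^{2})$, to the centred sum. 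A choice such as $\lambda = 1/(4\sigma^{2})$ yields $\Pr[\sum_i b_i^{2} > 2n\sigma^{2}] \leq (\sqrt{2}\,e^{-1/2})^{n} = e^{-\Omega(n)}$, because $\sqrt{2}\,e^{-1/2} < 1$.

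The MGF bound for $b_i^{2}$ follows from the Gaussian-integral trick $\exp{\lambda b^{2}} = \expect{\exp{\sqrt{2\lambda}\, g\, b}}$ for a standard normal $g$, combined with the sub-Gaussian MGF of $b$. Constant-factor slack in the threshold $2n\sigma^{2}$ may be needed if the discrete Gaussian's variance is not exactly $\sigma^{2}$; any constant factor is absorbed into the $\Omega(n)$ and into the stated $4n\sigma^{4}$ denominator.
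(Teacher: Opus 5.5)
Your proposal is correct and follows the paper's proof essentially step for step. Both condition on $\vec{b}$, apply \Cref{lem:discrete-noise-inner-product} conditionally, restrict to the event $\norm{\vec{b}}_2^2 \le 2n\sigma^2$, and bound its complement by sub-exponential concentration of $\sum_i b_i^2$. The only difference is that you make the paper's bare appeal to Bernstein explicit, via the Chernoff bound with $\lambda = 1/(4\sigma^2)$ and the rate $(\sqrt{2}e^{-1/2})^n$ at exactly the threshold $2n\sigma^2$. Your closing caveat about constant-factor slack is therefore unnecessary, and any such slack could not in fact be absorbed into the explicit $4n\sigma^4$ denominator.
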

\begin{proof}
    Fix indices $j,k$ and write
    $(\mat{E}_{W}\mat{E}_{X})_{jk} = \sum_{i=1}^{n} a_{i} b_{i}$,
    where $a_{i} = (\mat{E}_{W})_{ji}$ and
    $b_{i} = (\mat{E}_{X})_{ik}$ are mutually independent
    samples from $\normal{\rng;0}{\sigma^{2}}$.

    \emph{(1)}
    By independence and $\expect{a_i} = \expect{b_i} = 0$,
    $\expect{\sum_{i} a_i b_i}
    = \sum_{i} \expect{a_i}\expect{b_i} = 0$.

    \emph{(2)}
    Condition on $\vec{b} = (b_1, \ldots, b_n)$.
    Given $\vec{b}$, the sum $\sum_{i} a_i b_i$ is a
    linear combination of independent $\sigma$-sub-Gaussian
    variables with fixed coefficients.
    By~\cref{lem:discrete-noise-inner-product},
    the conditional tail satisfies
    \[
        \Pr\!\left[\left|\sum_{i} a_i b_i\right| > t
        \;\middle|\; \vec{b}\right]
        \leq 2\exp{-\frac{t^{2}}{2\sigma^{2}
        \norm{\vec{b}}_{2}^{2}}}
    \]
    Since each $b_i$ is $\sigma$-sub-Gaussian,
    $b_i^{2}$ is sub-exponential with
    $\expect{b_i^{2}} = \sigma_{d}^{2} \leq \sigma^{2}$.
    Standard sub-exponential concentration
    (Bernstein inequality) yields
    $\Pr[\norm{\vec{b}}_{2}^{2} > 2n\sigma^{2}]
    \leq e^{-\Omega(n)}$.
    On the high-probability event
    $\{\norm{\vec{b}}_{2}^{2} \leq 2n\sigma^{2}\}$,
    the conditional bound becomes
    $2 \exp{-t^{2}/(4n\sigma^{4})}$ 
    after substitution,
    and the total probability law gives the result.
\end{proof}

\begin{corollary}
\label{cor:cross-term-negligible}
    The cross term $\mat{E}_{W}\mat{E}_{X}$ is dominated by the
    linear error terms $\mat{E}_{W}\mat{X}$ and $\mat{W}\mat{E}_{X}$.
    Comparing the tail bounds from
    \Cref{lem:discrete-noise-inner-product} and \Cref{lem:cross-term}:
    the linear terms decay as
    $\exp{-t^{2}/(2\sigma^{2}\norm{\vec{x}}_{2}^{2})}$,
    while the cross term decays as
    $\exp{-t^{2}/(4n\sigma^{4})}$.
    With $\norm{\vec{x}}_{2}^{2} = n\overline{x^{2}}$
    (average squared entry $\overline{x^{2}}$),
    the linear exponent scales as
    $t^{2}/(n\sigma^{2}\overline{x^{2}})$ versus
    $t^{2}/(n\sigma^{4})$ for the cross term.
    The cross term is therefore negligible whenever
    $\sigma^{2} \ll \overline{x^{2}}$, a condition required by the
    protocol for useful signal-to-noise ratio.
\end{corollary}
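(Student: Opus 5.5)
The plan is to compare the sub-Gaussian tail parameters of the three error terms directly, using only the two lemmas just proved. First, apply \Cref{lem:discrete-noise-inner-product} to $\mat{E}_{W}\mat{X}$ with $\mat{E}=\mat{E}_{W}$: entry $(j,k)$ is $\sigma\norm{\vec{x}_{k}}_{2}$-sub-Gaussian. Apply the same lemma to the transpose $(\mat{W}\mat{E}_{X})^{\top} = \mat{E}_{X}^{\top}\mat{W}^{\top}$: entry $(j,k)$ is $\sigma\norm{\vec{w}_{j}}_{2}$-sub-Gaussian. Second, invoke \Cref{lem:cross-term} for $\mat{E}_{W}\mat{E}_{X}$. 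Up to the $e^{-\Omega(n)}$ failure event, its entries satisfy a sub-Gaussian-type tail with effective variance proxy $2n\sigma^{4}$; the factor $4n\sigma^{4}$ in the exponent equals $2\cdot(2n\sigma^{4})$.

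Next, I would make ``dominated'' precise by comparing variance proxies. Write $\norm{\vec{x}_{k}}_{2}^{2} = n\,\overline{x^{2}}$, where the entries are interpreted via their centered lifts in $(-2^{\kappa-1},2^{\kappa-1}]$, so that the norm is the integer magnitude actually used in the protocol. The ratio of proxies is then
\[
\frac{2n\sigma^{4}}{n\sigma^{2}\overline{x^{2}}} = \frac{2\sigma^{2}}{\overline{x^{2}}}.
\]
This ratio tends to zero whenever $\sigma^{2} \ll \overline{x^{2}}$. To turn this into a statement about tails, fix the threshold $t_{\delta}$ at which the linear-term bound equals $\delta$, i.e.\ $t_{\delta}^{2} = 2\sigma^{2}\norm{\vec{x}_{k}}_{2}^{2}\ln(2/\delta)$. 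Substituting $t_{\delta}$ into the bound of \Cref{lem:cross-term} gives
\[
\Pr\!\left[|(\mat{E}_{W}\mat{E}_{X})_{jk}| > t_{\delta}\right] \le 2(\delta/2)^{\overline{x^{2}}/(2\sigma^{2})} + e^{-\Omega(n)}.
\]
This is smaller than $\delta$ by a polynomial power that grows with the signal-to-noise ratio. The same computation with $\norm{\vec{w}_{j}}_{2}$ handles the comparison against $\mat{W}\mat{E}_{X}$.

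The step I expect to be the main obstacle is justifying that this comparison is meaningful, since ``negligible'' here is asymptotic rather than sharp. I would note two points. First, the $e^{-\Omega(n)}$ term is exponentially small in the inner dimension ($n \ge 4096$), so it is dominated for every $\delta$ of practical interest. Second, the protocol's fixed-point scaling ensures $\overline{x^{2}}$ is large: after the Hadamard rotation, entries occupy a window of width about $q_{\max}$, whereas $\sigma \le 2$. The condition $\sigma^{2}\ll\overline{x^{2}}$ therefore holds with a large margin. Finally, I would remark that this domination is exactly what lets the proof of \Cref{thm:total-error-bound} absorb $\mat{E}_{W}\mat{E}_{X}$ into the ``approximately'' of its $\sigma\sqrt{\norm{\vec{x}_{k}}_{2}^{2}+\norm{\vec{w}_{j}}_{2}^{2}}$ bound.
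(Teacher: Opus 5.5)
Your proposal is correct and follows the paper's own argument: compare the tail exponents from \Cref{lem:discrete-noise-inner-product} and \Cref{lem:cross-term} after writing $\norm{\vec{x}_{k}}_{2}^{2} = n\overline{x^{2}}$, which gives the proxy ratio $2\sigma^{2}/\overline{x^{2}}$ and hence negligibility when $\sigma^{2} \ll \overline{x^{2}}$. Your threshold substitution giving the bound $2(\delta/2)^{\overline{x^{2}}/(2\sigma^{2})}$, and your transpose argument for $\mat{W}\mat{E}_{X}$, are a more explicit quantification of that same comparison rather than a different route.
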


\subsection{Proof of \Cref{thm:total-error-bound}}
\label{prf:total-error-bound}
\thmTotalErrorBound*
\begin{proof}
    The linear terms $(\mat{E}_{W}\mat{X})_{jk}$ and
    $(\mat{W}\mat{E}_{X})_{jk}$ are independent, since they depend
    on the independent noise matrices $\mat{E}_{W}$ and $\mat{E}_{X}$
    respectively.
    By~\cref{lem:discrete-noise-inner-product},
    $(\mat{E}_{W}\mat{X})_{jk}$ is $\sigma\norm{\vec{x}_{k}}_{2}$-sub-Gaussian
    and $(\mat{W}\mat{E}_{X})_{jk}$ is $\sigma\norm{\vec{w}_{j}}_{2}$-sub-Gaussian.
    For independent sub-Gaussian variables, the sub-Gaussian parameters
    add in quadrature, so their sum is
    $\sigma\sqrt{\norm{\vec{x}_{k}}_{2}^{2} + \norm{\vec{w}_{j}}_{2}^{2}}$-sub-Gaussian.
    By~\cref{cor:cross-term-negligible}, the cross term
    $\mat{E}_{W}\mat{E}_{X}$ is negligible whenever
    $\sigma^{2} \ll \overline{x^{2}}$, contributing only
    an additive $e^{-\Omega(n)}$ to the tail probability.
\end{proof}

\subsection{Proof of \Cref{lma:lwe-protocol}}
\label{prf:lwe-protocol}
\lemLweProtocol*
\begin{proof}[Proof sketch]
    We show $\mat{L}\mat{M} + \mat{E}_{w} \overset{c}{\approx} \mat{U}$
    column by column. Interpret the $j$'th column of
    $\mat{L}\mat{M} + \mat{E}_{w}$ as
    $\mat{L}\vec{m}_{j} + \vec{e}_{w,j} \in \fpVec{m}$ and let
    \[
        H_{0} = \mat{L}\mat{M} + \mat{E}_{w}
              = (\mat{L}\vec{m}_1+\vec{e}_{w,1}, \ldots,
                 \mat{L}\vec{m}_{n}+\vec{e}_{w,n}).
    \]
    For $n \geq i > 0$, define
    $H_{i} = (\vec{u}_1, \ldots, \vec{u}_{i},
              \mat{L}\vec{m}_{i+1}+\vec{e}_{w,i+1}, \ldots,
              \mat{L}\vec{m}_n+\vec{e}_{w,n})$,
    where $\vec{u}_{1}, \ldots, \vec{u}_{i}$ are independent uniform
    samples over $\fpVec{m}$. The two consecutive hybrids
    \begin{itemize}
        \item $H_{i}$, and
        \item $H_{i+1}$, which replaces the $(i+1)$-th entry
              $\mat{L}\vec{m}_{i+1}+\vec{e}_{w,i+1}$ with a fresh uniform
              sample $\vec{u}_{i+1}$,
    \end{itemize}
    differ only in that single coordinate; distinguishing them is
    exactly a decisional-LWE distinguisher for the public matrix
    $\mat{L}$ and noise $\vec{e}_{w,i+1}$.

    By the hybrid argument,
    $H_0 \overset{c}{\approx} H_{n}
          \overset{\text{s}}{\approx} (\vec{u}_1, \ldots, \vec{u}_n)
          \overset{\text{s}}{\approx} \mat{U}$,
    with total advantage bounded by $n$ times the decisional-LWE
    advantage. Hence $\mat{L}\mat{M} + \mat{E}_{w} \overset{c}{\approx} \mat{U}$.
\end{proof}

\subsection{Nested LWE+LPN mask}
\label{prf:lwe-lpn-protocol}
\begin{lemma}
    \label{lma:lwe-lpn-protocol}
    The nested LWE+LPN mask is computationally indistinguishable from uniform:
    \[
        (((\mat{L}_{d}\mat{M}_{d} + \mat{S}_d)\mat{M}_{d-1} + \mat{S}_{d-1})\cdots \mat{S}_{2})\mat{M}_{1} + \mat{E}_{w}
        \overset{c}{\approx} \mat{U}.
    \]
\end{lemma}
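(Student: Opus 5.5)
We prove \Cref{lma:lwe-lpn-protocol} by a hybrid argument that peels the nesting from the innermost level outward. At each step one LPN instance is replaced by a uniform matrix of the appropriate shape, and at the end we fall back on the LWE-only statement \Cref{lma:lwe-protocol}.

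Define hybrids $H_d, H_{d-1}, \ldots, H_1$ as follows. $H_d$ is the real mask
\[
(((\mat{L}_{d}\mat{M}_{d} + \mat{S}_d)\mat{M}_{d-1} + \mat{S}_{d-1})\cdots \mat{S}_{2})\mat{M}_{1} + \mat{E}_{w}.
\]
For $i<d$, $H_i$ is obtained from $H_{i+1}$ by replacing the innermost bracket $\mat{L}_{i+1}\mat{M}_{i+1} + \mat{S}_{i+1}$, which is an $m \times r_i$ matrix, with a fresh uniform $\mat{L}_i \leftarrow \uniform{\rngMat{m}{r_i}}$. Hence $H_1 = \mat{L}_1\mat{M}_1 + \mat{E}_w$, and \Cref{lma:lwe-protocol} gives $H_1 \overset{c}{\approx} \mat{U}$.

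The core step is $H_{i+1} \overset{c}{\approx} H_i$, which we reduce to decisional LPN with secret dimension $r_{i+1}$ and $r_i$ samples per row. Read row by row, $\mat{L}_{i+1}\mat{M}_{i+1} + \mat{S}_{i+1}$ consists of $m$ independent LPN instances. Each row is $\vec{\ell}^\top\mat{M}_{i+1} + \vec{s}^\top$ with public matrix $\mat{M}_{i+1}$ and secret $\vec{\ell}$. Because the paper makes the chain products $\mat{M}_i\cdots\mat{M}_1$ public, $\mat{M}_{i+1}$ is exactly the public LPN matrix.

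A row-by-row sub-hybrid replaces the rows one at a time, losing a factor $m$ in advantage. Each sub-step is simulated by a reduction $\mathcal{B}$. It takes an LPN challenge row $(\mat{M}_{i+1}, \vec{z})$, where $\vec{z}$ is either LPN or uniform. It samples every other component itself: the remaining rows, $\mat{M}_j$ and $\mat{S}_j$ for $j \le i$, and $\mat{E}_w$. It then evaluates the public map
\[
\mat{Z} \mapsto ((\mat{Z}\mat{M}_i + \mat{S}_i)\cdots)\mat{M}_1 + \mat{E}_w.
\]
This map is efficiently computable and applied identically in both cases, so $\mathcal{B}$'s output is distributed as $H_{i+1}$ or $H_i$ according to the challenge. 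Summing over $d-1$ levels and $m$ rows gives total advantage at most $m\sum_{i}\mathrm{Adv}_{\mathrm{LPN}}(r_{i+1}, r_i) + n\,\mathrm{Adv}_{\mathrm{LWE}}$.

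The main obstacle is the mismatch between the paper's LPN formulation and the setting here. The paper's LPN uses a sparse \emph{uniform} error in $\rng$, not binary noise over $\mathbb{F}_2$, and the modulus $2^\kappa$ is not prime. I would state the assumption explicitly as decisional LPN over $\rng$ with $\sparse{\cdot;\mu}$ errors, which is exactly the form the estimator of \Cref{ex:lwe-lpn-parameterisation} targets. A second point to check is that the adversary also sees the public chain products $\mat{M}_i\cdots\mat{M}_1$. These are independent of the secrets $\mat{L}_j, \mat{S}_j, \mat{E}_w$, so $\mathcal{B}$ can include them in its output, and the hybrids remain valid even conditioned on them.

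Finally, the LWE step needs its secret $\mat{L}_1$ to be uniform and independent of $\mat{M}_1$ and $\mat{E}_w$. This holds in $H_1$ by construction, since $\mat{L}_1$ was freshly sampled in the last replacement.
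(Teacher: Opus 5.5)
Your proposal is correct and is essentially the paper's proof. It uses the same chain of hybrids replacing the innermost $L_{i+1}M_{i+1}+S_{i+1}$ by a fresh uniform matrix, the same embedding reduction that samples all other components honestly, and a final appeal to \Cref{lma:lwe-protocol}. Your row-by-row sub-hybrid and your $m\times r_i$, $r_i$-samples bookkeeping are cleaner than the paper's sketch, whose sample count $r_{i-1}$ disagrees with \Cref{ex:lwe-lpn-parameterisation}. One nit: since you take $L_1$ as the LWE secret with $M_1$ public, the last term should be $m\,\mathrm{Adv}_{\mathrm{LWE}}$. The factor $n$ comes from the paper's column-wise proof of \Cref{lma:lwe-protocol}, which treats $L$ as public.
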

\begin{proof}[Proof sketch]
    Let $\mat{F}_{d}$ denote the nested mask of the statement and
    $\mat{F}_{1} = \mat{L}_{1}\mat{M}_{1} + \mat{E}_{w}$ the simplified
    LWE mask. We show
    $\mat{F}_{d} \overset{c}{\approx} \mat{U}$ by a hybrid argument
    that unwinds the nested LPN instances one level at a time until we
    reach $\mat{F}_{1}$, then invokes \Cref{lma:lwe-protocol}.

    Define $d$ hybrids $\mat{F}_{d}, \mat{F}_{d-1}, \ldots, \mat{F}_{1}$,
    where $\mat{F}_{i}$ is obtained from $\mat{F}_{i+1}$ by replacing
    the innermost LPN instance
    $\mat{L}_{i+1}\mat{M}_{i+1} + \mat{S}_{i+1}$ with a freshly sampled
    uniform matrix $\mat{L}_{i}$ of the same dimensions. Here
    $\mat{L}_{i+1} \leftarrow \uniform{\rng^{r_{i} \times r_{i+1}}}$,
    $\mat{M}_{i+1} \leftarrow \uniform{\rng^{r_{i+1} \times r_{i-1}}}$
    and $\mat{S}_{i+1}$ has row Hamming weight $t_{i+1}$ (with
    $r_0 := n$).

    Fix a transition $\mat{F}_{i+1} \to \mat{F}_{i}$. Distinguishing
    these two distributions reduces to distinguishing, at the position
    where $\mat{L}_{i}$ appears,
    \begin{itemize}
        \item $\mat{L}_{i+1}\mat{M}_{i+1} + \mat{S}_{i+1}$, a decisional
              LPN sample with secret dimension $r_{i+1}$, sample count
              $r_{i-1}$, and noise Hamming weight $t_{i+1}$, from
        \item uniform
              $\mat{L}_{i} \leftarrow \uniform{\rng^{r_{i-1} \times r_{i}}}$.
    \end{itemize}
    The reduction receives the LPN challenge $\mat{L}^{*}$, embeds it
    at the $i$th position of the mask, samples all remaining
    $\mat{L}_{j}, \mat{M}_{j}, \mat{S}_{j}$ and $\mat{E}_{w}$ honestly,
    and forwards the constructed mask to the distinguisher. Its
    advantage transfers without loss, so
    $\mat{F}_{i+1} \overset{c}{\approx} \mat{F}_{i}$.

    Chaining the $d-1$ LPN hybrid steps gives
    $\mat{F}_{d} \overset{c}{\approx} \mat{F}_{1}$ with distinguishing
    advantage at most $(d-1)$ times the best decisional-LPN advantage.
    By \Cref{lma:lwe-protocol}, $\mat{F}_{1} \overset{c}{\approx} \mat{U}$.
    Composing yields $\mat{F}_{d} \overset{c}{\approx} \mat{U}$.
\end{proof}

\section{Computational integrity}
\label{sec:freiwald}
We refer to a well-known randomized check (Freivalds' algorithm) to ensure
correctness of $\mat{W}\mat{X} = \mat{Y}$ computed by the GPU at runtime,
at the cost of additional input-independent precomputation.
For verification, the client samples a random vector
$(\vec{a})$ and performs the following assertion;
\[
\vec{a}^{T}\mat{W}\mat{X} =^{?} \vec{a}^{T}\mat{Y}
\]
Observe that $\vec{a}^{T}\mat{W}$ is input independent and can be preprocessed and reused for each inference run;
at runtime, the assertion above can be evaluated in $O((n+m)l)$.

To violate soundness, the cheating prover
must return $\mat{Y}$ such that the following holds
\[
\vec{a}^{T}\mat{W}\mat{X} = \vec{a}^{T}\mat{Y}
        \iff \vec{a}^{T}(\mat{W}\mat{X} - \mat{Y}) = 0
\]
where
$\mat{W}\mat{X} - \mat{Y} \not= 0$.
This occurs with probability $1/|\mathbb{F}|$ (Schwartz-Zippel Lemma).
\section{Theoretical outsourcing efficiency}
\begin{figure}[h]
  \centering
  \begin{tikzpicture}
\begin{axis}[
    width=\linewidth,
    height=7cm,
    title={Theoretical efficiency of outsourcing ($l=1$)},
    xlabel={Rank of Model Weight Matrix $n$ ($\times 10^3$)},
    ylabel={Trusted $\cli$ Work:\\outsourced $/$ local $\mat{W}\mat{X}$},
    ylabel style={font=\small, align=center},
    xmin=1, xmax=210,
    ymode=log,
    ymin=0.005, ymax=3.00,
    grid=both,
    minor grid style={gray!20},
    legend style={
        at={(0.97,0.97)},
        anchor=north east,
        font=\small,
    },
    xticklabel style={/pgf/number format/1000 sep={}},
    every axis plot/.append style={thick},
]


\addplot[no markers, color=purple, solid]
    coordinates {
        (2, 2.1500)
        (5, 0.8600)
        (10, 0.4300)
        (20, 0.2150)
        (30, 0.1433)
        (40, 0.1075)
        (50, 0.0860)
        (60, 0.0717)
        (70, 0.0614)
        (80, 0.0538)
        (90, 0.0478)
        (100, 0.0430)
        (120, 0.0358)
        (140, 0.0307)
        (160, 0.0269)
        (180, 0.0239)
        (200, 0.0215)
    };
\addlegendentry{$m = n/4$}
\addplot[only marks, mark=*, color=purple, fill=purple, forget plot]
    coordinates {
        (2, 2.1500)
        (10, 0.4300)
        (20, 0.2150)
        (50, 0.0860)
        (100, 0.0430)
        (200, 0.0215)
    };

\addplot[no markers, color=blue, solid]
    coordinates {
        (2, 0.8600)
        (5, 0.3440)
        (10, 0.1720)
        (20, 0.0860)
        (30, 0.0573)
        (40, 0.0430)
        (50, 0.0344)
        (60, 0.0287)
        (70, 0.0246)
        (80, 0.0215)
        (90, 0.0191)
        (100, 0.0172)
        (120, 0.0143)
        (140, 0.0123)
        (160, 0.0108)
        (180, 0.0096)
        (200, 0.0086)
    };
\addlegendentry{$m = n$}
\addplot[only marks, mark=square*, color=blue, fill=blue, forget plot]
    coordinates {
        (2, 0.8600)
        (10, 0.1720)
        (20, 0.0860)
        (50, 0.0344)
        (100, 0.0172)
        (200, 0.0086)
    };

\addplot[no markers, color=orange, solid]
    coordinates {
        (2, 0.5375)
        (5, 0.2150)
        (10, 0.1075)
        (20, 0.0538)
        (30, 0.0358)
        (40, 0.0269)
        (50, 0.0215)
        (60, 0.0179)
        (70, 0.0154)
        (80, 0.0134)
        (90, 0.0119)
        (100, 0.0108)
        (120, 0.0090)
        (140, 0.0077)
        (160, 0.0067)
        (180, 0.0060)
        (200, 0.0054)
    };
\addlegendentry{$m = 4n$}
\addplot[only marks, mark=diamond*, color=orange, fill=orange, forget plot]
    coordinates {
        (2, 0.5375)
        (10, 0.1075)
        (20, 0.0538)
        (50, 0.0215)
        (100, 0.0108)
        (200, 0.0054)
    };

\end{axis}
\end{tikzpicture}
  \caption{Theoretical trusted-client work vs.\ local
    $\mat{W}\mat{X}$ work ($l=1$; X as a vector) with
    protocol parameterization in \Cref{ex:lwe-lpn-parameterisation}.}
  \label{fig:theoretical-overhead}
\end{figure}

For our proposed protocol, we show the ratio of (1) trusted client protocol overhead to (2) the
cost of the client computing $WX$ locally in \Cref{fig:theoretical-overhead}.
We consider multiplicative operations for (1) and (2).
\section{Protocol error after rotation}
\label{sec:error-post-rot}

\paragraph{Reduction of the infinity norm}
For any fixed vector $\vec{x} \in \reals^{n}$, each entry of
$\mat{R}\vec{x}$ is a normalized sum of $n$ random-sign terms.
By sub-Gaussian concentration~\cite{ailon2006approximate}, 
with high probability,
\begin{equation}
\label{eq:had-inf-norm-bound}
    \norm{\mat{R}\vec{x}}_{\infty}
    \;\lesssim\;
    \norm{\vec{x}}_{2} \cdot \sqrt{\frac{\log n}{n}}
\end{equation}
In contrast, without rotation $\norm{\vec{x}}_{\infty}$ can be as large as
$\norm{\vec{x}}_{2}$ when a single entry dominates.
Therefore the quantization scale after rotation satisfies
\begin{equation}
\label{eq:had-scale-bound}
    s'_{x}
    = \frac{\norm{\mat{R}\vec{x}}_{\infty}}{q_{\max}}
    \;\lesssim\;
    \frac{\norm{\vec{x}}_{2}}{q_{\max}} \cdot \sqrt{\frac{\log n}{n}}
\end{equation}

\paragraph{Float-domain noise variance.}
Consider the dominant error term $\mat{E}_{w}\mat{X}$
from~\cref{eq:approx-error} in the
floating-point domain.
Entry $(j,k)$ of $\mat{E}_{w}\hat{\mat{X}}$ in the integer domain
is the inner product $\sum_{i} (E_{w})_{ji} \hat{x}_{ik}$,
where each $(E_{w})_{ji}$ is an independent Gaussian with variance $\sigma^{2}$.
By~\cref{lem:discrete-noise-inner-product}, the variance of this entry is
$\sigma^{2} \norm{\hat{\vec{x}}_{k}}_{2}^{2}$,
where $\hat{\vec{x}}_{k}$ is the quantized activation column.
Dequantization rescales the integer result by both scales
$s_{w_j} \cdot s_{x_k}$,
giving a full (floating-point) variance of
\begin{equation}
\label{eq:had-var-full}
    \mathrm{Var}(\mat{E}_{w}\mat{X})_{j,k}
    \;=\;
    \sigma^{2} \cdot \norm{\hat{\vec{x}}_{k}}_{2}^{2}
    \cdot s_{w_j}^{2} \cdot s_{x_k}^{2}
\end{equation}
Note that
$\hat{\vec{x}}_{k} = \mathrm{round}(\vec{x}_{k} / s_{x_k})$,
so 
$\norm{\hat{\vec{x}}_{k}}_{2}^{2}
\;\approx\; \norm{\vec{x}_{k}}_{2}^{2}/s_{x_k}^{2}$.
Substituting into~\cref{eq:had-var-full},
\begin{equation*}
    \mathrm{Var}(\mat{E}_{w}\mat{X})_{j,k}
    \;\approx\;
    \sigma^{2} \cdot \norm{\vec{x}_{k}}_{2}^{2} \cdot s_{w_j}^{2}.
\end{equation*}
Substituting
$s_{w_j} = \norm{\vec{w}_j}_\infty / q_{\max}$, the dequantized
variance simplifies to
\begin{equation}
\label{eq:had-var-simplified}
    \mathrm{Var}(\mat{E}_{w}\mat{X})_{j,k}
    \;=\;
    \sigma^{2} \cdot \norm{\vec{x}_{k}}_{2}^{2} \cdot
    \frac{\norm{\vec{w}_{j}}_{\infty}^{2}}{q_{\max}^{2}}
\end{equation}
With Hadamard rotation, substituting~\cref{eq:had-inf-norm-bound}
into~\cref{eq:had-var-simplified}, this becomes
\begin{equation}
\label{eq:had-var-rotated}
    \mathrm{Var}(\mat{E}_{w}\mat{R}\mat{X})_{j,k}
    \;\lesssim\;
    \sigma^{2} \cdot  
    \norm{\vec{x}_{k}}_{2}^{2}
    \cdot
    \frac{\norm{\vec{w}_{j}}_{2}^{2}}{q_{\max}^{2}}
    \cdot
    \frac{\log n}{n}
\end{equation}
By symmetry, the noise from $\mat{W}\mat{E}_{x}$ is reduced analogously
via the activation rotation.
The ratio of~\cref{eq:had-var-rotated} to~\cref{eq:had-var-simplified}
gives the variance reduction factor for the $\mat{E}_{w}\mat{X}$ term:
\begin{equation}
\label{eq:had-var-reduction}
    \frac{\mathrm{Var}(\mat{E}_{M}\mat{R}\mat{X})_{j,k}}
         {\mathrm{Var}(\mat{E}_{M}\mat{X})_{j,k}}
    \;\approx\;
    \frac{\norm{\vec{w}_{j}}_{2}^{2} \cdot \log n}
         {\norm{\vec{w}_{j}}_{\infty}^{2} \cdot n}
         = \rho(w_j)^2 \cdot \frac{\log(n)}{n}
\end{equation}
and analogously for $\mat{W}\mat{E}_{x}$.
This ratio equals $\log(n)/n$ in the worst case of a single dominant outlier
($\norm{\vec{w}}_{\infty} \approx \norm{\vec{w}}_{2}$).
For a transformer dimension $n = 4096$,
this yields an approximate $n / \log n \approx 340\times$
reduction in noise variance.

The cross term $\mat{E}_{w}\mat{E}_{x}$
from~\cref{eq:approx-error} is unaffected by the Hadamard
rotation: $\mat{E}_{M}$ and $\mat{E}_{X}$ are sampled in the
integer domain independently of both $\mat{W}$ and $\mat{X}$,
so their product distribution is invariant to the rotation.
By~\cref{cor:cross-term-negligible}, this term remains
negligible relative to the linear error terms both before and after
rotation.
\section{Secure forward-pass complexity}
\label{sec:forward-pass-complexity}
\Cref{tab:fwdpass-complexity} summarises the per-layer and
full-forward-pass complexity for
an $L$-layer transformer with model dimension $d_\textsf{model}$,
intermediate (MLP) dimension
$d_\textsf{ff}\approx 3.5\,d_\textsf{model}$, total sequence
length $s$, and $l$ new tokens per query.
Per-layer GPU work is dominated by the seven weight projections at
$O(d_\textsf{model}^{2}\,l)$, while per-layer trusted-client work is
dominated by per-head attention at
$O(d_\textsf{model}\cdot s\cdot l)$ for any context
$s \geq \log d_\textsf{model}$.

The $L$-layer forward pass and the
prefill ($l = s$) and autoregressive decode ($l = 1$)
cases follow directly. A one-time per-weight Init~W cost
of $O(d_\textsf{model}^{2}\log d_\textsf{model})$ totals
$O(L\,d_\textsf{model}^{2}\log d_\textsf{model})$ across the model
and is amortised across every subsequent forward pass.

\begin{table}
\centering
\footnotesize
\setlength{\tabcolsep}{3pt}
\resizebox{\linewidth}{!}{%
\begin{tabular}{@{}l c c@{}}
\toprule
 & \multicolumn{2}{c}{Asymptotic cost} \\
\cmidrule(lr){2-3}
Component / Regime
 & Client $\cli$
 & GPU $\gpu$ \\
\midrule
\multicolumn{3}{@{}l}{\textit{Per-layer breakdown}} \\
\midrule
7 weight projections ($Q,K,V,O$, gate, up, down)
 & ---
 & $O(d^{2}l)$ \\
Per-MatMul outsourcing overhead ($\times\,7$)
 & $O(d\,l)$
 & --- \\
Fast Walsh--Hadamard rotation (per call)
 & $O(d\,l\log d)$
 & --- \\
Per-head attention (softmax, $QK^{\top}$, $\cdot V$)
 & $O(d\,s\,l)$
 & --- \\
Element-wise non-linearities (RMSNorm, etc.)
 & $O(d\,l)$
 & --- \\
\midrule
\textbf{Per-layer total} ($s \geq \log d$)
 & $\mathbf{O(d\,s\,l)}$
 & $\mathbf{O(d^{2}l)}$ \\
\midrule
\multicolumn{3}{@{}l}{\textit{Full forward pass over $L$ layers}} \\
\midrule
General ($l$ new tokens, seq.\ length $s$)
 & $O(L\,d\,s\,l)$
 & $O(L\,d^{2}l)$ \\
Prefill ($l = s$)
 & $O(L\,d\,s^{2})$
 & $O(L\,d^{2}s)$ \\
Decode ($l = 1$)
 & $O(L\,d\,s)$
 & $O(L\,d^{2})$ \\
Init~W (one-time, amortised)
 & $O(L\,d^{2}\log d)$
 & --- \\
\bottomrule
\end{tabular}%
}
\caption{Asymptotic per-layer and full-forward-pass complexity of
$\Pi_{\textsf{Sec-FwdPass}}$. Notation: $d = d_\textsf{model}$,
$L$ transformer layers, total sequence length $s$, $l$ new tokens
per query. 
}
\label{tab:fwdpass-complexity}
\end{table}

\section{Broken obfuscation security}
\label{sec:obfuscation-security}
We show a formal break in the security reduction of
ArrowCloak~\cite{wang2025game} (USENIX'25). The scheme protects the
privacy of a fine-tuned model $W_{\textsf{vic}}$ derived from a
pre-trained model $W_{\textsf{pre}}$, where the pre-trained model is
explicitly public and static in the adversary's view. The obfuscated
weight is $W_{\textsf{obf}} = W_{\textsf{vic}} \mat{H}$ for a secret
``encryption'' key matrix $\mat{H}$. The authors claim a reduction to
search-LWE hardness and, as part of that reduction, introduce the
following hybrid step (reproducing eq.~12, section~6.2
of~\cite{wang2025game}):
\[
    \mathcal{H}_0:\; Y = W_{\textsf{obf}}\mat{H}^{-1} + (W_{\textsf{pre}} - W_{\textsf{vic}})
    \qquad
    \mathcal{H}_1:\; Y = W_{\textsf{obf}}\mat{H}^{-1} + \mat{E},
\]
where $\mat{E}$ is a fresh discrete-Gaussian LWE sample.
Computational indistinguishability of $\mathcal{H}_0$ and
$\mathcal{H}_1$ is the essential step in their reduction; the authors
support it with empirical evidence that
$W_{\textsf{pre}} - W_{\textsf{vic}}$ is Gaussian-like.

\paragraph{Trivial distinguisher} However, the two hybrids are
trivially distinguishable because $W_{\textsf{obf}}$ and
$W_{\textsf{pre}}$ are part of the adversary's public view per the
ArrowCloak threat model. Define the deterministic distinguisher
$\mathcal{D}(Y) := [\,Y =^{?} W_{\textsf{pre}}\,]$. Then:
\begin{itemize}
    \item In $\mathcal{H}_0$, $Y = W_{\textsf{obf}}\mat{H}^{-1} +
    (W_{\textsf{pre}} - W_{\textsf{vic}}) = W_{\textsf{vic}} +
    W_{\textsf{pre}} - W_{\textsf{vic}} = W_{\textsf{pre}}$
    deterministically, so
    $\Pr[\mathcal{D}(Y) = 1 \mid \mathcal{H}_0] = 1$.
    \item In $\mathcal{H}_1$, $Y = W_{\textsf{vic}} + \mat{E}$ equals
    $W_{\textsf{pre}}$ only when the fresh Gaussian
    $\mat{E}$ lands on the single fixed point
    $W_{\textsf{pre}} - W_{\textsf{vic}}$, which happens with
    probability negligible in $\lambda$. 
\end{itemize}
Thus $\mathcal{D}$ achieves distinguishing advantage
$1 - \mathrm{negl}(\lambda)$, contradicting the indistinguishability
step $\mathcal{H}_0 \overset{c}{\approx} \mathcal{H}_1$ their
reduction relies on.


\ifpreprint
\section{Supplementary deployment figures}
\label{sec:appendix-deployment-figures}

We complement the main body figures with per-query prefill latencies
across different network environments in the remote decisional setting
(\Cref{fig:network-sweep}) for 1/4/7k context sizes.
We show communication volume for a 1k prefill
required for our proposed protocol (\Cref{tab:comm-volume})
applied to 70B models.

\begin{figure}
\centering
\resizebox{\columnwidth}{!}{\begin{tikzpicture}
\begin{semilogyaxis}[
    width=13cm,
    height=8cm,
    title={Qwen2.5-72B latency: sys-prompt setup (1k/4k/7k tokens) vs.\ 35-token runtime query},
    xlabel={Network scenario},
    ylabel={Latency (seconds)},
    xtick={1, 2, 3, 4, 5, 6},
    xticklabels={
        {NVLink\\{\scriptsize 0.001ms/400Gb}},
        {Datacenter\\{\scriptsize 0.05ms/100Gb}},
        {Regional\\{\scriptsize 1ms/100Gb}},
        {WAN\\{\scriptsize 10ms/100Gb}},
        {WAN (far)\\{\scriptsize 20ms/10Gb}},
        {Internet\\{\scriptsize 20ms/1Gb}}},
    x tick label style={align=center},
    xmin=0.4, xmax=6.6,
    ymin=0.8, ymax=4000,
    grid=major,
    legend style={
        at={(0.02,0.98)},
        anchor=north west,
        font=\small,
    },
    every axis plot/.append style={thick, mark size=3pt},
]


\addplot[color=red!30!black, mark=square*, dashed]
    coordinates {
        (1,   7.66)
        (2,  10.65)
        (3,  10.96)
        (4,  13.84)
        (5,  52.75)
        (6, 409.84)
    };
\addlegendentry{Sys 1k tokens}

\addplot[color=red!60!black, mark=square*, dashed]
    coordinates {
        (1,   23.53)
        (2,   35.45)
        (3,   35.75)
        (4,   38.63)
        (5,  184.67)
        (6, 1613.04)
    };
\addlegendentry{Sys 4k tokens}

\addplot[color=red!90!black, mark=square*, dashed]
    coordinates {
        (1,   35.73)
        (2,   56.58)
        (3,   56.88)
        (4,   59.76)
        (5,  312.93)
        (6, 2812.57)
    };
\addlegendentry{Sys 7k tokens}

\addplot[color=blue!70!black, mark=triangle*]
    coordinates {
        (1,  1.17)
        (2,  1.29)
        (3,  1.60)
        (4,  4.48)
        (5,  8.93)
        (6, 21.42)
    };
\addlegendentry{Rt (per query, 35 tokens)}

\end{semilogyaxis}
\end{tikzpicture}}
\caption{Network sensitivity: system-prompt prefill at
         1k\,/\,4k\,/\,7k tokens (dashed) and per-query 
         runtime prefill (solid) at 35-token 
         across various simulated network scenarios.
         While the (one-time) $7$k-token sys prefill
         approaches $45$ minutes on a $1$\,Gbps Internet link,
         the per-query runtime stays close to $20$\,s.}
\label{fig:network-sweep}
\end{figure}

\begin{table}
\centering
\caption{Communication volume per phase (1k-token system prompt).
         init W is a one-time model setup cost; sys-prefill runs once and is amortized over subsequent runtime prompts.
         rt-prefill is a query of 35 tokens.
         ``Send'' is $\cli$ to $\gpu$; ``Recv'' is $\gpu$ to $\cli$.}
\label{tab:comm-volume}
\resizebox{\columnwidth}{!}{%
\begin{tabular}{l rr rr}
\toprule
 & \multicolumn{2}{c}{\textbf{Llama-3-70B}} & \multicolumn{2}{c}{\textbf{Qwen2.5-72B}} \\
 & Send & Recv & Send & Recv \\
\midrule
init\_W (one-time)         & 278.0\,GB & ---       & 285.9\,GB & ---       \\
sys\_prefill (one-time)    & 17.1\,GB  & 31.6\,GB  & 17.4\,GB  & 32.2\,GB  \\
rt\_prefill (per query)    & 0.60\,GB  & 1.11\,GB  & 0.61\,GB  & 1.13\,GB  \\
\bottomrule
\end{tabular}%
}
\end{table}

\FloatBarrier

\else
\section{Supplementary AI data-center figures}
\label{sec:appendix-deployment-figures}

We complement the main body figures with per-query prefill latencies
across different network environments in the remote decisional setting
(\Cref{fig:network-sweep}) for 1/4/7k context sizes.
We show communication volume for a 1k prefill
required for our proposed protocol (\Cref{tab:comm-volume})
applied to 70B models.

\FloatBarrier

\fi
\section{Outsourcing remote classifications}
\label{sec:application-remote}

%
\pgfplotsset{
    remotebarpostccs/.style={
        width=8.0cm, height=4.8cm,
        ybar stacked, bar width=14pt,
        clip=false,
        ymin=0, ymax=110,
        ytick={0,20,40,60,80,100},
        grid=major, ymajorgrids=true, xmajorgrids=false,
        symbolic x coords={initW,{Sys 1k},{Sys 4k},{Sys 7k},Rt},
        xtick=data,
        xticklabels={{init\_W},{Sys 1k},{Sys 4k},{Sys 7k},{Rt (Sys 1/4/7k)}},
        x tick label style={rotate=45, anchor=east, font=\scriptsize},
        enlarge x limits=0.12,
        ylabel={Latency (seconds)},
        every axis plot/.append style={thick},
        legend style={
            at={(0.5, 1.02)}, anchor=south,
            legend columns=3,
            font=\scriptsize, draw=none, fill=none,
            column sep=4pt, row sep=-2pt, inner sep=2pt,
        },
        legend cell align=left,
    }
}

\begin{figure}
\centering
\begin{tikzpicture}
\begin{axis}[remotebarpostccs]
\addplot[fill=blue!70, draw=blue!80!black] coordinates {
    (initW,56.41) ({Sys 1k},2.42) ({Sys 4k},8.14) ({Sys 7k},13.21) (Rt,0.46)
};
\addlegendentry{Trusted GPU}
\addplot[fill=orange!70, draw=orange!80!black] coordinates {
    (initW,0) ({Sys 1k},5.39) ({Sys 4k},15.75) ({Sys 7k},22.94) (Rt,0.63)
};
\addlegendentry{Untrusted GPU}
\addplot[fill=gray!40, draw=gray!70!black, postaction={pattern=north east lines, pattern color=gray!70!black}] coordinates {
    (initW,30.53) ({Sys 1k},8.97) ({Sys 4k},21.11) ({Sys 7k},29.76) (Rt,3.61)
};
\addlegendentry{Comm (10\,ms / 100\,Gbps)}

\node[rotate=90, anchor=west, font=\scriptsize\bfseries, inner sep=1pt] at (axis cs:initW,86.94) {87};
\node[rotate=90, anchor=west, font=\scriptsize\bfseries, inner sep=1pt] at (axis cs:{Sys 1k},16.77) {17};
\node[rotate=90, anchor=west, font=\scriptsize\bfseries, inner sep=1pt] at (axis cs:{Sys 4k},45.00) {45};
\node[rotate=90, anchor=west, font=\scriptsize\bfseries, inner sep=1pt] at (axis cs:{Sys 7k},65.90) {66};
\node[rotate=90, anchor=west, font=\scriptsize\bfseries, inner sep=1pt] at (axis cs:Rt,4.70) {4.7};
\end{axis}
\end{tikzpicture}
\caption{\revised{System prefill and runtime latency under simulated WAN (10\,ms RTT, 100\,Gbps,
         $G=3$ untrusted GPUs) for Qwen2.5-72B.}}
\label{fig:decode-latency-70b-qwen-postccs}
\end{figure}

%
\begin{figure}
\centering
\begin{tikzpicture}
\begin{axis}[remotebarpostccs]
\addplot[fill=blue!70, draw=blue!80!black] coordinates {
    (initW,55.02) ({Sys 1k},2.11) ({Sys 4k},7.53) ({Sys 7k},12.28) (Rt,0.44)
};
\addlegendentry{Trusted GPU}
\addplot[fill=orange!70, draw=orange!80!black] coordinates {
    (initW,0) ({Sys 1k},4.85) ({Sys 4k},14.86) ({Sys 7k},22.20) (Rt,0.62)
};
\addlegendentry{Untrusted GPU}
\addplot[fill=gray!40, draw=gray!70!black, postaction={pattern=north east lines, pattern color=gray!70!black}] coordinates {
    (initW,29.83) ({Sys 1k},8.57) ({Sys 4k},20.16) ({Sys 7k},28.57) (Rt,3.59)
};
\addlegendentry{Comm (10\,ms / 100\,Gbps)}

\node[rotate=90, anchor=west, font=\scriptsize\bfseries, inner sep=1pt] at (axis cs:initW,84.84) {85};
\node[rotate=90, anchor=west, font=\scriptsize\bfseries, inner sep=1pt] at (axis cs:{Sys 1k},15.54) {16};
\node[rotate=90, anchor=west, font=\scriptsize\bfseries, inner sep=1pt] at (axis cs:{Sys 4k},42.55) {43};
\node[rotate=90, anchor=west, font=\scriptsize\bfseries, inner sep=1pt] at (axis cs:{Sys 7k},63.06) {63};
\node[rotate=90, anchor=west, font=\scriptsize\bfseries, inner sep=1pt] at (axis cs:Rt,4.65) {4.7};
\end{axis}
\end{tikzpicture}
\caption{\revised{System prefill and runtime latency under simulated WAN (10\,ms RTT, 100\,Gbps, $G=3$ untrusted GPUs) for LLaMA-3-70B.}}
\label{fig:decode-latency-70b-llama3-postccs}
\end{figure}

We also consider a case-study of an application that is not latency critical,
and thus permits interactive outsourcing of computation across public networks,
between a local trusted client and a remote cloud accelerator
(\Cref{fig:remote-topology}).
In addition to emulating the ring integer arithmetic
(as mentioned at the beginning of this section)
to measure client and remote GPU computation runtimes,
we simulate network latency induced by our protocol for different
network settings (\Cref{fig:decode-latency-70b-qwen-postccs,fig:decode-latency-70b-llama3-postccs}).

\begin{figure}[h]
\centering
\begin{tikzpicture}[
    >=stealth,
    gpu/.style={draw, rounded corners=1.5pt, minimum width=1.45cm,
        minimum height=0.38cm, font=\tiny, align=center, inner sep=1pt},
    trustedgpu/.style={gpu, fill=cTrusted!20, draw=cTrusted!80!black},
    untrustedgpu/.style={gpu, fill=cUntrusted!20, draw=cUntrusted!80!black},
    netlink/.style={<->, semithick, draw=cComm!70!black, dashed},
]

\node[trustedgpu] (T) at (0, 0) {Trusted client};

\node[untrustedgpu] (U) at (4.10, 0) {Cloud GPU};

\draw[netlink] (T.east) -- (U.west);
\node[font=\tiny, gray, fill=white, inner sep=0.8pt, align=center]
    at ($(T.east)!0.5!(U.west) + (0,0.16)$)
    {Public network\\e.g. 10\,ms RTT, 100\,Gbps};
\node[font=\tiny, cComm!90!black, fill=white, inner sep=0.8pt]
    at ($(T.east)!0.5!(U.west) + (0,-0.20)$)
    {1 forward pass / query};

\node[font=\tiny, anchor=east, align=right] (rep) at (-1.55, 0.30)
    {Field report\\$\sim 35$ tok};
\draw[->, thick, gray!70] (rep.east) -- ++(0.55, 0);

\node[font=\tiny, anchor=east, align=right] (dec) at (-1.55,-0.30)
    {Decision\\(severity, action)};
\draw[<-, thick, gray!70] (dec.east) -- ++(0.55, 0);

\begin{scope}[on background layer]
\node[draw=cTrusted!70!black, dashed, rounded corners=2pt,
      fit=(T), inner xsep=4pt, inner ysep=10pt,
      label={[font=\tiny, cTrusted!70!black, anchor=north]north:TCB}]
      (tcb) {};
\end{scope}

\begin{scope}[on background layer]
\node[draw=cComm!70!black, rounded corners=3pt, thin,
      fit=(tcb)(rep)(dec), inner xsep=8pt, inner ysep=8pt,
      label={[font=\tiny, cComm!90!black, anchor=south west]%
             north west:Local site}]
      (loc) {};
\end{scope}

\node[font=\tiny, cComm!90!black, anchor=south] at (4.10, 0.55) {Cloud};

\end{tikzpicture}
\caption{Remote outsourcing topology: a trusted client at the local site
         outsources a single forward pass per query to an untrusted cloud
         GPU over a public network (e.g.\ 10\,ms RTT, 100\,Gbps).}
\label{fig:remote-topology}
\end{figure}

\paragraph{Industrial diagnostics}
Consider an industrial predictive maintenance scenario in which a company
operates a fleet of equipment---pumps, compressors, motors, gearboxes---and
wishes to outsource diagnostic reasoning to a large language model hosted on
an untrusted cloud accelerator.
The \emph{system prompt} encodes the full diagnostic knowledge base:
equipment-specific failure mode signatures, vibration and temperature
thresholds, oil-analysis limits, multi-symptom escalation rules, and
historical case studies.
This domain context is lengthy (1k--7k tokens in our benchmarks) but
changes infrequently---at most when maintenance procedures are revised.
The \emph{runtime prompt}, by contrast, is a short field report
(${\sim}$35 tokens) written by a technician on the factory floor:
a handful of sensor readings and a one-line observation such as
\emph{``Coupling very hot, faint burnt smell from motor end.''}
The model's task is to return a bounded decision, e.g. a severity
classification and recommended action, not an open-ended generation.

We choose this scenario precisely because it is not latency-critical:
a maintenance classification that completes in seconds is well within
operational requirements, unlike interactive chat or real-time control.
Importantly, the task requires only a single forward pass at runtime over a
bounded number of tokens, avoiding the compounding communication
cost of autoregressive decode rounds.
The expensive yet stable system-prompt setup  
is performed once and amortized over many lightweight
runtime queries.

\paragraph{Evaluation results}
\Cref{fig:decode-latency-70b-qwen-postccs,fig:decode-latency-70b-llama3-postccs} show that latency for decisional
inference can remain practical. 
Despite $\approx$ 80 $\times$ 4 communication rounds,
a single decisional inference query can be completed in under
5 seconds over a fast public network (10ms, 100Gbps).
One can observe greatly reduced transfer volumes for the short 
runtime prompt compared to the system prompt,
shown in~\Cref{tab:comm-volume}.
Overall run-time is clearly affected by
communication latency;
still for the short, real-time diagnostic prompt, 
the system prompt \emph{length} does not have a measurable effect. 
The system prompt prefill of 
7k tokens can be completed in about a minute, implying 
that frequent background, prefill updates are permissible.
Simulated runtimes for a wider range of realistic network settings
are detailed in~\Cref{fig:network-sweep}.

\section{Layer-by-layer error accumulation}

We supplement the main body with 
error accumulation metrics 
on final model layers in~\Cref{tab:final-layer-amplitude} and error accumulation (cosine relative $\ell_2$ error) 
across layers for Qwen2.5-32B (\Cref{fig:layer-cosine-sim-qwen32b}), Qwen2.5-72B (\Cref{fig:layer-cosine-sim-qwen72b}), LLaMA3-70B (\Cref{fig:layer-cosine-sim-llama3-70b}), DeepSeek-R1-Distill-LLaMA-70B (\Cref{fig:layer-cosine-sim-deepseek-r1-70b}).

\begin{table}[t]
\centering
\resizebox{\columnwidth}{!}{%
\begin{tabular}{l c c c c c c}
\toprule
 & \multicolumn{2}{c}{Noisy vs.\ ref} & \multicolumn{4}{c}{BF16 reference} \\
\cmidrule(lr){2-3} \cmidrule(lr){4-7}
Step & cos & rel-$\ell_2$ & $\max|\vec{x}|$ & $\sigma_{\vec{x}}$ & $\overline{|\vec{x}|}$ & $\max|\vec{x}|/\sigma_{\vec{x}}$ \\
\midrule
\multicolumn{7}{l}{\textit{Quantization-friendly models}} \\
\midrule
\multicolumn{7}{l}{\emph{Qwen2.5-32B}, protocol noise $\sigma=1.0$} \\
$\ell = 62$            & 0.9991 & 0.0391 & 5480 & 23.4 & 11.98 & 236 \\
$\ell = 63$            & 0.9984 & 0.0532 & 4340 & 24.1 & 13.95 & 177 \\
post-RMSNorm           & 0.9987 & 0.0448 &  173 &  2.86 & 1.37 &  61 \\
\midrule
\multicolumn{7}{l}{\emph{Qwen2.5-72B}, protocol noise $\sigma=1.0$} \\
$\ell = 78$            & 0.9960 & 0.0724 & 1602 & 14.2 & 7.75  & 116 \\
$\ell = 79$            & 0.9941 & 0.0938 & 2850 & 15.7 & 8.90  & 183 \\
post-RMSNorm           & 0.9949 & 0.0812 &  335 &  3.6 & 1.67  &  93 \\
\midrule
\multicolumn{7}{l}{\textit{Quantization-unfriendly models}} \\
\midrule
\multicolumn{7}{l}{\emph{LLaMA-3-70B}, protocol noise $\sigma=1.0$} \\
$\ell = 78$            & 0.9795 & 0.1774 &  233 & 0.66 & 0.46 & 356 \\
$\ell = 79$            & 0.9813 & 0.1771 &   86 & 0.88 & 0.60 &  95 \\
post-RMSNorm           & 0.9822 & 0.1654 &   94 & 2.03 & 1.42 &  46 \\
\midrule
\multicolumn{7}{l}{\emph{DeepSeek-R1-Distill-LLaMA-70B}, protocol noise $\sigma=1.0$} \\
$\ell = 78$            & 0.8289 & 0.5464 &  400 & 0.70 & 0.45 & 572 \\
$\ell = 79$            & 0.8387 & 0.5399 &   60 & 0.77 & 0.55 &  78 \\
post-RMSNorm           & 0.8466 & 0.5098 &   87 & 2.00 & 1.46 &  44 \\
\bottomrule
\end{tabular}%
}
\caption{Reference (BF16, noise-free) residual-stream amplitude and
matched perturbed-pass metrics at the last two transformer
model layers and after the final RMSNorm.}
\label{tab:final-layer-amplitude}
\end{table}

\clearpage
\begin{figure*}
  \centering
  \input{figures/layer_cosine_sim_qwen32b.tex}
  \caption{Per-layer error accumulation: Qwen2.5-32B (64 model layers). For 20 recorded runs, the 1-standard deviation band is shown for selected quantization schemes.}
  \label{fig:layer-cosine-sim-qwen32b}
\end{figure*}

\begin{figure*}
  \centering
  \input{figures/layer_cosine_sim_qwen72b.tex}
  \caption{Per-layer error accumulation: Qwen2.5-72B (80 model layers). }
  \label{fig:layer-cosine-sim-qwen72b}
\end{figure*}

\begin{figure*}
  \centering
  \input{figures/layer_cosine_sim_llama3_70b.tex}
  \caption{Per-layer error accumulation: LLaMA-3-70B (80 layers).
  INT8 baseline collapses at $\ell=79$ and goes off-scale.}
  \label{fig:layer-cosine-sim-llama3-70b}
\end{figure*}

\begin{figure*}
  \centering
  \input{figures/layer_cosine_sim_deepseek_r1_70b.tex}
  \caption{Per-layer error accumulation: DeepSeek-R1-Distill-LLaMA-70B (80 layers).}
  \label{fig:layer-cosine-sim-deepseek-r1-70b}
\end{figure*}

\end{document}
\endinput